\definecolor{wine-stain}{rgb}{0.5,0,0}
\numberwithin{equation}{section}
\definecolor{pistachio}{rgb}{0.58, 0.77, 0.45}
\definecolor{eggshell}{rgb}{0.94, 0.92, 0.84}
\numberwithin{equation}{section}
\newtheorem{theorem}{Theorem}[section]
\newtheorem{proposition}[theorem]{Proposition}
\newtheorem{corollary}[theorem]{Corollary}
\theoremstyle{definition}
\newtheorem{definition}[theorem]{Definition}
\theoremstyle{remark}
\newtheorem*{remark}{Remark}
\newtheorem{example}{Example}
\begin{document}
\newcommand{\M}{\mathcal{M}}
\newcommand{\F}{\mathcal{F}}

\newcommand{\Teich}{\mathcal{T}_{g,N+1}^{(1)}}
\newcommand{\T}{\mathrm{T}}
\newcommand{\corr}{\bf}
\newcommand{\vac}{|0\rangle}
\newcommand{\Ga}{\Gamma}
\newcommand{\new}{\bf}
\newcommand{\define}{\def}
\newcommand{\redefine}{\def}
\newcommand{\Cal}[1]{\mathcal{#1}}
\renewcommand{\frak}[1]{\mathfrak{{#1}}}
\newcommand{\Hom}{\rm{Hom}\,}
\newcommand{\refE}[1]{(\ref{E:#1})}
\newcommand{\refCh}[1]{Chapter~\ref{Ch:#1}}
\newcommand{\refS}[1]{Section~\ref{S:#1}}
\newcommand{\refSS}[1]{Section~\ref{SS:#1}}
\newcommand{\refT}[1]{Theorem~\ref{T:#1}}
\newcommand{\refO}[1]{Observation~\ref{O:#1}}
\newcommand{\refP}[1]{Proposition~\ref{P:#1}}
\newcommand{\refD}[1]{Definition~\ref{D:#1}}
\newcommand{\refC}[1]{Corollary~\ref{C:#1}}
\newcommand{\refL}[1]{Lemma~\ref{L:#1}}
\newcommand{\refEx}[1]{Example~\ref{Ex:#1}}
\newcommand{\R}{\ensuremath{\mathbb{R}}}
\newcommand{\C}{\ensuremath{\mathbb{C}}}
\newcommand{\N}{\ensuremath{\mathbb{N}}}
\newcommand{\Q}{\ensuremath{\mathbb{Q}}}
\renewcommand{\P}{\ensuremath{\mathcal{P}}}
\newcommand{\Z}{\ensuremath{\mathbb{Z}}}
\newcommand{\kv}{{k^{\vee}}}
\renewcommand{\l}{\lambda}
\newcommand{\gb}{\overline{\mathfrak{g}}}
\newcommand{\dt}{\tilde d}     
\newcommand{\hb}{\overline{\mathfrak{h}}}
\newcommand{\g}{\mathfrak{g}}
\newcommand{\h}{\mathfrak{h}}
\newcommand{\gh}{\widehat{\mathfrak{g}}}
\newcommand{\ghN}{\widehat{\mathfrak{g}_{(N)}}}
\newcommand{\gbN}{\overline{\mathfrak{g}_{(N)}}}
\newcommand{\tr}{\mathrm{tr}}
\newcommand{\gln}{\mathfrak{gl}(n)}
\newcommand{\son}{\mathfrak{so}(n)}
\newcommand{\spnn}{\mathfrak{sp}(2n)}
\newcommand{\sln}{\mathfrak{sl}}
\newcommand{\sn}{\mathfrak{s}}
\newcommand{\so}{\mathfrak{so}}
\newcommand{\spn}{\mathfrak{sp}}
\newcommand{\tsp}{\mathfrak{tsp}(2n)}
\newcommand{\gl}{\mathfrak{gl}}
\newcommand{\slnb}{{\overline{\mathfrak{sl}}}}
\newcommand{\snb}{{\overline{\mathfrak{s}}}}
\newcommand{\sob}{{\overline{\mathfrak{so}}}}
\newcommand{\spnb}{{\overline{\mathfrak{sp}}}}
\newcommand{\glb}{{\overline{\mathfrak{gl}}}}
\newcommand{\Hwft}{\mathcal{H}_{F,\tau}}
\newcommand{\Hwftm}{\mathcal{H}_{F,\tau}^{(m)}}

\newcommand{\car}{{\mathfrak{h}}}    
\newcommand{\bor}{{\mathfrak{b}}}    
\newcommand{\nil}{{\mathfrak{n}}}    
\newcommand{\vp}{{\varphi}}
\newcommand{\bh}{\widehat{\mathfrak{b}}}  
\newcommand{\bb}{\overline{\mathfrak{b}}}  
\newcommand{\Vh}{\widehat{\mathcal V}}
\newcommand{\KZ}{Kniz\-hnik-Zamo\-lod\-chi\-kov}
\newcommand{\TUY}{Tsuchia, Ueno  and Yamada}
\newcommand{\KN} {Kri\-che\-ver-Novi\-kov}
\newcommand{\pN}{\ensuremath{(P_1,P_2,\ldots,P_N)}}
\newcommand{\xN}{\ensuremath{(\xi_1,\xi_2,\ldots,\xi_N)}}
\newcommand{\lN}{\ensuremath{(\lambda_1,\lambda_2,\ldots,\lambda_N)}}
\newcommand{\iN}{\ensuremath{1,\ldots, N}}
\newcommand{\iNf}{\ensuremath{1,\ldots, N,\infty}}

\newcommand{\tb}{\tilde \beta}
\newcommand{\tk}{\tilde \varkappa}
\newcommand{\ka}{\kappa}
\renewcommand{\k}{\varkappa}
\newcommand{\ce}{{c}}

\newcommand{\Pif} {P_{\infty}}
\newcommand{\Pinf} {P_{\infty}}
\newcommand{\PN}{\ensuremath{\{P_1,P_2,\ldots,P_N\}}}
\newcommand{\PNi}{\ensuremath{\{P_1,P_2,\ldots,P_N,P_\infty\}}}
\newcommand{\Fln}[1][n]{F_{#1}^\lambda}
\newcommand{\tang}{\mathrm{T}}
\newcommand{\Kl}[1][\lambda]{\can^{#1}}
\newcommand{\A}{\mathcal{A}}
\newcommand{\U}{\mathcal{U}}
\newcommand{\V}{\mathcal{V}}
\newcommand{\W}{\mathcal{W}}
\renewcommand{\O}{\mathcal{O}}
\newcommand{\Ae}{\widehat{\mathcal{A}}}
\newcommand{\Ah}{\widehat{\mathcal{A}}}
\newcommand{\La}{\mathcal{L}}
\newcommand{\Le}{\widehat{\mathcal{L}}}
\newcommand{\Lh}{\widehat{\mathcal{L}}}
\newcommand{\eh}{\widehat{e}}
\newcommand{\Da}{\mathcal{D}}
\newcommand{\kndual}[2]{\langle #1,#2\rangle}
\newcommand{\cins}{\frac 1{2\pi\mathrm{i}}\int_{C_S}}
\newcommand{\cinsl}{\frac 1{24\pi\mathrm{i}}\int_{C_S}}
\newcommand{\cinc}[1]{\frac 1{2\pi\mathrm{i}}\int_{#1}}
\newcommand{\cintl}[1]{\frac 1{24\pi\mathrm{i}}\int_{#1 }}
\newcommand{\w}{\omega}
\newcommand{\ord}{\operatorname{ord}}
\newcommand{\res}{\operatorname{res}}
\newcommand{\nord}[1]{:\mkern-5mu{#1}\mkern-5mu:}
\newcommand{\codim}{\operatorname{codim}}
\newcommand{\ad}{\operatorname{ad}}
\newcommand{\Ad}{\operatorname{Ad}}
\newcommand{\supp}{\operatorname{supp}}

\newcommand{\Fn}[1][\lambda]{\mathcal{F}^{#1}}
\newcommand{\Fl}[1][\lambda]{\mathcal{F}^{#1}}
\renewcommand{\Re}{\mathrm{Re}}

\newcommand{\ha}{H^\alpha}

\define\ldot{\hskip 1pt.\hskip 1pt}
\define\ifft{\qquad\text{if and only if}\qquad}
\define\a{\alpha}
\redefine\d{\delta}
\define\w{\omega}
\define\ep{\epsilon}
\redefine\b{\beta} \redefine\t{\tau} \redefine\i{{\,\mathrm{i}}\,}
\define\ga{\gamma}
\define\cint #1{\frac 1{2\pi\i}\int_{C_{#1}}}
\define\cintta{\frac 1{2\pi\i}\int_{C_{\tau}}}
\define\cintt{\frac 1{2\pi\i}\oint_{C}}
\define\cinttp{\frac 1{2\pi\i}\int_{C_{\tau'}}}
\define\cinto{\frac 1{2\pi\i}\int_{C_{0}}}
\define\cinttt{\frac 1{24\pi\i}\int_C}
\define\cintd{\frac 1{(2\pi \i)^2}\iint\limits_{C_{\tau}\,C_{\tau'}}}
\define\dintd{\frac 1{(2\pi \i)^2}\iint\limits_{C\,C'}}
\define\cintdr{\frac 1{(2\pi \i)^3}\int_{C_{\tau}}\int_{C_{\tau'}}
\int_{C_{\tau''}}}
\define\im{\operatorname{Im}}
\define\re{\operatorname{Re}}
\define\res{\operatorname{res}}
\redefine\deg{\operatornamewithlimits{deg}}
\define\ord{\operatorname{ord}}
\define\rank{\operatorname{rank}}
\define\fpz{\frac {d }{dz}}
\define\dzl{\,{dz}^\l}
\define\pfz#1{\frac {d#1}{dz}}

\define\K{\Cal K}
\define\U{\Cal U}
\redefine\O{\Cal O}
\define\He{\text{\rm H}^1}
\redefine\H{{\mathrm{H}}}
\define\Ho{\text{\rm H}^0}
\define\A{\Cal A}
\define\Do{\Cal D^{1}}
\define\Dh{\widehat{\mathcal{D}}^{1}}
\redefine\L{\Cal L}
\newcommand{\ND}{\ensuremath{\mathcal{N}^D}}
\redefine\D{\Cal D^{1}}
\define\KN {Kri\-che\-ver-Novi\-kov}
\define\Pif {{P_{\infty}}}
\define\Uif {{U_{\infty}}}
\define\Uifs {{U_{\infty}^*}}
\define\KM {Kac-Moody}
\define\Fln{\Cal F^\lambda_n}
\define\gb{\overline{\mathfrak{ g}}}
\define\G{\overline{\mathfrak{ g}}}
\define\Gb{\overline{\mathfrak{ g}}}
\redefine\g{\mathfrak{ g}}
\define\Gh{\widehat{\mathfrak{ g}}}
\define\gh{\widehat{\mathfrak{ g}}}
\define\Ah{\widehat{\Cal A}}
\define\Lh{\widehat{\Cal L}}
\define\Ugh{\Cal U(\Gh)}
\define\Xh{\hat X}
\define\Tld{...}
\define\iN{i=1,\ldots,N}
\define\iNi{i=1,\ldots,N,\infty}
\define\pN{p=1,\ldots,N}
\define\pNi{p=1,\ldots,N,\infty}
\define\de{\delta}

\define\kndual#1#2{\langle #1,#2\rangle}
\define \nord #1{:\mkern-5mu{#1}\mkern-5mu:}
\newcommand{\MgN}{\mathcal{M}_{g,N}} 
\newcommand{\MgNeki}{\mathcal{M}_{g,N+1}^{(k,\infty)}} 
\newcommand{\MgNeei}{\mathcal{M}_{g,N+1}^{(1,\infty)}} 
\newcommand{\MgNekp}{\mathcal{M}_{g,N+1}^{(k,p)}} 
\newcommand{\MgNkp}{\mathcal{M}_{g,N}^{(k,p)}} 
\newcommand{\MgNk}{\mathcal{M}_{g,N}^{(k)}} 
\newcommand{\MgNekpp}{\mathcal{M}_{g,N+1}^{(k,p')}} 
\newcommand{\MgNekkpp}{\mathcal{M}_{g,N+1}^{(k',p')}} 
\newcommand{\MgNezp}{\mathcal{M}_{g,N+1}^{(0,p)}} 
\newcommand{\MgNeep}{\mathcal{M}_{g,N+1}^{(1,p)}} 
\newcommand{\MgNeee}{\mathcal{M}_{g,N+1}^{(1,1)}} 
\newcommand{\MgNeez}{\mathcal{M}_{g,N+1}^{(1,0)}} 
\newcommand{\MgNezz}{\mathcal{M}_{g,N+1}^{(0,0)}} 
\newcommand{\MgNi}{\mathcal{M}_{g,N}^{\infty}} 
\newcommand{\MgNe}{\mathcal{M}_{g,N+1}} 
\newcommand{\MgNep}{\mathcal{M}_{g,N+1}^{(1)}} 
\newcommand{\MgNp}{\mathcal{M}_{g,N}^{(1)}} 
\newcommand{\Mgep}{\mathcal{M}_{g,1}^{(p)}} 
\newcommand{\MegN}{\mathcal{M}_{g,N+1}^{(1)}} 

\define \sinf{{\widehat{\sigma}}_\infty}
\define\Wt{\widetilde{W}}
\define\St{\widetilde{S}}
\newcommand{\SigmaT}{\widetilde{\Sigma}}
\newcommand{\hT}{\widetilde{\frak h}}
\define\Wn{W^{(1)}}
\define\Wtn{\widetilde{W}^{(1)}}
\define\btn{\tilde b^{(1)}}
\define\bt{\tilde b}
\define\bn{b^{(1)}}
\define \ainf{{\frak a}_\infty} 

%
\define\eps{\varepsilon}    
\newcommand{\e}{\varepsilon}
\define\doint{({\frac 1{2\pi\i}})^2\oint\limits _{C_0}
       \oint\limits _{C_0}}                            
\define\noint{ {\frac 1{2\pi\i}} \oint}   
\define \fh{{\frak h}}     
\define \fg{{\frak g}}     
\define \GKN{{\Cal G}}   
\define \gaff{{\hat\frak g}}   
\define\V{\Cal V}
\define \ms{{\Cal M}_{g,N}} 
\define \mse{{\Cal M}_{g,N+1}} 
\define \tOmega{\Tilde\Omega}
\define \tw{\Tilde\omega}
\define \hw{\hat\omega}
\define \s{\sigma}
\define \car{{\frak h}}    
\define \bor{{\frak b}}    
\define \nil{{\frak n}}    
\define \vp{{\varphi}}
\define\bh{\widehat{\frak b}}  
\define\bb{\overline{\frak b}}  
\define\KZ{Knizhnik-Zamolodchikov}
\define\ai{{\alpha(i)}}
\define\ak{{\alpha(k)}}
\define\aj{{\alpha(j)}}
\newcommand{\calF}{{\mathcal F}}
\newcommand{\ferm}{{\mathcal F}^{\infty /2}}
\newcommand{\Aut}{\operatorname{Aut}}
\newcommand{\End}{\operatorname{End}}
\newcommand{\novoe}{{\color[rgb]{1,0,0}\bf (Íîâîå)}}
\newcommand{\staroe}{{\color[rgb]{0,0,1}\bf (Ñòàðîå)}}
\newcommand{\red}{\color[rgb]{1,0,0}}
\newcommand{\blue}{\color[rgb]{0,0,1}}
\newcommand{\viol}{\color[rgb]{1,0,1}}

\newcommand{\laxgl}{\overline{\mathfrak{gl}}}
\newcommand{\laxsl}{\overline{\mathfrak{sl}}}
\newcommand{\laxso}{\overline{\mathfrak{so}}}
\newcommand{\laxsp}{\overline{\mathfrak{sp}}}
\newcommand{\laxs}{\overline{\mathfrak{s}}}
\newcommand{\laxg}{\overline{\frak g}}
\newcommand{\bgl}{\laxgl(n)}
\newcommand{\tX}{\widetilde{X}}
\newcommand{\tY}{\widetilde{Y}}
\newcommand{\tZ}{\widetilde{Z}}
\newcommand{\GL}{\operatorname{GL}}
\newcommand{\PGL}{\operatorname{PGL}}
\newcommand{\SL}{\operatorname{SL}}
\newcommand{\Sp}{\operatorname{Sp}}
\newcommand{\SO}{\operatorname{SO}}
\newcommand{\SU}{\operatorname{SU}}
\newcommand{\pdeg}{\text{par-}deg}
\newcommand{\gr}{\operatorname{gr}}


\title[]{Hitchin systems: some recent advances}
\author[O.K.Sheinman]{O.K.Sheinman}
\thanks{This work was performed at the Steklov International Mathematical Center and supported by the Ministry of Science and Higher Education of the Russian Federation (agreement no. 075-15-2022-265)}
\address{Steklov Mathematical Institute of the Russian Academy of Sciences}
\author[Bin Wang]{Bin Wang}
\address{Steklov Mathematical Institute of the Russian Academy of Sciences}
\dedicatory{}
\maketitle
\begin{abstract}
A survey of some recent advances in parabolic Hitchin systems (parabolic Bouville--Narasimhan--Ramanan correspondence, mirror symmetry for parabolic Hitchin systems), and in exact methods of solving the non-parabolic Hitchin systems.
\end{abstract}
\tableofcontents
\section{Introduction}
The integrable systems, this survey is devoted to, are invented by N.Hitchin \cite{Hitchin} and bear his name. They are attractive in their geometric simplicity, and have been studied mainly from geometrical point of view. However methods of solving them are much less developed. In this survey we will touch on both issues, however we don't set ourselves an impossible task of a comprehensive review (several mathematical schools and centers are working on the subject). We focus on two fields of our own research, namely on parabolic Hitchin systems and on exact methods of solving the Hitchin systems. In the introductory \refS{base} we give the necessary definitions for classical Hitchin systems, and briefly describe their main applications.

Parabolic Hitchin systems are invented in the series of works of C.S. Seshadri with coauthors. While original Hitchin systems are related to  compact Riemann surfaces, the parabolic ones correspond to Riemann surfaces with punctures, and some additional structure referred to as a parabolic structure. Parabolic Hitchin systems are the subject of \refS{Parab} of the present paper.

We begin with a general definition of a parabolic structure as a set of filtrations and weights associated with punctures. As a motivation, we present the original approach by Mechta and Seshadri where the parabolic structure is related with a unitary representation of a parabolic subgroup (of the fundamental group of the Riemann surface) at a cusp. Then we define parabolic vector bundles as vector bundles endowed with the parabolic structure, and parabolic Higgs bundles as parabolic bundles endowed with parabolic Higgs fields. The latter are required to satisfy certain nilpotency conditions with respect to the filtrations at punctures. Then the theory of parabolic Hitchin systems is being developed by the same scheme as of original Hitchin systems, with modifications caused by the presence of the parabolic structure, especially the nilpotency conditions.

It is a fundamental problem of the theory of integrable systems to investigate the fibers of the corresponding Lagrangian foliations. For the classic integrable systems these are Liouville tori. In the context of Hitchin systems the fibers are level varieties of the Hitchin map given by evaluation of a full set of independent Hamiltonians on Higgs bundles. Typical fibers of a Hitchin map are proved to be Jacobians of the corresponding spectral curves in the case of structure group $GL(n)$, and to be the Prym varieties (Prymians) of spectral curves in the case of simple classical structure groups.

We discuss here the following two results on the fibers of Hitchin maps: the parabolic BNR correspondence, and the mirror symmetry for parabolic Hitchin systems.

The BNR (Bouville-Narasimhan-Ramanan) correspondence \cite{BNR89} is the correspondence between Higgs bundles in the fibers of Hitchin maps and line bundles on the corresponding spectral curves. Following \cite{BNR89} we give the BNR correspondence for parabolic Hitchin systems which requires a certain accuracy in the approach to spectral curves. In the context of Inverse Spectral Method (\refSS{ISM}) the above  line bundles on spectral curves emerge as the eigenspace bundles of Lax operators. The Baker--Akhieser vector-function defined in \refSS{Spec_tr} is nothing but a section of such line bundle.

Coming to mirror symmetry for Hitchin systems, first notice that geometrically  it  is expressed by Donagi and Pantev theorem \cite{DP12}  stating that the fibers of two Hitchin systems with Langlands dual structure groups are torsors over dual Abelian varieties (see \refT{DonPan} below). It can be proved to be an (Abelian) implication of the homological mirror symmetry in sense of Kontsevich. Here, we keep the concept of topological mirror symmetry interpreted as equality of stringy Hodge numbers, or equivalently stringy E-polynomials (\refSS{TopMirSym}) corresponding to the parabolic Hitchin systems with the same base curve and Langlands dual structure groups (Theorems \ref{conjecture}, \ref{main theorem}).

It is another goal of the present survey to discuss the exact methods of solving the Hitchin systems. It is the subject of \refS{Exact}. The systems with the structure group $GL(n)$ can be explicitly resolved in terms of theta fuctions both by means the inverse spectral method  (Krichever, \cite{Kr_Lax}), and by means the method of separation of variables (A. Gorsky, N. Nekrasov, V. Rubtsov \cite{GNR}, O.Sheinman \cite{Sh_FAN_2019,Sh_ProcStekl_2020}). In frame of separation of variables, we give an explicit theta function formula for solutions which was not published elsewhere (\refSS{theta_sol}). It is pretty clear that the separation of variables technique is applicable also in the case of parabolic Hitchin systems with the structure group $GL(n)$. We hope to revisit this subject in the future.  As we can see, to show the mirror symmetry, it is crucial to interpret parabolic Higgs bundles in the fibers of Hitchin maps as line bundles on the normalization of the corresponding spectral curves. 
It would be also interesting to obtain such an interpretation and comprehend the mirror symmetry for Hitchin systems in terms of separation of variables.

In the case of systems with simple structure groups, no matter parabolic or non-parabolic, obstructions emerge for both methods. It turns out to be that the Baker--Akhieser function of the system has a dynamical pole divisor in this case, which makes the inverse spectral method inapplicable, at least in the currently known form. As for the separation of variables, the method can be developed quite far, up to finding out the action-angle coordinates in many cases. However, any analog of the theta function formula of \refSS{theta_sol} can not be obtained in general due to the peculiarity of the inversion problem on Prymians. We hope that identifying this problem will serve as an incentive for further research in this direction.

\section{Basic results on Hitchin systems}\label{S:base}
\subsection{Hitchin systems: the definition (N.Hitchin \cite{Hitchin})}
Assume, $\Sigma$ is a compact genus $g$ Riemann surface with a conformal structure, $G$ is a complex semisimple or reductive Lie group, $\g = Lie(G)$, $P_0$ is a smooth principal $G$-bundle on~$\Sigma$.

By \emph{holomorphic structure} on $P_0$ we mean a $(0, 1)$-connection, i.e. a differential operator on the sheaf of sections of the bundle $P_0$ locally given as $\bar{\partial} + \w$, where $\w \in \Omega^{0, 1}(\Sigma, \g)$, and, under the action of a gluing function $g$, $\w$ transforms as follows:
\[
\w \to g\w g^{-1} - (\bar{\partial}g)g^{-1}.
\]

Suppose, $\A$ is a space of semistable \cite{Hitchin} holomorphic structures on $P_0$, ${\mathcal G}$ is a group of smooth global gauge transformations. The quotient ${\mathcal N} = \A/{\mathcal G}$ is referred to as the \emph{moduli space of semi-stable holomorphic structures on $P_0$}. We will define a Hitchin system as a dynamical system with ${\mathcal N}$ as a configuration space, and $T^{*}({\mathcal N})$ as a phase space. A point in ${\mathcal N}$ is a principal holomorphic $G$-bundle on $\Sigma$. In the case that $G$ is semisimple we have $\dim{\mathcal N} = \dim\g\cdot (g-1)$. We will consider $G=GL(n)$ as a typical representative of the class of reductive groups. For $G=GL(n)$ we have $\dim{\mathcal N} = n^2\cdot (g-1)+1=\dim\g\cdot (g-1)+1$.

As defined above, the \emph{phase space} of a Hitchin system is $T^{*}({\mathcal N})$. According to Kodaira--{\blue}Spencer theory, $T_P({\mathcal N}) \simeq H^1(\Sigma, \Ad P)$. By Serre duality
\[
  T^*_P({\mathcal N})\simeq H^0(\Sigma, \Ad P\otimes {\mathcal K})
\]
where ${\mathcal K}$ is the canonical class of $\Sigma$, $\Ad P\otimes {\mathcal K}$ is a holomorphic vector bundle with a fiber $\g\otimes\C$. We denote the points of $T^*({\mathcal N})$ by $(P, \Phi)$, where $P \in {\mathcal N}, ~\Phi\in H^0(\Sigma, \Ad P\otimes {\mathcal K})$. Sections of the sheaf $T^*({\mathcal N})$ are called {\it Higgs fields}.

Assume, $\chi_i$ is a homogeneous invariant polynomial on $\g$ of degree $d_i$. It defines a map $\chi_i(P)\colon H^0(\Sigma, \Ad P\otimes {\mathcal K})\to H^0(\Sigma, {\mathcal K}^{\otimes d_i})$ for each $P \in {\mathcal N}$. Let $\Phi$ stay for a Higgs field, then we can define $\chi_i(P, \Phi) = (\chi_i(P))(\Phi(P))$. By that, to each point $(P, \Phi)$ of the phase space we have assigned an element of $H^0(\Sigma, {\mathcal K}^{\otimes d_i})$. Suppose $\{\Omega^i_j\}$ is a basis in $H^0(\Sigma, {\mathcal K}^{\otimes d_i})$, then $\chi_i(P, \Phi) = \sum_j H_{i, j}(P,\Phi)\Omega^i_j$, where $H_{i, j}(P, \Phi)$ are scalar functions on $T^*({\mathcal N})$. For any $i$, $j$ the function $H_{i, j}(P, \Phi)$ is called a {\it Hitchin's Hamiltonian}.
\begin{theorem}[\cite{Hitchin}]\label{T:Hitch}
Whatever be a choice of the base $\{\Omega^i_j\}$, Hitchin Hamiltonians form a complete set of Hamiltonians in involution on $T^*({\mathcal N})$ with respect to the natural symplectic structure.
\end{theorem}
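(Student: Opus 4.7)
The plan is to realize $T^*\mathcal{N}$ as the symplectic reduction of the flat cotangent bundle $T^*\A$ by the gauge group $\mathcal{G}$, lift the Hamiltonians $H_{i,j}$ to functions there that depend only on the ``momentum'' coordinate, and then treat involutivity and completeness separately. I would trivialize $T^*\A\simeq \A\times\Omega^{1,0}(\Sigma,\Ad P_0)$, pair the two factors by $\tr$ and integration, and write a point as $(\bar\partial_A,\Phi)$. The gauge group acts Hamiltonianly with moment map $(\bar\partial_A,\Phi)\mapsto \bar\partial_A\Phi\in\Omega^{1,1}(\Sigma,\Ad P_0)$; its zero locus restricted to the (semi)stable part, modulo $\mathcal{G}$, recovers precisely the pairs $(P,\Phi)$ with $\Phi\in H^0(\Sigma,\Ad P\otimes\mathcal{K})$ featured in the statement, and the canonical symplectic form of $T^*\A$ descends to the natural one on $T^*\mathcal{N}$.

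For involutivity, lift each Hamiltonian to $\widetilde H_{i,j}(\bar\partial_A,\Phi)=\int_\Sigma \chi_i(\Phi)\cdot\Omega^i_j$ on $T^*\A$. Since $\chi_i$ is $\Ad$-invariant these lifts are $\mathcal{G}$-invariant; crucially, they depend on the ``momentum'' $\Phi$ alone, so on the flat cotangent bundle their pairwise Poisson brackets are the tautological $\{p_\alpha,p_\beta\}=0$. Poisson brackets of $\mathcal{G}$-invariant functions descend unchanged through Marsden--Weinstein reduction, whence $\{H_{i,j},H_{i',j'}\}=0$ on $T^*\mathcal{N}$. The argument is manifestly insensitive to the choice of basis $\{\Omega^i_j\}$, which delivers the ``whatever be'' clause.

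Completeness splits into the right count and functional independence. By Riemann--Roch, for $d_i\ge 2$ one has $h^0(\Sigma,\mathcal{K}^{\otimes d_i})=(2d_i-1)(g-1)$, and for semisimple $\g$ the degrees of the fundamental invariants satisfy $\sum_i(2d_i-1)=\dim\g$, hence
\[
\sum_i h^0(\Sigma,\mathcal{K}^{\otimes d_i}) \;=\; \dim\g\cdot(g-1) \;=\; \dim\mathcal{N}.
\]
The parallel bookkeeping for $\gl(n)$, with $h^0(\mathcal{K})=g$ replacing $(2\cdot 1-1)(g-1)=g-1$, contributes the extra $+1$ matching $\dim\mathcal{N}=n^2(g-1)+1$. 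Thus the total number of Hamiltonians already equals $\tfrac12\dim T^*\mathcal{N}$.

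Functional independence I would deduce from dominance of the Hitchin map $h\colon T^*\mathcal{N}\to\bigoplus_i H^0(\Sigma,\mathcal{K}^{\otimes d_i})$: for generic $b$ in the Hitchin base, cut out the spectral cover $\widetilde\Sigma_b\subset\mathrm{Tot}(\mathcal{K})$ by $b$, verify smoothness and the expected genus by adjunction, and realize preimages $(P,\Phi)$ as pushforwards of line bundles on $\widetilde\Sigma_b$. Non-empty generic fibers, combined with the correct count above, force the differentials $dH_{i,j}$ to be linearly independent on an open dense set of $T^*\mathcal{N}$, completing the proof. This BNR-type surjectivity of the Hitchin map is the genuine obstacle; its parabolic analogue is the subject of \refS{Parab}.
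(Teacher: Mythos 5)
The paper does not actually prove this theorem; it is quoted from \cite{Hitchin}, and the only related computation in the survey is the dimension count reappearing in the proof of \refP{corr_Phi}. Your proposal is a correct reconstruction of Hitchin's original argument: involutivity because the Hamiltonians, lifted to the flat symplectic vector space $T^*\A$, are gauge-invariant functions of the ``momentum'' $\Phi$ alone and therefore Poisson-commute before and hence after Marsden--Weinstein reduction; completeness from the Riemann--Roch count $h^0(\mathcal{K}^{\otimes d_i})=(2d_i-1)(g-1)$ together with $\sum_i(2d_i-1)=\dim\g$ (and the extra $+1$ from $h^0(\mathcal{K})=g$ for $\gl(n)$), plus generic functional independence from dominance of the Hitchin map via spectral curves. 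Two small points to tighten. First, the lifted Hamiltonian cannot literally be $\int_\Sigma\chi_i(\Phi)\cdot\Omega^i_j$, since the product of two sections of $\mathcal{K}^{\otimes d_i}$ is a section of $\mathcal{K}^{\otimes 2d_i}$ and not a $(1,1)$-form; the coefficient $H_{i,j}$ is extracted by pairing $\chi_i(\Phi)$ against the dual basis of $H^0(\Sigma,\mathcal{K}^{\otimes d_i})^*\simeq H^1(\Sigma,\mathcal{K}^{\otimes(1-d_i)})$ via Serre duality, represented by $(0,1)$-forms with values in $\mathcal{K}^{\otimes(1-d_i)}$. This does not affect the argument, since any such linear functional still depends on $\Phi$ alone. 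Second, you correctly identify dominance of the Hitchin map as the genuine content of completeness, but it remains a sketch: one must check that for generic points of the base the spectral curve is smooth inside $\mathrm{Tot}(\mathcal{K})$, that pushforwards of line bundles produce \emph{stable} Higgs pairs whose underlying bundle is generically semistable (so that the fiber actually meets $T^*\mathcal{N}$), and that this accounts for each classical group. As stated, ``non-empty generic fibers force independence of the $dH_{i,j}$'' is only valid once non-emptiness is established over a dense open subset of the base, which is exactly the BNR-type input you defer to.
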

The set $d_1,\ldots,d_n$ is a characteristic invariant of the Lie algebra $\g$, hence the space $\mathcal{H}=\bigoplus_{i=1}^nH^0(\Sigma, {\mathcal K}^{\otimes d_i})$ ($n=\rank\g$) is unambiguously defined. It is referred to as the Hitchin base space. Let $\chi=\{\chi_1,\ldots,\chi_n\}$ be a base in the space of $\Ad G$-invariant polynomials on $\g$. Then $\chi$ defines the map $h_\chi :\, T^*(\mathcal{N})\to \mathcal{H}$ called Hitchin map. Though the last depends on the choice of the base $\chi=\{\chi_1,\ldots,\chi_n\}$, its level varieties are again defined unambiguously. They are referred to as Hitchin fibers. It is one of our main goals in this survey, to describe generalizations of the Hitchin map, and give the description of the corresponding Hitchin fibers. It is also a goal to provide effective alternatives to the aforementioned definition of Hitchin systems, along with methods for obtaining exact solutions for them.

\subsection{Classical applications}\label{SS:class_app}

The first, and most known application of Hitchin systems is given by N.Hitchin himself in \cite{Flat_conn}. It is the application to the 2-dimensional Conformal Field Theory (below 2dCFT). By one of the equivalent definitions \cite{FrSh}, a 2dCFT is a projectively flat connection in a holomorphic vector bundle on the moduli space of curves (with punctures in general). Then a problem of constructing such kind of connection emerges. Recall that a connection $\nabla$ is said to be projectively flat if there is a 2-cocycle $\l$ on the Lie algebra of smooth vector fields such that $[\nabla_X,\nabla_Y]=\l(X,Y)\cdot \mathbf{1}$ for any pair of  smooth vector fields $X,Y$, where $\mathbf{1}$ is the identical operator.


Let $M$ be a real compact symplectic manifold with the symplectic form $\w$. By a K\"{a}hler polarization of $M$ we mean a complex structure $I:H^0(TM)\to H^0(TM)$, $I^2=-1$ on the tangent vector bundle, satisfying the following compatibility and positivity conditions with the symplectic structure:
\begin{equation}\label{E:comp}
   \w(X,IY)=\w(Y,IX), \quad \w(X,IX)\  \text{is positive definite}\ (X,Y\in H^0(TM)).
\end{equation}
If the de Rham cohomology class $\frac{1}{2\pi}[\w]$ is integral, then $[\w]$ is the curvature form of a connection on a principal $U(1)$-bundle over $M$, whose first Chern class is $\frac{1}{2\pi}[\w]$. When M is simply-connected, this connection is unique up to gauge equivalence.
Instead of considering the principal $U(1)$-bundle, we consider the associated complex line bundle $L$, and an associated connection $\nabla^L$, both depending on a natural number called level, coming from the character of $U(1)$. Hence to every K\"{a}hler polarization of $M$ we can canonically associate the  finite-dimensional space of globally holomorphic sections of $L$. Then we will obtain a vector bundle over the variety of K\"{a}hler polarizations. In \cite{Flat_conn}, N.Hitchin is looking for a projectively flat connection in that vector bundle.


First, show that a deformation of the K\"{a}hler polarization is given by a symmetric 2-tensor on $M$. Let $I=I_t$ be a path in the variety of K\"{a}hler polarizations. The compatibility conditions \refE{comp} imply that $\w(X,IY)$ is a Riemannian metric on $M$, hence it is representable in the form $\w(X,IY)=(X,\tilde{G}Y)$ where $(X,Y)=\sum_{i}X_iY_i$, $\tilde{G}$ is a positive definite (in particular, symmetric) matrix. We also have $\w(X,Y)=(X,\tilde{G}I^{-1}Y)$.

By differentiation in $t$ of the first compatibility relation we obtain $\w(X,\dot{I}Y)=\w(Y,\dot{I}X)$, i.e. $\w(X,\dot{I}Y)$ is a symmetric bilinear form too, and as such is representable in the form $\w(X,\dot{I}Y)=(X,GY)$ where $G$ is a symmetric matrix also. On the other hand side $\w(X,\dot{I}Y)=(X,\tilde{G}I^{-1}\dot{I}Y)$. It follows that
$\tilde{G}I^{-1}\dot{I}=G$, hence $\dot{I}=(\tilde{G}I^{-1})^{-1}G$.
Since $\tilde{G}I^{-1}$ is the matrix of $\w$ which is assumed to be fixed, $\dot{I}$ is given by the symmetric tensor $G$.

Vice versa, assume we have a multiparametric family of K\"{a}hler polarizations, and the corresponding family of K\"{a}hler manifolds $M_t$, $t=(t_1,\ldots,t_A)$. For every $t$, construct the line bundle $L=L_t$ as above. Consider the vector bundle of the spaces $H^0(L_t)$ over that family. Given a symmetric tensor $G^{ij}$, we can construct the connection in that vector bundle as follows.  First, we locally define the operator $\Delta=\nabla^L_i(G^{ij}\nabla^L_j)$ in the space of holomorphic sections of $L$. As it is proved in \cite{Flat_conn}, for every $a=1,\ldots,A$ there exists locally defined operator $P_a: H^0(L)\to H^0(L)$ with the highest symbol $\Delta$ such that
\begin{equation}\label{E:nabla}
\nabla_a=\partial_a+P_a
\end{equation}
is a globally defined operator $H^0(L)\to H^0(L)$, for all $a=1,\ldots,A$, where $\partial_a=\frac{\partial}{\partial t_a}$.

From now on $M=\mathcal{N}$. We additionally assume the rank and degree of the bundles in $\mathcal{N}$ to be fixed, and, moreover, coprime. Then $\mathcal{N}$ is compact \cite{Nar_Sesh}. These assumptions are not restrictive because $\mathcal{N}$ is just an auxiliary object, while our main goal here is to construct a connection on the moduli space of curves. Observe that $\mathcal{N}$ is canonically symplectic. Indeed, by Narasimhan--Seshadri theorem \cite{Nar_Sesh} $\mathcal{N}\simeq Hom(\pi_1(\Sigma),SU(n))/SU(n)$, and by \cite{AtBot} the last is nothing but the space of flat connections of the form $\partial+\varphi$, where $\varphi$ is a skew-Hermitian matrix valued 1-form on $\Sigma$. Given two such matrix valued 1-forms $\a$, $\b$, now representing tangent vectors to $\mathcal{N}$, we define $\w(\a,\b)=\int\limits_\Sigma \tr(\a\wedge\b)$.

Next, we will specify the family of K\"{a}hler polarizations, and tensor $G^{ij}$ so that $[\nabla_a,\nabla_b]=\l_{ab}\cdot \mathbf{1}$, $\l_{ab}\in\C$.
First, choose a Hodge star operator (a complex structure) $*$ on $\Omega^1(\Sigma)$:
\begin{align*}
  & *: \Omega^1(\Sigma)\to\Omega^1(\Sigma), \\
  & *^2=-1.
\end{align*}
Then the corresponding operator on Higgs fields $I:\Phi\to *\Phi$ (obtained by operating entry by entry on the matrix $\Phi$) is a K\"{a}hler polarization on $M$. A deformation of the complex structure on $\Sigma$ is given by a Beltrami differential $\b_a=\phi_a\partial_z\otimes d\overline{z}$. We define the corresponding symmetric 2-tensor on $T^*{\mathcal N}$ by means the relation
\begin{equation}\label{E:symtens}
   G_a(\Phi,\Phi)=\int_{\Sigma}\tr(\Phi^2)\b_a.
\end{equation}
Now we possess all necessary to define the connection by means of \refE{nabla}.
\begin{theorem}[\cite{Flat_conn}]\label{T:pflat}
Connection $\nabla$ defined by means of the relation \refE{nabla} with the symmetric tensor \refE{symtens} is projectively flat.
\end{theorem}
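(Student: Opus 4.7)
The plan is to compute the curvature
\[
[\nabla_a,\nabla_b] \;=\; (\partial_a P_b - \partial_b P_a) + [P_a,P_b]
\]
as an endomorphism of $H^0(L)$ and show that it equals $\l_{ab}(t)\cdot\mathbf{1}$ for some scalar function $\l_{ab}$. The strategy splits naturally into a principal-symbol calculation, which reduces projective flatness to the involutivity of the Hitchin Hamiltonians provided by \refT{Hitch}, followed by a lower-order analysis that exploits the global geometry of $\mathcal{N}$.

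First I would analyze the principal symbol of $P_a$, which by construction equals $\Delta_a = \nabla^L_i G_a^{ij}\nabla^L_j$. Identifying a cotangent vector to $\mathcal{N}$ at a stable bundle $P$ with a Higgs field $\Phi$, the symbol of $\Delta_a$, viewed as a function on $T^*\mathcal{N}$, is $G_a(\Phi,\Phi) = \int_\Sigma \tr(\Phi^2)\b_a$ by \refE{symtens}. Since $\tr(\Phi^2)$ is the quadratic $\Ad G$-invariant on $\g$ producing an element of $H^0(\Sigma,\mathcal{K}^{\otimes 2})$, pairing against the Beltrami differential $\b_a$ yields a linear combination of the degree-two Hitchin Hamiltonians $H_{2,j}$. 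Consequently the Poisson bracket $\{G_a(\Phi,\Phi),G_b(\Phi,\Phi)\}$ vanishes by \refT{Hitch}, and the standard semiclassical expansion implies that the principal symbol of $[P_a,P_b]$, and hence of the full commutator $[\nabla_a,\nabla_b]$, drops below the naive order-four estimate.

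Next I would show that the remaining lower-order contributions to $[\nabla_a,\nabla_b]$ collapse to a scalar depending only on $t$. This rests on the rigidity properties of the moduli space: under the coprimality assumption on rank and degree, $\mathcal{N}$ is smooth, projective, and admits only constant global holomorphic functions. This vanishing, combined with the requirement that each $\nabla_a$ preserve the finite-dimensional holomorphic subspace $H^0(L)\subset C^\infty(L)$, forces the residual zeroth-order part of the commutator to act as multiplication by a scalar on each $H^0(L_t)$, producing the desired cocycle $\l_{ab}(t)\cdot\mathbf{1}$.

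The main obstacle is the passage from Poisson involutivity at the symbolic level to genuine commutativity of operators up to scalars: symbol calculus controls only the top-order term, so one must verify that the specific sub-principal corrections forced on $P_a$ by the demand that $\nabla_a$ preserve $H^0(L)$ do not introduce a non-scalar obstruction. Hitchin handles this by giving $P_a$ an explicit formula in terms of heat-kernel-type corrections built from the Beltrami data $\b_a$ and the Kähler polarization, and extracting $\l_{ab}$ as an integral local in $\b_a$, $\b_b$ and the first Chern class of $L$. I would follow the same route, thereby confirming projective flatness of $\nabla$ in the sense recalled at the start of the subsection.
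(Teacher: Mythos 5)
Your first half tracks the paper's argument exactly: you decompose the curvature as $(\partial_aP_b-\partial_bP_a)+[P_a,P_b]$, identify the principal symbol of $P_a$ with the pairing of $\tr(\Phi^2)$ against the Beltrami differential $\b_a$ from \refE{symtens}, recognize this as a linear combination of quadratic Hitchin Hamiltonians, and invoke \refT{Hitch} to kill the top-order symbol of the commutator. (A small slip: the commutator of two second-order operators is \emph{a priori} third order, with principal symbol the Poisson bracket of the symbols; it is this third-order symbol that involutivity kills, leaving a second-order operator.)

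The genuine gap is in the second half. After the symbol computation you know only that $[\nabla_a,\nabla_b]$ is a globally defined holomorphic \emph{second-order} operator on $L$, yet you pass directly to ``the residual zeroth-order part acts as a scalar'' with no mechanism for eliminating the order-two and order-one parts. The paper supplies precisely this mechanism: the symbol exact sequence $0\to \mathcal{D}^1(L)\to \mathcal{D}^2(L)\to S^2T\to 0$ and its long exact cohomology sequence, together with the injectivity of the connecting map $\delta\colon H^0(M,S^2T)\to H^1(M,\mathcal{D}^1(L))$ (which forces every global holomorphic second-order operator to be first order) and the vanishing $H^0(M,T)=0$ (which forces it down to order zero); only at that point does compactness of $\mathcal{N}$, guaranteed by the coprimality of rank and degree, reduce the operator to a constant multiple of $\mathbf{1}$. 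Your appeal to ``$\mathcal{N}$ admits only constant global holomorphic functions'' addresses only this final step, and ``$\nabla_a$ preserves $H^0(L)$'' is built into the construction of $P_a$ and constrains nothing about the commutator. The closing suggestion to extract $\l_{ab}$ from heat-kernel-type corrections to $P_a$ is not the route taken here and is left unsubstantiated; as written, the descent from order two to order zero is missing.
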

\begin{proof}[Sketch of proof]
Indeed, $[\nabla_a,\nabla_b]=\partial_aP_b-\partial_bP_a+ [P_a,P_b]$. Here, $\partial_aP_b-\partial_bP_a$ is a second order operator, and $[P_a,P_b]$ is a 3rd order operator whose principal symbol is the Poisson bracket of the symbols of $P_a$ and $P_b$ \cite{GuiSt}. Observe that the tensor $G$ determining the principal symbols is a linear combination of 2nd order Hitchin Hamiltonians, hence by \refT{Hitch} the principal symbols are Poisson commuting. We obtain that $[\nabla_a,\nabla_b]$ is a second order operator.

Now consider the exact sequence of sheaves of differential operators
\[
  0\to \mathcal{D}^1(L)\to \mathcal{D}^2(L)\to S^2T\to 0
\]
(saying that the 2nd order differential operator is a sum of its principal symbol and a 1st order operator), and the exact sequence of cohomology groups
\[
\begin{CD}
 0\to H^0(M,\mathcal{D}^1(L))\to H^0(M,\mathcal{D}^2(L)) \to H^0(M,S^2T) @>{\delta}>> H^1(M,\mathcal{D}^1(L)) \to\ldots\quad .
\end{CD}
\]
Without going into details, we state following \cite{Flat_conn} that $\d$ is an injection, and there are no holomorphic vector fields on $M$. The first implies that $H^0(M,\mathcal{D}^1(L))\to H^0(M,\mathcal{D}^2(L))$ is an epimorphism, i.e. any holomorphic 2nd order operator actually is of the 1st order. The second implies that $H^0(M,T)=0$, hence the last operator is of order 0, i.e. a holomorphic function. Since $M$ is compact, such functions are just constants, hence $[\nabla_a,\nabla_b]=const\cdot \mathbf{1}$.
\end{proof}

\subsection{Specific approach for rank 2 genus 2 Hitchin systems}\label{SS:Previato}
In \cite{Nar_Ram} M.S.Narasim\-han and S.Ramanan showed that the space of semi-stable rank 2 holomorphic vector bundles on a Riemann surface is canonically isomorphic to $\mathbb{P}\Theta_2$. Here, $\Theta_2$ denotes the space of level 2 $\theta$-functions, i.e.
\[
  e^{2(\pi i n\cdot\tau n + 2\pi i n\cdot u)}\theta(u+m+\tau n)= \theta(u)
\]
(where $m,n\in\Z^2$, $\tau$ is the period matrix), and $\mathbb{P}$ stays for the projectivization.

Based on this result, B.van Geemen and E.Previato \cite{Previato}, B.van Geemen and A.J. de Jong \cite{Geemen_Jong} gave a more specific description of the rank 2, genus 2 Hitchin systems. For genus 2 curves, $\Theta_2$ is 4-dimensional, hence $\mathbb{P}\Theta_2\cong \mathbb{P}^3$, thus they were seeking in \cite{Previato} for an integrable system on $T^*\mathbb{P}^3$. The construction of Hamiltonians in \cite{Previato} is as follows.  Let $q\in\mathbb{P}^3$, $q=(x:y:z:t)$, $p\in(\mathbb{P}^3)^*$. Define $\epsilon_i(q)\in(\mathbb{P}^3)^*$ by means the relations
\begin{align*}
 &\epsilon_1 = (y : -x : t : -z),\ \epsilon_3 = (z : t : -x : -y),\ \epsilon_5 = (t : z : -y: -x) \\
 &\epsilon_2 = (y : -x : -t : z),\ \epsilon_4 = (z : -t : -x : y), \ \epsilon_6 = (t : -z : y : -x).
\end{align*}
Let $x_{ij}$ be Klein coordinates (see \cite{Previato} for the definition) of the line in $(\mathbb{P}^3)^*$ coming through the points $\epsilon_i(q),p$. Then
\begin{equation}\label{E:GPHam}
  H_i(p,q)=\sum_{j\ne i}^6\frac{x_{ij}^2}{z_i-z_j}.
\end{equation}
where $w^2=\prod_{j=1}^{6}(z-z_j)$ is the equation of the base curve. Commutativity of the Hamiltonians \refE{GPHam} with respect to the symplectic structure on $T^*\mathbb{P}^3$, as well as explicit expressions for them, are established in \cite{Previato} by means of computer calculations using ``Mathematica". The identification of \refE{GPHam} as Hitchin Hamiltonians has been done later, in \cite{Geemen_Jong}, in course of examination of the Hitchin connection (see \refSS{class_app}) in the genus 2 rank 2 case. Analytically, commutativity of Hamiltonians has been proved later in \cite{Gaw} (see \refSS{Gawed}).

\section{Parabolic and parahoric  Hitchin systems}\label{S:Parab}

In the previous section we have presented constructions of Hitchin systems on compact Riemann surfaces, and of the corresponding flat connections on the moduli spaces of curves without punctures. In the following, we shall define the analogs of those objects for the Riemann surfaces with punctures (at least for the purposes of Conformal Field Theory). The corresponding integrable systems are referred to as parabolic Hitchin systems, for the reason that the corresponding Higgs bundles have a special structure, called parabolic structure, at the punctures of the underlying Riemann surface. A further generalization of parabolic Hitchin systems is referred to as parahoric Hitchin systems. Since our way to treat parabolic Hitchin systems has an algebraic nature, we start working with Hitchin systems over an algebraically closed field $k$. Actually, some results mentioned in the following also work over an arbitrary field (with a mild assumptions in characteristic 2), and such kinds of results are important for the proof of ``topological mirror symmetry".
\subsection{Parabolic Vector Bundles}
In this subsection, we first introduce the concept of parabolic vector bundles.

Let $X$ be a smooth projective curve of genus $g$ over the field $k$ which we assume to be algebraically closed. We fix a finite subset $D\subset X$, which we shall also regard as a reduced effective divisor on $X$. We then require that $2g-2+\deg D>0$
(\footnote{$g=0$ is allowed, and then $\deg D\ge 3$.}).
 We also fix a positive integer $r$ which will be the rank of  vector bundles on $X$.

To define a parabolic structure on a vector bundle we need to specify a quasi-parabolic structure and weights for each $x\in D$. Quasi-parabolic structure consists of a finite sequence $m^\cdot (x)=(m^1(x), m^2(x), \dots, m^{\sigma_x}(x))$ with $\sum_{i=1}^{\sigma_x}m^{i}(x)=r$ for each $x\in D$. We denote the quasi-parabolic structure simply by $P$. Weights are given by a choice of a set of rational numbers $0\leq \alpha_1(x)<\cdots <\alpha_{\sigma_x}(x)<1$(\footnote{These weights can be chosen to be real numbers which provides the same coarse moduli space as any choice of rational weights which are in a sufficiently small neighborhood.}), denoted by $\alpha(x)$, for each $x\in D$. And we denote the set of weights $\{\alpha(x)\}_{x\in D}$ by $\alpha$. The set $(X, D, P, \alpha)$ will be referred to as the parabolic type.

\begin{definition}\label{def:parabolic vb}
	A parabolic vector bundle of type $(X, D, P, \alpha)$ is a rank $r$ vector bundle $\mathcal{E}$ on $X$ which for every  $x\in D$ is endowed with a filtration $\mathcal{E}|_{x}=F^0(x)\supset F^1(x)\supset \cdots \supset F^{\sigma_x}(x)=0$ such that $\dim F^{j-1}(x)/ F^{j}(x)=m^j(x)$ and with the weight $\alpha(x)$.
\end{definition}
\begin{remark}
	The definitions of (quasi-)parabolic vector bundles over a Riemann surface were first introduced by Mehta and Seshadri \cite{MS80}. Then they were  generalized  to varieties parametrized by very general schemes by Yokogawa \cite{Yo93C}.
\end{remark}

We now define the \emph{parabolic degree (or $\alpha$-degree)} of $\mathcal{E}$ to be
\[\pdeg(\mathcal{E}):=\deg(E)+\sum_{x\in D}\sum_{j=1}^{\sigma_x}\alpha_{j}(x)m^j(x).\]

\begin{definition}
	A parabolic vector bundle $\mathcal{E}$ is said to be stable(resp. semistable),  if for every subbundle $F\subsetneq E$, we have
	\[
	\frac{\pdeg(\mathcal{F})}{r (\mathcal{F})}<\frac{\pdeg(\mathcal{E})}{r}\ (\text{resp.}\leq),
	\]
	where the parabolic structure on $\mathcal{F}$ is  inherited from $\mathcal{E}$.
\end{definition}
\begin{definition}
	Given a semistable parabolic vector bundle $\mathcal{E}$, there is a filtration, called Harder-Narasimhan filtration:
	\[
	\mathcal{E}=\mathcal{E}_{\ell}\supsetneq\mathcal{E}_{\ell-1}\supsetneq\cdots\supsetneq\mathcal{E}_0=0,
	\]
	 such that for each $1\le i\le\ell$, $\mathcal{E}_{i}/\mathcal{E}_{i-1}$ is stable.  We define $\gr\mathcal{E}=\oplus_{i=1}^{\ell}\mathcal{E}_{i}/\mathcal{E}_{i-1}$. Though Harder-Narasimhan filtration may not be unique, the associated $\gr(\mathcal{E})$ only depends on $\mathcal{E}$. Two semistable parabolic vector bundles $\mathcal{E},\mathcal{E}'$ are said to be equivalent if $\gr\mathcal{E}\cong\gr(\mathcal{E}')$
 \end{definition}
With the help of geometric invariant theory, one can construct the coarse moduli space of semistable parabolic vector bundles of \emph{parabolic degree 0}(\footnote{The coarse moduli spaces can be constructed for general parabolic degrees. Here for simplicity and for describing a correspondence  with unitary local systems, we only deal with parabolic degree 0.}) which we denote as $\mathcal{N}_{P,\alpha}$ (see \cite{Yo93C} where much more general cases have been considered).

In general, $\mathcal{N}_{P,\alpha}$ is a normal projective variety which parametrizes the equivalence classes of semistable parabolic vector bundles of parabolic type $(P,\alpha)$ on $X$. For a generic choice of weights $\alpha$, semistable parabolic bundles are parabolically stable. In this case, the moduli space $\mathcal{N}_{P,\alpha}$ is a smooth projective variety. In the rest of the paper, we always assume $\alpha$ to be generic and we omit $\alpha$ in the index for simplicity.
\subsubsection{Relations with Unitary Representations}\label{subsubsec:motivation from unitary rep}In the following, we  explain the motivation of introducing parabolic vector bundles, following Mehta and Seshadri\cite{MS80}.

Let $\mathbb{H}$ be the upper half plane, and $\Gamma$ be a discrete subgroup of $\Aut(\mathbb{H})=\PGL_{2}(\R)$, acting freely on $\mathbb{H}$ and $\mathbb{H}/\Gamma$ has finite volume. We denote by $\mathbb{H}^{+}$, the union of $\mathbb{H}$ and parabolic cusps (\footnote{I.e., points whose stabilizer in $\Gamma$ is cyclic.}). We denote $X=\mathbb{H}^+/\Gamma$, which is a compact Riemann surface, i.e., it is a compactification of $X^{\circ}=\mathbb{H}/\Gamma$. We denote the natural quotient map by $p:\mathbb{H}\rightarrow X^{\circ}$.

Let $\rho:\Sigma\rightarrow \SU(E)$ be a unitary representation of dimension $r$. Then $\mathbb{H}\times E$ admits a natural $\Gamma$-action:
\[
\gamma: (z,e)\mapsto (\gamma z, \sigma(\gamma) e).
\]
The quotient, $\mathbb{H}\times_{\Gamma} E$, can be treated as a vector bundle on $X^{\circ}=\mathbb{H}/\Gamma$ which we denote by $\mathcal{E}^{\circ}$.

For a cusp $x\in X\setminus X^{\circ}$, we denote its stabilizer by $\Gamma_{x}$ which is a cyclic subgroup of $\Gamma$ consisting of unipotent elements. If $x$ is the image of $\infty\in\mathbb{H}^{+}$, for $\delta>0$, we can choose neighborhoods $U_{\delta}(\infty):=\{r+is|s\ge \delta>0 \}$ of $\infty$ which are $\Gamma_x$-invariant. Then $\{x\}\cup \{U_{\delta}/\Gamma_{x}\}$ forms a base of $x$ on $X$.
	 Since the $\Gamma$-action on cusps $\mathbb{H}^+\setminus\mathbb{H}$ might not be transitive, not all cusps $x\in X\setminus X^{\circ}$ are images of $\infty$. However, for any cusp $y\in \mathbb{H}^+\setminus\mathbb{H}$, we can find $h\in\PGL_{2}(\mathbb{R})$ such that $h\infty=y$ which can also be used to identify their neighborhoods. Hence, we can still define such kind of neighborhoods $U_{\delta}(y)$ for all $y\in\mathbb{H}^+\setminus\mathbb{H}$. We denote $U_{\delta}/\Gamma_{x}$ by $U_{\delta}(x)$ and $\{x\}\cup \{U_{\delta}/\Gamma_{x}\}$ by $D_{\delta}(x)$.
\begin{definition}\label{def:sections}
	A holomorphic morphism $F:\mathbb{H}\rightarrow\mathbb{H}\times E$ given by $F(z)=(z,f(z))$ is said to be bounded and  $\Gamma$-equivariant if:
	\begin{itemize}
		\item $\rho(\gamma)f(z)=f(\gamma z)$ for all $z\in\mathbb{H}$,
		\item For all cusps $x$, $f$ is bounded in $U_{\delta}(y)$ for $\delta\gg 0$ where $x$ is the image of $y\in \mathbb{H}^+\setminus \mathbb{H}$.
	\end{itemize}
\end{definition}

\begin{remark}
	In \cite[Definition 1.1]{MS80}, Mehta and Seshadri call them $\Gamma$-invariant sections in $\mathbb{H}^{+}$.
\end{remark}
It is quite straightforward to see that there is a local version of the boundedness near any cusp. Now let us look at $\Gamma$-equivariant $f=(f_1(z),\ldots, f_r(z))$ near the cusp $x$. Without loss of generality, we may assume that $x$ is the image of $\infty$ and for simplicity, we assume $\Gamma_{x}$ is generated by
$
\gamma_{x}:z\mapsto z+1,
$
and choose a base $\{e_i\}_{i=1}^{r}$ of $E$ such that:
\[
\rho(\gamma_{x})=\begin{pmatrix}
	\exp^{2\pi\sqrt{-1} \alpha_1} & & & \\
	& 	\exp^{2\pi\sqrt{-1} \alpha_2}&&\\
	&& \dots&&\\
	&&&	\exp^{2\pi\sqrt{-1} \alpha_r}
\end{pmatrix}
\]
where $0\le \alpha_1\le\alpha_2\le\ldots \le\alpha_r< 1$. Then $\Gamma$-equivariance implies that:
\[
f_j(z+1)=\exp^{2\pi\sqrt{-1} \alpha_j}f_j(z).
\]
In particular, we may write
\[
f_j(z)=\exp^{2\pi\sqrt{-1} \alpha_j z}g_j(\tau),
\]
here $\tau=\exp^{2\pi\sqrt{-1} z}$ can be treated as a local coordinate at the cusp!
Hence the boundedness of $F$ amounts to saying that $g$ is holomorphic at $x\in X$.
%
Moreover, we can see that bounded sections near $\infty$:
\[
z\mapsto \exp^{2\pi\sqrt{-1}\alpha_{j}z}e_j, \; 1\le j\le r
\]
provide a trivialization of $H\times_{\Gamma}E$ in the neighborhood $U_{\delta}(x)$. Then we can extend the vector bundle $\mathbb{H}\times_{\Gamma} E$ on $X^{\circ}$ to a vector bundle on $X$ denoted by $\mathcal{E}$.

%
Let $\mathcal{E}_1,\mathcal{E}_2$ be two vector bundles on $X$ coming from two unitary representations $\rho_i:\Gamma\rightarrow \GL_{r_i}(E_i), i=1,2$.

\begin{definition}
	We say a holomorphic morphism
    \begin{align*}
    	\phi:&\mathbb{H}\times E_1\rightarrow \mathbb{H}\times E_2\\
    	&(z,e)\rightarrow (z, \phi(z)e)
    \end{align*}
	is bounded and $\Gamma$-equivariant if:
	\begin{itemize}
		\item $\phi(\gamma z)(\rho_1(\gamma)e)=\rho_2(\gamma)(\phi(z)e)$,
		\item each term of $\phi(z)$ is bounded in the neighborhoods $U_{\delta}(x)$ for all cusps $x$ and all $\delta\gg 0$.
	\end{itemize}
\end{definition}
Now for $E_1, E_2$, we choose bases $\{e_{\ell}\}_{\ell\le r_1},\{d_{m}\}_{m\le r_2}$ such that
\[
\rho_1(\gamma_{x})=\begin{pmatrix}
	\exp^{2\pi\sqrt{-1} a_1} & & \\
	& \dots&\\
	&&	\exp^{2\pi\sqrt{-1} a_{r_1}}
	\end{pmatrix},
	\rho_2(\gamma_{x})=\begin{pmatrix}
		\exp^{2\pi\sqrt{-1} b_1} & &  \\
		& \dots&\\
		&&	\exp^{2\pi\sqrt{-1} b_{r_2}}
	\end{pmatrix}.
\]
We put $\phi(z)e_{\ell}=\sum _{m}\phi_{\ell m}(z)d_m$. Then $\Gamma$-equivariance implies that:
\[
\phi_{\ell m}(z+1)=\exp^{2\pi\sqrt{-1}(b_{m}-a_{\ell})}\phi_{\ell m}(z).
\]
Or in other words:
\[
\phi_{\ell m}(z)=\exp^{2\pi\sqrt{-1}(b_{m}-a_{\ell})z}g_{\ell m}(\tau).
\]
Then the boundedness requires that:
\[
g_{\ell m}(0)=0, \;\text{if}\; b_{m}<a_{\ell}.
\]

We now look at the special case that $\phi$ is an automorphism of $E$. We now rewrite $\{a_{\ell}\}_{\ell\le r}$ as:
\begin{align}\label{eq: multiplicity of eigenvalues}
	&a_{1}=a_{2}=\ldots=a_{m^{1}(x)}\\\nonumber
	&a_{m^1(x)+1}=\ldots =a_{m^{1}(x)+m^{2}(x)}\\\nonumber
	&\cdots\\\nonumber
	&a_{\sum_{j=1}^{\sigma_x-1}m^{j}}=\ldots=a_{r}
\end{align}
Now consider a filtration $F^{\cdot}$ on $\mathcal{E}_{x}$ defined as follows:
\begin{align*}
	&F^{0}\mathcal{E}_x=\mathcal{E}_p\\
	&F^{1}\mathcal{E}_x=\text{subspaces generated  by} e_{m^1(x)+1},e_{m^1(x)+2},\ldots, e_{n}\\
	&F^{2}\mathcal{E}_x=\text{subspaces generated  by} e_{m^1(x)+m^2(x)+1},e_{m^1(x)+m^2(x)+2},\ldots, e_{r}\\
	&\ldots\\
	&F^{\sigma_x-1}\mathcal{E}_x=\text{subspaces generated  by} e_{\sum_{j=1}^{\sigma_x-1}m^{j}(x)+1},e_{\sum_{j=1}^{\sigma_x-1}m^{j}(x)+2},\ldots, e_{r}\\
\end{align*}
And for any bounded $\Gamma$-equivariant morphism $\phi$, we have that:
\[
g(0)(F^{j}\mathcal{E}_x)\subset F^{j}\mathcal{E}_x,
\]
where $g=(g_{\ell m})$. This motivates the definition of filtrations at cusps (marked points) and also the definition of parabolic endomorphisms.

We choose the weights at $x$ to be $0\le \alpha_{1}(x)<\alpha_{2}(x)<\cdots<\alpha_{\sigma_x}(x)<1$ where following \eqref{eq: multiplicity of eigenvalues},
\[\alpha_{1}(x):=a_{1}=a_{2}=\ldots=a_{m^{1}(x)},\] and so on.
To summarize:
\begin{theorem}\cite[Corollary 1.10, Theorem 4.1]{MS80}\label{thm:parabolic correspondence}
	Let $X=\mathbb{H}^{+}/\Gamma$ as before. And for each $x\in X\setminus X^{\circ}$, we choose $\{m^{i}(x)\}$ and weights $0\le \alpha_{1}(x)<\alpha_{2}(x)<\ldots \alpha_{\sigma_x}(x)<1$ as above. Then
	\begin{itemize}
		\item the parabolic degree $\pdeg(\mathcal{E})=0$.
		\item the coarse moduli space $\mathcal{N}_{P,\alpha}$ is homeomorphic to the space of the equivalence classes of unitary representations of $\Gamma$ with the image of the generator of the stabilizer $\Gamma_p$  being conjugate to the diagonal matrix $(\exp^{2\pi\sqrt{-1} a_1,\ldots,\exp^{2\pi\sqrt{-1} a_r}})$ for each $x\in X\setminus X^{\circ}$.
		\item stable bundles correspond to irreducible representations.
	\end{itemize} 	
\end{theorem}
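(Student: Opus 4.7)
I would organize the argument into three parts corresponding to the three bullets, with the bulk of the work lying in part two.

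First I would establish the parabolic degree computation $\pdeg(\mathcal{E})=0$, which is local at each cusp. The discussion preceding the theorem shows that bounded $\Gamma$-equivariant sections near a cusp $x$ take the form $z\mapsto \exp(2\pi\sqrt{-1}\alpha_j z)e_j$, so the transition from the (multivalued) flat unitary trivialization on $X^{\circ}$ to the holomorphic trivialization of $\mathcal{E}$ on $U_{\delta}(x)$ is a diagonal matrix with entries $\exp(2\pi\sqrt{-1}\alpha_j(x)z)$, each eigenvalue $\alpha_j(x)$ repeated $m^j(x)$ times. The induced Hermitian metric is flat on $X^{\circ}$, and a Chern--Weil/residue computation at the punctures gives
\[
\deg(\mathcal{E})=-\sum_{x\in D}\sum_{j=1}^{\sigma_x}\alpha_j(x)\,m^j(x),
\]
which yields $\pdeg(\mathcal{E})=0$ directly from the definition.

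The second (and principal) part is the homeomorphism, which I would attack by the Mehta--Seshadri strategy of reducing to the original Narasimhan--Seshadri theorem via a ramified Galois cover. Let $N$ be a common denominator of all the weights $\alpha_j(x)$, and pick a connected Galois cover $\pi\colon\widetilde{X}\to X$ of group $G$, ramified only over $D$ with ramification index $N$ at each point over $D$. The key ingredient is Biswas's equivalence of categories between parabolic bundles on $(X,D,P,\alpha)$ with weights of denominator dividing $N$ and $G$-equivariant vector bundles on $\widetilde{X}$; under this equivalence the filtration plus weights at $x\in D$ encode the $G$-action on the fiber of $\pi^{*}\mathcal{E}$ over a preimage of $x$. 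Parabolic (semi)stability of parabolic slope $0$ transforms into ordinary $G$-equivariant (semi)stability of degree $0$, and the equivalence relation via $\gr$ matches on both sides. Then the Narasimhan--Seshadri theorem on $\widetilde{X}$ yields a homeomorphism between semistable $G$-equivariant bundles of degree zero and $G$-equivariant unitary representations of $\pi_{1}(\widetilde{X})$; these are in turn identified with unitary representations of $\Gamma$ (the orbifold fundamental group of $X$ with cone angles $2\pi/N$ at $D$) whose local monodromy at each cusp lies in the prescribed conjugacy class. Composing these bijections gives the set-theoretic homeomorphism, and topologically the GIT topology on $\mathcal{N}_{P,\alpha}$ matches the quotient of the compact-open topology on $\operatorname{Hom}(\Gamma,\operatorname{SU}(E))$ because the construction $\rho\mapsto\mathcal{E}_{\rho}$ is continuous in local trivializations and both sides are Hausdorff and the representation variety is compact (since $\operatorname{SU}(E)$ is).

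The third bullet, stable $\leftrightarrow$ irreducible, follows from tracking subobjects through the equivalence. A $\rho$-invariant subspace of $E$ produces a flat sub-bundle of $\mathcal{E}^{\circ}$ which extends to a parabolic sub-bundle $\mathcal{F}\subset\mathcal{E}$ whose filtration and weights at each cusp are inherited from those of $\mathcal{E}$; by the degree computation of the first part, $\pdeg(\mathcal{F})=0=\pdeg(\mathcal{E})$, contradicting parabolic stability. Conversely, a parabolic sub-bundle of the same parabolic slope corresponds, via the cover, to a $G$-equivariant sub-bundle of the same slope, which by Narasimhan--Seshadri applied $G$-equivariantly produces a $\rho$-invariant subspace; Schur's lemma then translates irreducibility on the representation side into the absence of proper sub-bundles of equal parabolic slope.

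The hardest step I expect is the equivariant--parabolic dictionary in the second paragraph: verifying that stability, slope and the equivalence relation $\gr\mathcal{E}\cong\gr\mathcal{E}'$ translate exactly, and that the local monodromy at cusps descends correctly from $\pi_{1}(\widetilde{X})$ to $\Gamma$. Once this bookkeeping is in place, the rest reduces to citing Narasimhan--Seshadri on $\widetilde{X}$ and assembling continuity arguments.
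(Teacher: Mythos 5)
The paper does not actually prove this statement: it is quoted from Mehta--Seshadri \cite[Cor.~1.10, Thm.~4.1]{MS80}, and the text preceding it only supplies the local dictionary at a cusp (bounded $\Gamma$-equivariant sections $z\mapsto e^{2\pi\sqrt{-1}\alpha_j z}e_j$ give the extension of $\mathcal{E}^{\circ}$ across the puncture, and bounded equivariant morphisms give the filtration and the notion of parabolic endomorphism). Your proposal is therefore a genuinely different route from the source being cited: Mehta and Seshadri argue directly, extending the Narasimhan--Seshadri continuity/degeneration method to the punctured setting, whereas you reduce to the compact case via a ramified Galois cover and the parabolic-bundle/equivariant-bundle equivalence. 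That equivalence is due to Biswas (and postdates \cite{MS80} by many years), so your argument is best viewed as a modern alternative proof rather than a reconstruction of the original. What your route buys is a clean reduction to the classical Narasimhan--Seshadri theorem and an essentially formal treatment of stability and of the $\gr$-equivalence; what it costs is the extra bookkeeping you yourself identify, plus two points you should make explicit.

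First, the covering trick only works for weights with a common denominator $N$. Under the conventions of this survey (rational weights, with the footnote that real weights give the same coarse moduli space) this is harmless, but the theorem as stated in \cite{MS80} allows arbitrary real $a_j$ coming from the eigenvalues of $\rho(\gamma_x)$, and for those your reduction needs a separate perturbation argument. Second, be careful that $\Gamma$ here is a torsion-free Fuchsian group whose cusp stabilizers are generated by \emph{parabolic} elements, while the deck/orbifold group of your cover $\widetilde{X}\to X$ has \emph{elliptic} cone points over $D$; these are not the same group. The identification of the two representation varieties goes through because $\Gamma\cong\pi_1(X^{\circ})$ (the action on $\mathbb{H}$ is free), a unitary matrix is semisimple, and hence a representation sending the loop around $x$ to a matrix with eigenvalues $e^{2\pi\sqrt{-1}a_j}$, $a_j\in\frac{1}{N}\mathbb{Z}$, automatically kills the $N$-th power of that loop and factors through the orbifold group. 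With those two points supplied, your outline (degree computation by the residue of the diagonal gauge change at each cusp, equivariant Narasimhan--Seshadri on $\widetilde{X}$, and the sub-object chase for stable~$\leftrightarrow$~irreducible) is a sound and complete plan.
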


\subsection{Moduli of Parabolic Higgs Bundles}
In this subsection, we first talk about Yokogawa's work on the construction of the coarse moduli spaces of (semi-)stable parabolic pairs on smooth projective varieties.
\begin{definition}\label{def:parabolic Higgs bundle}
	A parabolic Higgs bundle is a pair $(\mathcal{E}, \theta)$ where $\mathcal{E}$ is a parabolic vector bundle as above and $\theta$, called a parabolic Higgs field, is an $\mathcal{O}_{X}$-homomorphism
	$$\theta : \mathcal{E}\to \mathcal{E}\otimes_{\mathcal{O}_X} \omega_{X}(D)$$
	 with the property that it takes each $F^j(x)$ to $F^{j+1}(x)\otimes_{\mathcal{O}_{X}}\omega_{X}(D)|_{x}$. By $\omega_X(D)$, we mean the space of meromorphic differentials having no poles outside D. In particular, a parabolic Higgs field is nilpotent at marked points.
\end{definition}
An endomorphism of the parabolic bundle $\mathcal{E}$ is a vector bundle endomorphism of $E$ which preserves the filtrations $F^\cdot(x)$. We call this a \emph{strongly parabolic endomorphism} if it takes $F^i(x)$ to $F^{i+1}(x)$ for all $x\in D$ and $i$. We denote the subspaces of $\End_{\mathcal{O}_X}(E)$ defined by these properties as
$$
ParEnd(\mathcal{E}) \text{  resp.\  } SParEnd(\mathcal{E}).
$$
Similarly we can define the sheaf of parabolic endomorphisms and that of strongly parabolic endomorphisms, denoted by $\mathcal{P} ar\mathcal{E} nd(\mathcal{E})$ and  $\mathcal{S}\mathcal{P} ar\mathcal{E} nd(\mathcal{E})$ respectively. Hence, a parabolic Higgs field $\theta$ lies in $SParEnd(\mathcal{E})\otimes\omega_{X}(D)$.

\begin{remark}  Unfortunately conventions regarding parabolic Higgs bundles vary in the literature.
	For example, a weakly parabolic Higgs bundle in \cite{SWW22} is referred to as a ``parabolic Higgs
	bundle" in \cite{LM10}, a strongly parabolic Higgs bundle in \cite{SWW22} is referred to as a ``parabolic Higgs bundle"
	in \cite{BR94} and \cite{SS95}. What was worse is that the two kinds of parabolic Higgs bundles have very different moduli spaces and Hitchin systems. cf \cite{Yo93C} and \cite{LM10}. The moduli spaces of the strong ones are symplectic while the moduli of the weak ones are Poisson.
	
\end{remark}	
 \emph{In this survey, for simplicity, a parabolic Higgs bundle is always assumed to be strongly parabolic.} In analogy with the non-parabolic case, we can define (semi-)stability of parabolic Higgs bundles.
\begin{definition}
	A parabolic Higgs bundle $(\mathcal{E},\theta)$ is said to be stable(resp. semistable),  if for every $\theta$-invariant subbundle $F\subsetneq E$, we have
	\[
	\frac{\pdeg(\mathcal{F})}{r(\mathcal{F})}<\frac{\pdeg(\mathcal{E})}{r}\ (\text{resp.}\leq),
	\]
	where the parabolic structure on $\mathcal{F}$ is  inherited from $\mathcal{E}$.
\end{definition}

Similarly to the construction of the coarse moduli space of semistable parabolic vector bundles, one can construct the coarse moduli space of semistable parabolic Higgs bundles which we denote as $\mathcal{M}_{P,\alpha}$ (see \cite{Yo93C} for example). When the weights $\alpha$ are chosen generically, semistable parabolic Higgs bundles are parabolically stable and in this case, the moduli space $\mathcal{M}_{P,\alpha}$ is a smooth quasi-projective variety. In the rest of this paper, we always assume $\alpha$ to be generic and we omit $\alpha$ from the index for simplicity.
%
\subsubsection{Non-abelian Hodge theorem} In Section \ref{subsubsec:motivation from unitary rep}, following Mehta-Seshadri, we explain the motivation of introducing (stable) parabolic vector bundles and state the correspondence between stable parabolic vector bundles and unitary representations of the fundamental group of the puncture Riemann surface, (see Theorem \ref{thm:parabolic correspondence}). All of these results can be seen as a parabolic analogue of Narasimhan-Seshadri's results on unitary local systems and stable bundles on Riemann surfaces. Now we talk about Simpson's non-abelian Hodge theorem for parabolic Higgs bundles \cite{Sim90}, which can be treated as an analogue of Mehta-Seshadri's results in the setting of Higgs bundles. We follow the nice and concise description by Mellit \cite[\S 7.6]{Mellit20}.
Let $X$ be a Riemann surface, and $D$ is a reduced divisor on $X$. We put $X^{\circ}=X\setminus D$. We start from an irreducible representation:
\[
\rho: \pi_{1}(X^{\circ})\rightarrow\GL_r(\mathbb{C}).
\]
 (Notice that the representation here is no longer unitary. Hence the monodromy need not to be semisimple.) For each $x\in D$, we denote \emph{eigenvalues} of the monodromy $\operatorname{C}_x$ around $x$ by
\[
\{\exp^{-2\pi\sqrt{-1}a_i(x)+4\pi b_i(x)}\}_{1\le i\le r}.
\]
where the multiplicity of $\exp^{-2\pi\sqrt{-1}a_i(x)+4\pi b_i(x)}$ is $m^i(x)$. Be careful, $\operatorname{C}_x$ need not to be diagonal. We here only provide notations of its eigenvalues. Similar as in Section \ref{subsubsec:motivation from unitary rep}, for each $x\in D$, we put the weights (which are used to define stability of weakly parabolic Higgs bundles) to be:
\[
\alpha_1(x):=a_1(x)=a_2(x)=\ldots= a_{m^1(x)}(x),
\]
and so on. Notice that, in the following theorem, \emph{we only require that $\theta: F^{j}(x)\mapsto F^{j}(x)\otimes\omega_{X}(D)|_{x}$, i.e, $(\mathcal{E},\theta)$ is a weakly parabolic Higgs bundle.}
\begin{theorem}
 There is a bijection between
	\begin{itemize}
		\item irreducible representations of $\pi_{1}(X^{\circ})$ with monodormies $\operatorname{C}_x$ as above.
		\item stable weakly parabolic Higgs bundles $(\mathcal{E},\theta)$ satisfying that:
		\begin{enumerate}
			\item parabolic degree $\pdeg(\mathcal{E})=0$
			\item  $\theta_x$ acts on $F^{i-1}(x)/F^{i}(x)$ with eigenvalues $\sqrt{-1}b_j(x)$ and for every j
			the Jordan block corresponding to the
			eigenvalue $\sqrt{-1}b_j(x)$ has the same size
			as the Jordan block of the monodromy $C_x$ corresponding
			to the eigenvalue $\exp^{-2\pi\sqrt{-1}a_j(x)+4\pi b_j(x)}$.
			  Here $\sum_{1}^{i-1}m^{\ell}(x)<j\le \sum_{1}^{i}m^{\ell}(x)$. In particular, this implies that $\dim F^{i-1}(x)/F^{i}(x)=m^{i}(x)$.
		\end{enumerate}		
	\end{itemize}
\end{theorem}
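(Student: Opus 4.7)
The plan is to follow Simpson's non-abelian Hodge strategy for the punctured surface $X^\circ$, meeting the two categories (representations on one side, Higgs bundles on the other) at the intermediate notion of a \emph{tame harmonic bundle}. Concretely, I would decompose the claimed bijection as the composition
\[
\{\text{irred.\ reps of }\pi_1(X^\circ)\}\;\longleftrightarrow\;\{\text{tame harmonic bundles on }X^\circ\}\;\longleftrightarrow\;\{\text{stable weakly par.\ Higgs bundles with $\pdeg=0$}\}.
\]
The first bijection is the parabolic/tame version of the Corlette--Donaldson theorem (existence and uniqueness of a harmonic metric on an irreducible flat bundle), and the second is the parabolic Hitchin--Kobayashi correspondence (existence and uniqueness of a Hermitian--Einstein metric solving Hitchin's self-duality equations). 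The final step is then a purely local calculation at each puncture that matches the monodromy eigenvalue $\exp(-2\pi\sqrt{-1}a_j(x)+4\pi b_j(x))$ with the parabolic weight $\alpha_i(x)$ and the Higgs residue eigenvalue $\sqrt{-1}b_j(x)$.

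Going from a stable weakly parabolic Higgs bundle $(\mathcal{E},\theta)$ with $\pdeg(\mathcal{E})=0$, I would first extend the analytic theory of Hitchin's equations to the punctured surface, allowing the Hermitian metric $h$ to blow up at $D$ with prescribed parabolic growth rates dictated by the weights $\alpha_i(x)$. Using the tameness hypothesis (which in the weakly parabolic case amounts to the pole of $\theta$ being at worst simple), one then produces a harmonic metric solving $F_h+[\theta,\theta^{*_h}]=0$. The flat connection
\[
D \;=\; \nabla_h+\theta+\theta^{*_h},
\]
restricted to $X^\circ$, then gives a local system, whose monodromy yields the desired representation $\rho$; irreducibility of $\rho$ translates into parabolic stability because any $\theta$-invariant subbundle destabilizing the pair would give a $D$-flat sub-local system.

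The second direction reverses this via the Corlette--Donaldson machinery adapted to the tame setting: given an irreducible $\rho$ with the prescribed monodromy eigenvalues, existence of an equivariant harmonic map from the universal cover of $X^\circ$ to the symmetric space $\GL_r(\mathbb{C})/U(r)$, with controlled growth at the cusps, gives a harmonic metric. Decomposing $D=D'+D''$ into the $(1,0)$ and $(0,1)$ parts and further splitting into the unitary connection and the self-adjoint part extracts $(\bar\partial_{\mathcal E},\theta)$, and the asymptotic analysis equips $\mathcal E$ with its extension across $D$ as well as the parabolic filtration.

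The core local step — and, I expect, the main obstacle in the writeup — is the precise dictionary at a puncture. Picking a local coordinate $\tau$ with $\tau=\exp(2\pi\sqrt{-1}z)$ as in Section~\ref{subsubsec:motivation from unitary rep}, one writes the tame harmonic bundle in a frame in which
\[
\theta \;=\; \frac{B(x)}{\tau}\,d\tau+\text{holomorphic},\qquad h \;\sim\; \mathrm{diag}\!\bigl(|\tau|^{2\alpha_1(x)},\ldots,|\tau|^{2\alpha_r(x)}\bigr),
\]
so that the formal model for $D$ near $x$ becomes an elementary meromorphic connection whose monodromy is computed explicitly as $\exp(-2\pi\sqrt{-1}(A(x)+\sqrt{-1}(B(x)+B(x)^{*_h}))$, where $A(x)$ is the diagonal matrix of the parabolic weights and $B(x)$ has eigenvalues $\sqrt{-1}b_j(x)$. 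Matching real and imaginary parts of the eigenvalues of the resulting monodromy with $\exp(-2\pi\sqrt{-1}a_j(x)+4\pi b_j(x))$ gives both the weight identification $\alpha_i(x)=a_\ell(x)$ (for $\ell$ in the range $\sum_{1}^{i-1}m^\ell(x)<\ell\le\sum_{1}^{i}m^\ell(x)$) and the Higgs-residue eigenvalue $\sqrt{-1}b_j(x)$; the Jordan-block matching is then forced because the unipotent part of the monodromy, the nilpotent part of $B(x)$, and the action of $\theta_x$ on $\mathrm{gr}^i\mathcal E_x$ are all computed by the same formal normal form. The hard analytic input is the existence of the harmonic metric with exactly these prescribed asymptotics — this is where Simpson's theorem on tame harmonic bundles (and the refinements by Biquard, Jost--Zuo, and Mochizuki) is used as a black box.
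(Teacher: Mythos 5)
The paper does not actually prove this theorem: it is quoted from Simpson's \emph{Harmonic bundles on noncompact curves} \cite{Sim90}, following Mellit's exposition \cite{Mellit20}, so the only meaningful comparison is with that source --- and your outline (tame harmonic bundles as the intermediary, Corlette--Donaldson in one direction, the parabolic Hitchin--Kobayashi correspondence in the other, plus the local weight/residue/monodromy dictionary at each puncture) is precisely Simpson's argument. The one detail to repair in a careful writeup is the local monodromy formula: the Higgs-field contribution to the monodromy around a puncture comes from the anti-self-adjoint combination $B-B^{*_h}$ (so that an eigenvalue $\sqrt{-1}\,b$ of $B$ produces the factor $\exp(4\pi b)$ appearing in the statement), not from $B+B^{*_h}$ as written, which would contribute nothing for a purely imaginary eigenvalue.
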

Now let us return to (strongly) parabolic Higgs bundles (see Definition \ref{def:parabolic Higgs bundle}).
\begin{corollary}
	A stable (strongly) parabolic Higgs bundle $(\mathcal{E},\theta)$ uniquely defines an irreducible representation, whose monodromy $\operatorname{C}_x$ is conjugate to a diagonal matrix $$(\exp^{-2\pi\sqrt{-1}\alpha_{1}(x)},\ldots, \exp^{-2\pi\sqrt{-1}\alpha_{\sigma_x}(x)}),$$ where each $\exp^{-2\pi\sqrt{-1}\alpha_{j}(x)}$ has multiplicity $m^{j}(x)$ for $1\le j\le \sigma_x$ and $x\in D$.
\end{corollary}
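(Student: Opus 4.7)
The plan is to derive this corollary as a direct specialization of the preceding non-abelian Hodge theorem for weakly parabolic Higgs bundles, exploiting the extra strictness of the strong parabolic condition at each puncture. First I would observe that any strongly parabolic Higgs bundle in the sense of Definition \ref{def:parabolic Higgs bundle} is \emph{a fortiori} weakly parabolic, since $\theta(F^j(x))\subset F^{j+1}(x)\otimes\omega_X(D)|_x$ implies $\theta(F^j(x))\subset F^j(x)\otimes\omega_X(D)|_x$. Stability and irreducibility are preserved, and the parabolic-degree-zero assumption is built into the moduli space under consideration, so the preceding theorem applies and produces an irreducible representation $\rho$ of $\pi_1(X^\circ)$ whose local monodromies $C_x$ at each $x\in D$ are described by the numbers $a_j(x),b_j(x)$ and the associated Jordan-block data.

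Next I would extract the consequences of strict nilpotency at the punctures. Since $\theta_x$ takes $F^{j-1}(x)$ into $F^j(x)\otimes\omega_X(D)|_x$, the endomorphism it induces on the associated graded piece $F^{j-1}(x)/F^j(x)$ is identically zero. The theorem identifies the eigenvalues of this induced action with $\sqrt{-1}\,b_j(x)$, so one concludes $b_j(x)=0$ for every $x\in D$ and every $j$. Because the induced endomorphism is not merely nilpotent but literally the zero map on a space of dimension $m^i(x)$, its Jordan decomposition consists of $m^i(x)$ blocks of size one.

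Invoking the Jordan-block matching in the theorem, the corresponding Jordan blocks of $C_x$ attached to the eigenvalues $\exp^{-2\pi\sqrt{-1}a_j(x)+4\pi b_j(x)}=\exp^{-2\pi\sqrt{-1}a_j(x)}$ are likewise all of size one, so $C_x$ is diagonalizable. Collecting together the indices $j$ for which $a_j(x)$ takes the common value $\alpha_i(x)$, which by the convention \eqref{eq: multiplicity of eigenvalues} form a range of length $m^i(x)$, one recovers precisely the diagonal form $(\exp^{-2\pi\sqrt{-1}\alpha_1(x)},\ldots,\exp^{-2\pi\sqrt{-1}\alpha_{\sigma_x}(x)})$ with $\exp^{-2\pi\sqrt{-1}\alpha_i(x)}$ occurring with multiplicity $m^i(x)$. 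The only step that demands a little care is the verification that the ``Jordan block'' data in the theorem really refers to the action induced by $\theta_x$ on the associated graded piece, rather than to some finer invariant of $\theta_x$ on the full fibre $\mathcal{E}_x$; once that is pinned down, the corollary reduces to the bookkeeping just outlined on top of the non-abelian Hodge correspondence, with no further analytic input required.
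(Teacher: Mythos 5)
Your argument is correct and is essentially the paper's own reasoning, which appears in the remark immediately following the corollary: the strong parabolicity forces $\theta_x$ to act as zero on each graded piece $F^{i-1}(x)/F^{i}(x)$, so all eigenvalues $\sqrt{-1}\,b_j(x)$ vanish and all Jordan blocks have size one, whence the Jordan-block matching of the non-abelian Hodge theorem makes $C_x$ diagonalizable with eigenvalues $\exp^{-2\pi\sqrt{-1}\alpha_j(x)}$ of multiplicity $m^j(x)$. Your extra care about whether the Jordan data refers to the induced action on the associated graded is a reasonable point, but it is resolved exactly as you suspect and requires no further input.
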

\begin{remark}
	In the previous theorem, what we consider are actually weakly parabolic Higgs bundles, and in this corollary, we again focus on our parabolic Higgs bundles, which in fact is strongly parabolic. In particular, $\theta_x$ acts as 0 on $F^{i-1}(x)/F^{i}(x)$. This amounts to saying that the Jordan normal form of $\operatorname{C}_x$ is actually the diagonal matrix $\exp^{-2\pi\sqrt{-1}\alpha_{i}(x)}\cdot Id$ of size $m^{i}(x)$.
\end{remark}
%
\subsection{Parabolic Hitchin Maps and Fibers}
In this subsection, we define parabolic Hitchin maps and focus on their  generic fibers. These are parabolic analogues of classical ones. The main difference with non-parabolic Hitchin maps is that generic spectral curves are singular and the singularities of generic spectral curves play an important role in the geometry of the parabolic Hitchin systems.

Yokogawa \cite[Page 495]{Yo93C} (see also \cite{BK18} and \cite[Section 3]{SWW22}) defined parabolic Hitchin maps as a restriction of the characteristic polynomial map which we explain now.

First, we can define a projective morphism (we call it as the characteristic polynomial map) from the moduli space (and even moduli stack) of parabolic Higgs bundles to the affine space
$\prod_{i=1}^r \mathbf{H}^0(X,(\omega_X(D))^{\otimes i})$ and pointwisely given by the characteristic polynomial of the parabolic Higgs field ${\theta}$ as
\[
\text{char}_P:\mathcal{M}_P\to \mathbf{H}_D:= \prod_{i=1}^r \mathbf{H}^0(X,(\omega_X(D))^{\otimes i}),\ \ (\mathcal{E},\theta)\mapsto (a_1(\theta),\cdots, a_r(\theta)),
\]
where $a_i=\operatorname{Tr}(\wedge^r\theta)$, in particular, they are the coefficients of the characteristic polynomial.

By the calculation in \cite{BK18}, the image of $
\text{char}_P$ lies in a subspace $\mathcal{H}_P$ which is defined as follows (see also \cite[Subsection 3.1]{SWW22}.

For each $x\in D$, we reorder $(m^1(x), \cdots, m^{\sigma_x}(x))$ to be $n_1(x)\geq n_2(x)\geq \cdots \geq n_{\sigma_x}$. So $\{n_i(x)\}$ is a partition of $r$ and we use $\mu_j(x)$ to denote the dual partition of it, which means that $\mu_{j}(x)=\#\{\ell:n_{\ell}(x)\geq j, 1\leq\ell\leq \sigma_x\}$. Now we assign a level function $j\rightarrow \gamma_{j}(x), 1\leq j\leq r$, such that $\gamma_{j}=l$ if and only if $$\sum_{t\leq l-1}\mu_{t}(x)< j\leq\sum_{t\leq l}\mu_{t}(x).$$

\begin{proposition}[\cite{SWW22} Theorem 4]
	The image of $\text{char}_P:\mathcal{M}_P\rightarrow \mathbf{H}_D$ lies in the following subspace:
	\[
	\mathcal{H}_{P}:=\prod_{j=1}^r\mathbf{H}^{0}\Big(X,\omega_X^{\otimes j}\otimes\mathcal{O}_X\big(\sum_{x\in D}(j-\gamma_{j}(x))\cdot x\big) \Big)\subset \mathbf{H}_D,
	\]
	which we will call the $\GL_r$-\emph{parabolic Hitchin base}, and we will call $h_P: \mathcal{M}_P\rightarrow \mathcal{H}_P$ the $\GL_r$-\emph{parabolic Hitchin map}.
\end{proposition}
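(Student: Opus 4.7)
The statement is local at each $x\in D$, so I fix such an $x$. Pick a uniformizer $t$ at $x$ and trivialize $\mathcal{E}$ in a neighborhood of $x$ so that a basis $e_1,\ldots,e_r$ is grouped into consecutive blocks of sizes $m^1(x),\ldots,m^{\sigma_x}(x)$ compatible with the filtration $F^\bullet(x)$. Then $\omega_X(D)$ is locally generated by $dt/t$, and writing $\theta=T(t)\,(dt/t)$ with $T(t)=T_0+tT_1+t^2T_2+\cdots$, the strongly parabolic condition $\theta(F^j(x))\subset F^{j+1}(x)\otimes\omega_X(D)|_x$ is precisely the statement that $T_0$ is strictly block lower triangular for the partition $(m^1(x),\ldots,m^{\sigma_x}(x))$. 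Since $a_j=c_j(T(t))\,t^{-j}(dt)^j$ with $c_j=\operatorname{Tr}(\wedge^j\cdot)$, the required inclusion $a_j\in H^0(X,\omega_X^{\otimes j}(\sum_{x\in D}(j-\gamma_j(x))x))$ reduces to the purely local vanishing inequality
\[
\operatorname{ord}_t\,c_j(T(t))\ \geq\ \gamma_j(x).
\]

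\textbf{Orbit structure of $T_0$.} Such $T_0$ sits in the nilradical of the parabolic subgroup of $\GL_r$ whose Levi is $\prod_i\GL(m^i(x))$. By the Richardson--Spaltenstein--Kraft--Procesi theorem, this nilradical is the closure of a single (Richardson) orbit whose elements have Jordan type equal to the partition $\mu(x)$ dual to the sorted partition $n(x)$, and every $T_0$ in the nilradical has Jordan type $\lambda\leq\mu(x)$ in the dominance order. I would cite this as the key structural input on $T_0$.

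\textbf{Newton polygon and semicontinuity.} For $T_0$ of Jordan type $\lambda$ with a generic analytic perturbation, classical perturbation theory for nilpotent matrices organizes the roots of $\det(yI-T(t))$ into branches: each part $\lambda_\ell$ yields $\lambda_\ell$ roots of $t$-valuation $1/\lambda_\ell$, and the corresponding irreducible factor $P_\ell(y,t)\in k((t))[y]$ of $y$-degree $\lambda_\ell$ has all non-leading coefficients of $t$-valuation exactly $1$ for generic higher order data. Expanding $c_j(T(t))$ over the factorization $\det(yI-T(t))=\prod_\ell P_\ell(y,t)$ as a sum of products of these coefficients yields
\[
\operatorname{ord}_t\,c_j(T(t))\ =\ \min_{\substack{(k_\ell):\,\sum_\ell k_\ell=j,\\ 0\le k_\ell\le\lambda_\ell}}\,\#\{\ell:k_\ell>0\},
\]
and greedy partitioning shows this minimum equals $\gamma_j$ computed from $\lambda$. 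Because $\lambda\leq\mu(x)$ in the dominance order, $\gamma_j(\lambda)\ge\gamma_j(\mu(x))=\gamma_j(x)$. Lower semicontinuity of $\operatorname{ord}_t c_j(T(t))$ in $(T_0,T_1,T_2,\ldots)$ then promotes the generic bound to all of the nilradical, establishing the local inequality; collecting contributions over $x\in D$ gives the global statement.

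\textbf{Main obstacle.} Each ingredient is individually classical, but the real work is synchronizing them: the Puiseux genericity must be attained on the same stratum as the actual $T_0$, and the semicontinuity argument must be set up in the right parameter family that contains only parabolic Higgs data. A more intrinsic alternative I would also pursue bypasses Jordan types altogether, using $\theta^k(F^j(x))\subset F^{j+k}(x)\otimes\omega_X(D)^{\otimes k}|_x$ to bound $\operatorname{ord}_t\operatorname{Tr}(T(t)^k)$ directly and then recovering the bounds on $c_j$ via Newton's identities $c_j=c_j(\operatorname{Tr}\theta,\ldots,\operatorname{Tr}\theta^j)$; this route avoids Richardson orbit theory but demands careful combinatorial bookkeeping.
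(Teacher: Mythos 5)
Your opening reduction is exactly right and is the same first step as in the cited source: trivializing $\omega_X(D)$ by $dt/t$, the claim becomes the pointwise inequality $\operatorname{ord}_t c_j(T(t))\ge\gamma_j(x)$ for $T(t)=T_0+tT_1+\cdots$ with $T_0$ in the nilradical of the parabolic determined by the flag. After that the routes diverge, and yours has a gap at its center. The argument behind \cite{BK18,SWW22} is a direct, term-by-term estimate on the principal $j\times j$ minors of $T(t)$: the entry $T(t)_{ba}$ lies in $t\,k[[t]]$ unless the block index strictly increases from $a$ to $b$, so a permutation term $\prod_{a\in S}T(t)_{\pi(a)a}$ acquires a factor of $t$ for each non-increase; each cycle of $\pi$ contributes at least $\max_\beta c_\beta$ non-increases (where $c_\beta$ counts visits to block $\beta$), and the constraints $\sum_{\mathrm{cycles}}c_\beta\le m^\beta(x)$ together with the identity $\sum_\beta\min(m^\beta(x),d)=\mu_1(x)+\cdots+\mu_d(x)$ force the total to be at least $\gamma_j(x)$. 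This holds for \emph{every} $T(t)$ simultaneously, in any characteristic, with no genericity, orbit theory, or semicontinuity. Your route instead hangs on the assertion that a generic jet-perturbation of a nilpotent of Jordan type $\lambda$ has characteristic polynomial with Newton polygon vertices at $(\lambda_1+\cdots+\lambda_d,\,d)$. That statement is true, but it is precisely where the content lives: establishing that the relevant leading coefficients are not identically zero --- over an arbitrary algebraically closed field, for every $\lambda$, and for jets rather than linear pencils --- amounts to the same minor bookkeeping as the direct proof, plus an exactness statement you do not actually need. Lidskii--Vishik--Lyusternik theory as usually cited is an analytic result over $\C$ for linear pencils, and in small characteristic the genericity loci could a priori be empty; granting this input, your Richardson-orbit/dominance/semicontinuity scaffolding does assemble correctly (for fixed $T_0$ the truncated jet space is irreducible and $\{\operatorname{ord}_t c_j\ge d\}$ is Zariski closed), but it then does no work that the unproved generic computation has not already done.

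The fallback via Newton's identities is not merely delicate --- it fails. The filtration condition only yields $\operatorname{ord}_t\operatorname{Tr}(T(t)^k)\ge\lceil k/\sigma_x\rceil$, which sees the length of the flag but not the block sizes. Take $r=4$ with block sizes $(3,1)$, so $\mu(x)=(2,1,1)$ and $\gamma_4(x)=3$. Here $T_0=e_4\otimes a$ with $a$ supported on the first three coordinates, $T_0^2=0$, and $\operatorname{Tr}(T(t)^4)=t^2\,(a^TT_1e_4)^2+O(t^3)$ has order exactly $2$ generically; since $c_4$ involves $p_2^2$ and $p_4$, both of order $2$, Newton's identities can only give $\operatorname{ord}_t c_4\ge 2$, whereas the determinant (in which every permutation uses at most one entry of the single $O(1)$ row) visibly has order $\ge 3$. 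The identities also divide by factorials, which is fatal in small characteristic. If you complete the write-up, I would discard both the perturbation-theoretic and the power-sum routes and carry out the minor estimate directly.
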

%

We denote the natural projection by $\pi:\operatorname{Tot}(\omega_{X}(D))\rightarrow X$, where $\operatorname{Tot}(\omega_{X}(D))$ is the total space of the line bundle $\omega_{X}(D)$. We denote the tautological section of $\pi^*\omega_{X}(D)$ on $\operatorname{Tot}(\omega_X(D))$ by $\lambda$.
\begin{definition}
	Given a point $a=(a_1,a_2,\ldots,a_r)\in \mathcal{H}_P$, then
	\[
	\lambda^r+\pi^*a_1\lambda^{r-1}+\ldots+\pi^*a_r
	\]
	is a section of $\pi^*(\omega_{X}(D))^{r}$. The spectral curve $X_a$ (associated with $a$) is defined as the zero divisor of the section.
\end{definition}
Notice that the spectral curve $X_a$ is a finite branch covering of the base curve $X$ and we denote the projection by $\pi_a: X_a\rightarrow X$.
%
Defined in this way, the spectral curves (c.f. \cite[$\S$ 3.2 and $\S$ 4.2]{SWW22}) would never be smooth unless all the parabolic structures are given by full flags.
\begin{example}\label{ex: singular spectral curves}
	Let us now consider a simple example. Take the rank $r=4$, and choose a local coordinate $t$ around a marked point $x\in D$. Then the local section $\frac{dt}{t}$ can be treated as a generator (\footnote{The line bundle $\omega_{X}(D)$ locally is a free coherent module of rank 1, hence we can talk about its generator.})
	of the line bundle $\omega_{X}(D)$ around $x$. We take the filtration as:
	\[
	F^0(x)\supset F^1(x)\supset F^2(x)=0,
	\]
	where $\dim F^1(x)=2$. Then we may write:
	\[
	\theta=\Theta \frac{dt}{t},
	\]
	where $\Theta$ is a matrix and each entry is a Laurent power series:
	\[
	\Theta=\begin{pmatrix}
		t*&t*&*&*\\
		t*&t*&*&*\\
		t*&t*&t*&t*\\
		t*&t*&t*&t*\\
	\end{pmatrix},
	\]
	where each star represents an element in $k[[t]]$. If we denote the characteristic polynomial of $\Theta$ as
	\[
	f(\lambda):=\det(\lambda\cdot \text{Id}-\Theta)=\lambda^{4}+a_1(t)\lambda^3+a_2(t)\lambda^2+a_3(t)\lambda+ a_4(t),
	\]
	then $\ord_t a_1(t)=\ord_ta_2(t)=1, \ord_ta_3(t)=\ord_ta_4(t)=2$. In particular, this implies that the spectral curve, locally defined by 	$\det(\lambda Id-\Theta)$, has a singularity at $(\lambda,t)=(0,0)$.
\end{example}
\begin{remark}
Since spectral curves are contained in the surface $\operatorname{Tot}(\omega_{X}(D))$, they have planar singularities. To analyze their singularities, we trivialize the line bundle $\omega_{X}(D)$ to obtain local equations explicitly.
\end{remark}
In the case that generic spectral curves are singular, there is no obvious way to endow rank $1$ torsion-free sheaves on the (singular) spectral curve with a prescribed parabolic structure which fits well with the Higgs fields. In \cite[Theorem 6]{SWW22}, via some non-obvious commutative algebra arguments,  a BNR type correspondence is established in the parabolic case:

\begin{theorem}[parabolic BNR correspondence, \cite{SWW22} Theorem 6]\label{parabolic BNR}
	Assume that the base field $k$ is algebraically closed. For a generic $a\in \mathcal{H}_P$, if $X_a$ is integral, then there is a one-to-one correspondence between:
	\begin{itemize}
		\item[(1)] Parabolic Higgs bundles in the fiber $h_P^{-1}(a)$;
		\item[(2)] Line bundles on the normalization $\tilde{X}_a$ of the spectral curve $X_a$.
	\end{itemize}
\end{theorem}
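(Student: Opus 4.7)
The plan is to extend the classical BNR argument of \cite{BNR89}---which works verbatim over $X\setminus D$ and already produces a line bundle on $X_a\setminus\pi_a^{-1}(D)$---to the marked points by a careful local analysis at each singularity of $X_a$. Throughout, the genericity of $a\in\mathcal{H}_P$ will be used to control the form of the singularity, and integrality of $X_a$ ensures we are dealing with a single irreducible spectral curve whose normalization $\nu\colon\tilde X_a\to X_a$ is well-behaved.

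I would first pin down the local geometry on both sides. On the Higgs side, in a local trivialization $\theta=\Theta\,dt/t$ at $x\in D$, the strongly parabolic condition forces $\Theta(0)$ to be block-upper-triangular nilpotent with generic Jordan type dual to the partition $(m^1(x),\dots,m^{\sigma_x}(x))$; this pins down the vanishing orders of the coefficients $a_i(t)$ (in agreement with the definition of $\mathcal{H}_P$) and implies that $X_a$ has a single singular point above $x$. On the spectral side, a Newton--Puiseux analysis at this singularity---using genericity---should exhibit $\sigma_x$ smooth analytic branches and hence $\sigma_x$ preimages $q_1(x),\dots,q_{\sigma_x}(x)$ in $\tilde X_a$, with $\pi_a\circ\nu$ ramified of order $m^i(x)$ at $q_i(x)$.

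For the forward map $(1)\Rightarrow(2)$, the construction is to apply Cayley--Hamilton to $\theta$, making $\mathcal{E}$ into a $(\pi_a)_*\mathcal{O}_{X_a}$-module and hence producing a rank-one torsion-free sheaf $\mathcal{F}$ on $X_a$ with $(\pi_a)_*\mathcal{F}=\mathcal{E}$. Away from $D$ this $\mathcal{F}$ is already a line bundle by classical BNR; the task is to show that at the singular point $p$ above $x$ the stalk $\mathcal{F}_p$ is isomorphic, as $\mathcal{O}_{X_a,p}$-module, to the integral closure $\tilde{\mathcal{O}}_p=\prod_i\mathcal{O}_{\tilde X_a,\,q_i(x)}$; the strongly parabolic filtration $F^\cdot(x)$ should provide adapted generators of $\mathcal{E}$ near $x$ that match the splitting of $\tilde{\mathcal{O}}_p$ into its branches. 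Once this is established, $\nu^*\mathcal{F}$ is the desired line bundle on $\tilde X_a$.

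Conversely, for $(2)\Rightarrow(1)$, the plan is to set $\mathcal{E}:=(\pi_a\circ\nu)_*\tilde L$, take $\theta$ to be multiplication by the tautological section $\lambda$, and read off the parabolic filtration at $x$ from the decomposition $\mathcal{E}_x=\bigoplus_i\tilde L_{q_i(x)}\otimes_{\mathcal{O}_{X,x}}\mathcal{O}_{\tilde X_a,q_i(x)}$ combined with the uniformizer-adic filtration on each summand. Strong parabolicity of $\theta$ should follow from $\lambda$ vanishing at each $q_i(x)$, so that multiplication by $\lambda$ raises the combined filtration by one step; the weights enter only through the choice of ordering of the branches. The principal obstacle I anticipate is the commutative-algebra step in the forward direction: $\mathcal{O}_{X_a,p}$ is generally not Gorenstein when some $m^i(x)>1$, so identifying precisely which rank-one torsion-free modules come from strongly parabolic Higgs bundles, and recognizing them as $\nu_*$ of line bundles, will require explicit control of the conductor of $\tilde{\mathcal{O}}_p$ over $\mathcal{O}_{X_a,p}$ and a careful matching with the parabolic data $(m^\cdot(x),\alpha(x))$. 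Mutual inversion of the two constructions should then follow by gluing this local dictionary with classical BNR on the smooth locus.
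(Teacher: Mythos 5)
Your overall architecture is the same as the paper's: run classical BNR away from $D$, reduce to a local statement at each marked point identifying the stalk of the spectral sheaf with the integral closure (the paper's ``Local Parabolic BNR Correspondence'', which says a distinguished local parabolic Higgs bundle $(V,F^\bullet,\theta)$ becomes a free rank-one module over the normalization $\tilde A=\prod A_i$), and invert by pushing forward a line bundle from $\tilde X_a$ and reading the filtration off the branch-adic filtration. You also correctly flag the non-Gorenstein difficulty, which is exactly where the paper's ``non-obvious commutative algebra'' lives. However, there is a concrete error in the combinatorial input on which the whole local dictionary rests.

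You assert that over $x\in D$ the normalization has $\sigma_x$ branches, with $\pi_a\circ\nu$ ramified of order $m^i(x)$ at the $i$-th branch. This is wrong except when the partition $(n_i(x))$ is self-conjugate (as in the paper's $(2,2)$ example, which may have misled you). The correct statement, and the one encoded in condition (c) of Definition \ref{local data}, is that the local characteristic polynomial factors generically as $f=\prod_{i=1}^{n_1(x)}f_i$ with each $f_i$ Eisenstein of degree $\mu_i(x)$, so there are $n_1(x)$ points of $\tilde X_a$ over $x$ with ramification indices given by the \emph{conjugate} partition $\mu_\bullet(x)$ of $(n_i(x))$. Your own (correct) observation that the generic residue $\Theta(0)$ has Jordan type conjugate to $(m^i(x))$ already forces this: the Newton polygon of $f$ is determined by that Jordan type, and its segments have horizontal lengths $\mu_i(x)$, not $m^i(x)$. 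The full-flag case makes the discrepancy stark: there $\mu=(r)$, the spectral curve is smooth and \emph{totally ramified} over $x$ (a single Eisenstein factor of degree $r$), whereas your count predicts $r$ unramified preimages. With the wrong branch data, the dimension count $\dim F^{j-1}(x)/F^j(x)=\#\{i:\mu_i(x)\ge j\}=n_j(x)$ fails, so neither the identification of $\mathcal{F}_p$ with $\tilde{\mathcal{O}}_p$ nor the reconstruction of the filtration from $\bigoplus_i\mathfrak{m}_{q_i}^{j}\tilde L$ can go through as you describe. Once the branch structure is corrected to the conjugate partition (together with the non-degeneracy condition that Eisenstein factors of equal degree have constant terms differing to order exactly one in $t$, which your appeal to genericity must be made to deliver), your plan does align with the published argument.
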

The theorem was first proved over algebraically closed fields (with a mild assumptions in characteristic 2) in \cite{SWW22} and later was generalized to arbitrary fields in \cite[Theorem 2.3.1]{SWW22t}. First, we can define the following local data:
\begin{definition}\label{local data} A \emph{local parabolic Higgs bundle} is a triple $(V,F^\bullet,\theta)$ satisfying the following conditions:
\begin{itemize}
	\item[(a)] $V$ is a free $\mathcal{O}=k[[t]]$-module of rank $r$, endowed with a filtration $F^\bullet V$: $$V=V^{0}\supset V^1\supset\cdots\supset V^{\sigma}=t\cdot V,$$
	with $\dim\frac{V^{i}}{V^{i+1}}=m_{i+1}$. As before we rearrange $(m_{i})$ as $(n_{i})$ to give a partition of $r$.
	\item[(b)] $\theta:V\rightarrow V$ is a $k[[t]]$-linear morphism and $\theta(V^{i})\subset V^{i+1}$.
\end{itemize}
We call $(V,F^\bullet,\theta)$ a \emph{distinguished local parabolic Higgs bundle} if in addition, the following holds:
\begin{itemize}
	\item [(c)] Let $f(\lambda,t)$ be the characteristic polynomial of $\theta$, then it has a factorization, $$f(\lambda,t)=\prod_{i=1}^{n_{1}}f_{i},$$
	such that each $f_{i}$ is an Eisenstein polynomial (\footnote{Here an Eisenstein polynomial is a polynomial  such that the valuation of its constant term is 1.}) with $\deg(f_{i})=\mu_{i}$. Here $(\mu_1,\ldots,\mu_{n_1})$ as before is the conjugate partition of the partition $(n_i)$. Besides, if $\deg f_{i}=\deg f_{j}$, then the difference of their constant terms lies in $t\cdot k[[t]]\setminus t^{2}\cdot k[[t]]$.
\end{itemize}
\end{definition}
Then comes the resolution of the singularities of the spectral curve $X_a$. Through successive blow-ups, for generic spectral curves, the ramification of the normalized spectral curve $\bar{X}_a$ over a marked point $x\in D$ is given by the conjugate partition of $\{n_i(x)\}$. In particular, for generic characteristic polynomials of parabolic Higgs bundles, the local defining equations of spectral curves indeed provide the expected decomposition in $(c)$ of Definition \ref{local data}. Hence locally around each marked point, generic parabolic Higgs bundles $(\mathcal{E},\theta)$ are \emph{distinguished local parabolic Higgs bundle} ( see Definition \ref{local data}).
\begin{remark}
	The last statement in (c) of Definition \ref{local data} has a geometric meaning. During the successive blow up, it says that the intersection of the strict transform with the exceptional divisor is as simple as possible. This corresponds to the non-degenerate condition in the toric resolution. See \cite[Section 3.6]{Oka10}.
\end{remark}
We write $A:=\mathcal{O}[\lambda]/(f)$, and $A_i:=k[[t]][\lambda]/(f_i(t,\lambda))$. Since each $f_{i}$ is an Eisenstein polynomial, each $A_{i}$ is a DVR (discrete valuation ring) and we put $\tilde{A}:=\prod_{i=1}^{n_{1}}A_i$. Then we have a natural injection $A\hookrightarrow \tilde{A}$ and $\tilde{A}$ can be treated as the normalization of $A$.

\begin{theorem}[Local Parabolic BNR Correspondence]
A distinguished local strongly parabolic Higgs bundle $(V,F^\bullet,\theta)$ induces a principal $\tilde{A}$-module structure on $V$.
\end{theorem}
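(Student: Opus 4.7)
The plan is to equip $V$ with the required $\tilde{A}$-module structure in three stages. First, Cayley--Hamilton produces an $A$-module structure: the identity $f(\theta)=0$ lets $\lambda\in A=\mathcal{O}[\lambda]/(f)$ act on $V$ as $\theta$. The distinguished hypothesis forces the Eisenstein factors $f_i$ to be pairwise coprime irreducible elements of $K[\lambda]$, where $K=k(\!(t)\!)$: factors of different degree are automatically coprime, while the condition that constant terms of equal-degree factors differ modulo $t^2$ rules out equality. Hence $A_K := A\otimes_\mathcal{O} K \cong \prod_{i=1}^{n_1} K_i$, where $K_i=\operatorname{Frac}(A_i)$. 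A rank count combined with the coprimality shows that the minimal polynomial of $\theta$ coincides with the characteristic polynomial $f$, so $V_K := V\otimes_\mathcal{O} K$ is free of rank one over $A_K$ and decomposes canonically as $V_K = \bigoplus_i V_{K,i}$, with $V_{K,i}=e_i V_K$ for the obvious idempotents $e_i\in A_K$ and each $V_{K,i}$ one-dimensional over $K_i$.

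The central step is to show that this generic decomposition descends integrally: setting $V_i := V \cap V_{K,i}$, I must prove $V = \bigoplus_i V_i$, or equivalently that each $e_i$ preserves the lattice $V \subset V_K$. Here the distinguished hypothesis is essential. Geometrically, $\operatorname{Spec}(A)$ is the local spectral curve at the marked point and $\operatorname{Spec}(\tilde{A})$ its normalization via successive blow-ups; the Eisenstein condition together with the non-degenerate separation of constant terms (cf.\ the remark following Definition~\ref{local data}) guarantees that the strict transforms meet the exceptional divisors transversally, producing a ramification profile matching the conjugate partition $(\mu_i)$. Algebraically, I would realize the integral projectors as polynomials in $\theta$ with coefficients in $\mathcal{O}$, by exploiting the filtration $F^\bullet$: the shift property $\theta(V^j)\subset V^{j+1}$ together with $V^\sigma = tV$ provides enough control over how $\theta$ acts on the graded pieces to match the $\lambda$-adic valuation on each DVR factor $A_i$. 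The key technical point is that the formal Lagrange-type formulas defining $e_i$ in $A_K$, which a priori carry denominators in $t$, can be rewritten modulo $f$ with $\mathcal{O}$-integral coefficients once the nilpotent shift encoded by the filtration is taken into account.

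Once the integral decomposition $V = \bigoplus_i V_i$ is in hand, each $V_i$ is a finitely generated torsion-free module over the DVR $A_i$, hence free of some rank $r_i$, with $\mathcal{O}$-rank $r_i\mu_i$. Comparing with $\dim_K V_{K,i} = \mu_i$ forces $r_i = 1$, so $V_i \cong A_i$ and $V \cong \prod_i A_i = \tilde{A}$ as $\tilde{A}$-modules, which is precisely the principal $\tilde{A}$-module structure claimed. The main obstacle will be the descent step — producing the integral idempotents — which is a local analogue of the parabolic BNR correspondence (Theorem~\ref{parabolic BNR}); without the non-degenerate separation of constant terms for equal-degree factors, the $e_i$ would carry denominators that spoil integrality, so the full strength of the distinguished hypothesis is needed.
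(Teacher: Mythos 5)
The paper itself does not prove this theorem: it is quoted from \cite{SWW22}, where the argument is described only as ``non-obvious commutative algebra,'' so there is no in-paper proof to compare against line by line. Judged on its own terms, your skeleton is sound and matches the geometry the paper describes around the statement. Cayley--Hamilton gives the $A$-action; the distinguished hypothesis makes the $f_i$ pairwise distinct monic irreducibles over $K=k(\!(t)\!)$, so the characteristic polynomial is squarefree, equals the minimal polynomial, and $V_K$ is free of rank one over $A_K\cong\prod_i K_i$; since $V$ is $\mathcal{O}$-torsion-free, any $\tilde{A}$-structure extending the $A$-structure must be the restriction of the $A_K$-action, so the whole theorem correctly reduces to showing that the CRT idempotents $e_i$ preserve the lattice $V\subset V_K$; and your closing rank count ($V_i$ torsion-free over the DVR $A_i$, of $\mathcal{O}$-rank $\mu_i=\operatorname{rank}_{\mathcal{O}}A_i$, hence $V_i\cong A_i$) is correct.

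The genuine gap is that the one hard step --- integrality of the $e_i$ --- is asserted, not proved. Writing $e_i=\bigl(\prod_{j\ne i}f_j(\theta)\bigr)\,\tilde{u}_i(\theta)\,/\,\prod_{j\ne i}\operatorname{Res}(f_i,f_j)$ with $\tilde{u}_i\in\mathcal{O}[\lambda]$, what must be shown is a precise valuation inequality: the $t$-adic valuation by which $\prod_{j\ne i}f_j(\theta)\,\tilde u_i(\theta)$ shrinks $V$ must dominate $v_t\bigl(\prod_{j\ne i}\operatorname{Res}(f_i,f_j)\bigr)$. Already in the simplest case $r=2$, $V^1=tV$, $f=(\lambda-c_1)(\lambda-c_2)$ with $v_t(c_i)=1$, one has $(\theta-c_2)(V)\subset tV$ while the denominator is $c_1-c_2$; the idempotent is integral exactly because the distinguished condition forces $v_t(c_1-c_2)=1$ rather than $\ge 2$, and with $v_t(c_1-c_2)=2$ the argument fails. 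In higher rank (e.g.\ the paper's $\yng(2,2)$ example, where $v_t(\operatorname{Res}(f_1,f_2))=2$) the matching of the resultant's valuation, computed from the Newton polygons of the $f_i$, against the divisibility of $\prod_{j\ne i}f_j(\theta)$ on $V$ imposed by $\theta(V^j)\subset V^{j+1}$ and $V^\sigma=tV$ is precisely the content of the theorem, and your text replaces it with the phrase ``can be rewritten modulo $f$ with $\mathcal{O}$-integral coefficients once the nilpotent shift encoded by the filtration is taken into account.'' Until that estimate is carried out (or an equivalent device is supplied, e.g.\ exhibiting a cyclic vector for the $\tilde A$-action directly and comparing associated graded modules), you have a correct reduction and a correct endgame, but not a proof of the statement.
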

This theorem implies that, parabolic Higgs bundles with generic characteristics polynomials are actually line bundles over normalized spectral curves. Moreover, given a line bundle $\mathcal{L}$ on a normalized spectral curve $\tilde{X}_a$, there is a natural filtration on $\tilde{\pi}_{a,*}\mathcal{L}$ which coincides with the parabolic structure we expect.
\begin{example}
	Now let us continue the analysis of Example \ref{ex: singular spectral curves}. The filtration corresponds to the Young diagram:
	\[
	\yng(2,2)
	\]
	Obviously the conjugate partition gives the same Young diagram
	. We now decompose the characteristic polynomial as:
	\[
	f(\lambda)=f_1f_2,
	\]
	moreover
	\[
	\deg f_1=\deg f_2=2.
	\]
	There are two points $x_1,x_2$ on the normalization $\tilde{X}_a$ over the marked point $x\in X$. Now consider a line bundle $\mathcal{L}$ over $\tilde{X}_a$, it is endowed with a filtration:
	\[
	\mathcal{L}\supset \mathfrak{m}_{x_1}\mathcal{L}\oplus \mathfrak{m}_{x_2}\mathcal{L}\supset  \mathfrak{m}^{2}_{x_1}\mathcal{L}\oplus \mathfrak{m}^{2}_{x_2}\mathcal{L}=\tilde{\pi}^{*}_{a}\mathfrak{m}_{x}\cdot\mathcal{L}
	\]
	Here $\mathfrak{m}$ denotes maximal ideals. Hence on the push forward $\tilde{\pi}_{a*}\mathcal{L}$,  we have a filtration satisfying that $\dim F^1(x)=2$.
\end{example}
\begin{remark}
	As can be seen from the example, the ramification of normalized spectral curves fits perfectly with the filtration at marked points. Such a condition also holds for generic parabolic symplectic Higgs bundles. But for parabolic $\SO(2r+1)$-Higgs bundles, this \emph{does not hold} any more.
\end{remark}

\subsection{Mirror Symmetry for Hitchin Systems}
In this subsection, we will talk about mirror symmetry centering around various Hitchin systems.
\subsubsection{Geometric Langlands and Mirror Symmetry}The geometric Langlands program is a long-lasting and important subject of mathematics which has attracted re-examination from a physical perspective during the last twenty years, for example, by Gukov, Kapustin, and Witten \cite{KW07, GW08, GW10}. Such kind of physical perspective implies that we may treat Geometric Langlands as a mirror conjecture of two moduli spaces of stable Higgs bundles $ \mathcal{M}_{G}(X)$ and $\mathcal{M}_{{}^LG}(X)$, where $X$ is a Riemann surface. Here reductive complex algebraic groups $G, {}^LG$ are related by Langlands duality. As explained in \cite{DP08}, in a certain semi-classical limit, Kontsevich's \cite{Ko95} homological mirror symmetry conjecture \cite{KW07} should take the following form:
\begin{align*}
	\mathcal{D}_{\mathrm{Coh}}^b(  \mathcal{M}_{G}(X)) \sim  \mathcal{D}_{\mathrm{Coh}}^b(  \mathcal{M}_{{}^LG}(X)),
\end{align*}
i.e., an equivalence of the derived categories of coherent sheaves on moduli spaces of Higgs bundles for Langlands dual groups. In fact, if we consider $\mathcal{M}_{G}$ (resp. $\mathcal{M}_{^{L}G}$) as cotangent bundles of $\mathcal{N}_{G}$(resp. $\mathcal{N}_{^{L}G}$) the moduli of $G$ (resp. $^{L}G$) bundles. The derived category of coherent sheaves on $\mathcal{M}_{G},\mathcal{M}_{^{L}G}$ can be treated as a limit of the derived category of $D$-modules on $\mathcal{N}_{G},\mathcal{N}_{^{L}G}$. Roughly speaking, we can deform the ring of differential operators on a manifold $X$ to the ring $\operatorname{Sym^{\bullet}}T_{X}$, which also can be treated as the push forward of $\mathcal{O}_{T^*X}$ to $X$.

In Donagi and Pantev\cite{DP12}, they provide an ``Abelian" version of the above equivalence over an open subset of the Hitchin base. They first show the following:
\begin{theorem}\label{T:DonPan}
	The two Hitchin systems $ \mathcal{M}_{G}(X)$ and  $\mathcal{M}_{{}^LG}(X)$ enjoy the Strominger-Yau-Zaslow (SYZ) mirror symmetry, i.e., generic fibers are torsors over dual Abelian varieties.
\end{theorem}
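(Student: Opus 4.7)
The plan is to replace the spectral-cover perspective (well-adapted to classical groups via a faithful representation) with the cameral-cover perspective of Donagi. For a reductive $G$ with Cartan $T$ and Weyl group $W$, Chevalley's theorem gives $\mathfrak{g}/\!/G \cong \mathfrak{h}/W$, so the Hitchin base can be canonically identified with $H^0(X, (\mathfrak{h}\otimes K)/W)$. For a generic $a\in \mathcal{H}_G$ one defines the cameral cover $\widetilde{X}_a\to X$ as the $W$-Galois cover obtained by pulling back the universal cover $\mathfrak{h}\otimes K\to (\mathfrak{h}\otimes K)/W$ along the section $X\to (\mathfrak{h}\otimes K)/W$ determined by $a$. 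Genericity of $a$ will be used to guarantee that $\widetilde{X}_a$ is smooth and that its ramification over $X$ is simple along the root hyperplanes.

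First I would reprove, in this language, that the generic Hitchin fiber $h_G^{-1}(a)$ is canonically isomorphic to a generalized Prym variety $\mathrm{Prym}_T(\widetilde{X}_a/X)$: the identity component of the moduli of $W$-equivariant $T$-torsors on $\widetilde{X}_a$ subject to the appropriate sign/parity conditions along the ramification divisor (so that each reflection in $W$ acts by inversion on the corresponding coroot direction). This passes through the abelianization of the Higgs bundle moduli---viewing $(\mathcal{E},\theta)$ as a $W$-equivariant $T$-bundle on $\widetilde{X}_a$---which recovers the Jacobian/Prym descriptions mentioned in the introduction when $G$ is $GL_n$ or a classical simple group.

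Next I would invoke the Langlands duality data: $W(G)=W({}^LG)$ canonically, and the character lattice $X^*(T)$ is swapped with the cocharacter lattice $X_*(T)=X^*({}^LT)$. Consequently the cameral cover $\widetilde{X}_a$ depends only on the Weyl-group datum and is common to $G$ and ${}^LG$; the two fibers are built from the same geometric object, but twisted by dual tori. The SYZ pairing
\[
\mathrm{Prym}_T(\widetilde{X}_a/X)\times \mathrm{Prym}_{{}^LT}(\widetilde{X}_a/X)\to \mathbb{G}_m
\]
is then produced from the perfect pairing $X^*(T)\times X_*(T)\to\mathbb{Z}$ together with cup product on $\widetilde{X}_a$ and $W$-averaging; non-degeneracy will follow because the sign conditions imposed on the two sides are Pontryagin dual.

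The hard part will be the bookkeeping of the center $Z(G)$ and the fundamental group $\pi_1(G)$. For $G$ semisimple but not simply connected (or not adjoint) the Hitchin fibers split as disjoint unions indexed by $\pi_1(G)$, and the SYZ statement becomes a torsor assertion rather than an isomorphism of abelian varieties: the dual structure emerges only after twisting by a gerbe controlled by a Brylinski--Deligne class attached to $Z({}^LG)$. Verifying that generic cameral covers are smooth (so the Prym is a genuine abelian variety), and identifying the gerbes on the two sides as mutually dual, constitutes the essential technical work beyond the toric and lattice-theoretic identifications sketched above.
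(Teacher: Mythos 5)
The paper offers no proof of this statement: it is quoted verbatim from Donagi--Pantev \cite{DP12}, and the survey only records the result. Your outline --- cameral covers, abelianization of the Hitchin fiber to a generalized Prym of $W$-equivariant $T$-torsors, the swap of character and cocharacter lattices under Langlands duality, and the gerbe/$\pi_1(G)$ bookkeeping needed to upgrade the duality of Pryms to a torsor statement --- is an accurate sketch of the strategy actually used in that reference, so it is ``the same approach'' as the paper's (cited) source, though it remains a plan whose technical core (smoothness of generic cameral covers, non-degeneracy of the pairing, mutual duality of the gerbes) you have correctly flagged but not carried out.
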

\begin{remark}\label{rmk:SYZ missing gerbes}
	There is a more delicate condition on the torsor structure which is reflected via certain unitary gerbes. See \cite[\S1,\S3]{HT03} or \cite[\S 6]{GWZ20} for an explanation in the setting of Hitchin systems.
\end{remark}
Then for an open subvariety of the Hitchin moduli spaces, they prove a Fourier-Mukai type equivalence of derived categories:
\begin{theorem}\cite[\S5.3]{DP12}
	Over an open subset of the Hitchin base, over which the fibers are torsors over Abelian varieties, there is an equivalence of derived categories via Fourier-Mukai transform:
	\[
	\mathcal{D}_{\mathrm{Coh}}^b(  \mathcal{M}_{G}(X)) \sim  \mathcal{D}_{\mathrm{Coh}}^b(  \mathcal{M}_{{}^LG}(X)).
	\]
\end{theorem}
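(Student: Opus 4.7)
The plan is to upgrade the Strominger--Yau--Zaslow picture of the preceding \refT{DonPan} to a genuine Fourier--Mukai equivalence of derived categories in families over the smooth locus of the Hitchin base. I would begin by restricting both sides of the equivalence to the open subset $\mathcal{H}^{\mathrm{sm}}\subset \mathcal{H}$ where the spectral curves are smooth (or at least sufficiently regular so that the Hitchin fibers are the torsors over Abelian varieties produced by \refT{DonPan}). Over $\mathcal{H}^{\mathrm{sm}}$ one has two smooth proper morphisms $h_G:\mathcal{M}_G(X)\to \mathcal{H}^{\mathrm{sm}}$ and $h_{{}^LG}:\mathcal{M}_{{}^LG}(X)\to\mathcal{H}^{\mathrm{sm}}$ whose fibers are, respectively, torsors $T_a$ and $T_a^{\vee}$ over mutually dual Abelian varieties $A_a$ and $A_a^{\vee}$.

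The main construction is the relative Poincaré sheaf. On the fiber product $\mathcal{M}_G\times_{\mathcal{H}^{\mathrm{sm}}}\mathcal{M}_{{}^LG}$ one must produce a line bundle (or, as noted in the remark, a twisted line bundle valued in the appropriate gerbes $\alpha_G,\alpha_{{}^LG}$) whose restriction to each fiber $T_a\times T_a^{\vee}$ is the classical Poincaré bundle on $A_a\times A_a^{\vee}$, trivialized compatibly with the torsor structure. With such a kernel $\mathcal{P}$ in hand, the Fourier--Mukai transform is defined in the usual way by
\[
\mathrm{FM}_{\mathcal{P}}(\mathcal{F})=Rp_{{}^LG,*}\bigl(p_G^{*}\mathcal{F}\otimes^{L}\mathcal{P}\bigr),
\]
and the equivalence follows from Mukai's original argument applied fiberwise and then globalized via proper base change, noting that invertibility reduces to the statement that the convolution $\mathcal{P}\star\mathcal{P}^{\vee}$ is (up to shift) the structure sheaf of the diagonal, which is true on each Abelian-variety fiber.

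The genuinely hard step, and the main obstacle, is the construction of $\mathcal{P}$ as a global twisted line bundle over all of $\mathcal{H}^{\mathrm{sm}}$ rather than merely fiberwise. The obstruction lives in the cohomology of $\mathcal{H}^{\mathrm{sm}}$ with coefficients in the sheaf of isomorphism classes of line bundles on the relative dual, and it is controlled precisely by the unitary gerbes $\alpha_G$ and $\alpha_{{}^LG}$ of Remark~\ref{rmk:SYZ missing gerbes}. To handle this, I would follow the Donagi--Pantev strategy: realize the relative dual as a suitable quotient of the relative Jacobian (or Prym) of the universal spectral/cameral cover, and construct $\mathcal{P}$ from the tautological bundle on that Jacobian together with Abel--Jacobi-type sections, descending to a $(\alpha_G,\alpha_{{}^LG}^{-1})$-twisted Poincaré sheaf. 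Once $\mathcal{P}$ is constructed, the equivalence becomes
\[
\mathcal{D}_{\mathrm{Coh}}^{b}\bigl(\mathcal{M}_G,\alpha_G\bigr)\;\simeq\;\mathcal{D}_{\mathrm{Coh}}^{b}\bigl(\mathcal{M}_{{}^LG},\alpha_{{}^LG}\bigr)
\]
of gerbe-twisted derived categories, which specializes to the stated equivalence when the gerbes are trivial and refines the statement in general.

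A secondary technical issue is that the spectral/cameral description behind Donagi--Pantev requires care when passing to the non-simply-laced case and when the group $G$ is not simply connected; I would handle this by working throughout with cameral covers (so that Langlands duality is cleanly reflected in the Weyl-group action) and would use the parabolic BNR correspondence (\refT{parabolic BNR}) as a template for how to handle the singularities of spectral curves should one wish to extend the construction beyond $\mathcal{H}^{\mathrm{sm}}$. I would expect that, away from the discriminant, all of the standard Mukai-type verifications (flatness of the kernel, compatibility with Serre duality, identification of convolution with the diagonal) go through by relative base change, so the real content of the theorem is indeed the gerbe-twisted construction of $\mathcal{P}$.
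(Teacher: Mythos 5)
The paper does not prove this statement: it is quoted verbatim from Donagi--Pantev \cite[\S5.3]{DP12} as background for the discussion of mirror symmetry, and the surrounding remarks explicitly suppress the gerbes and the non-neutral components that a genuine proof must track. So there is no in-paper argument to compare yours against; the only fair comparison is with the strategy of \cite{DP12} itself, and your outline is a faithful roadmap of that strategy (restriction to the good locus, relative Poincar\'e kernel, fiberwise Mukai inversion globalized by base change, gerbe-twisted refinement). You also correctly identify that the honest statement is an equivalence of $\alpha$-twisted categories, which is more precise than the version printed in the survey.

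As a standalone proof, however, your write-up has a genuine gap exactly where you say the difficulty lies: the construction of the global twisted Poincar\'e sheaf $\mathcal{P}$ is asserted by saying you would ``follow the Donagi--Pantev strategy,'' which makes the argument circular as a blind proof. The hard content of \cite{DP12} is precisely (i) the abelianization of both Hitchin fibrations via cameral covers and the identification of the two families of abelian varieties as duals --- this uses the isogeny between the character and cocharacter lattices of the maximal tori of $G$ and ${}^LG$ and is not a formal consequence of \refT{DonPan}; and (ii) the descent of the tautological Poincar\'e bundle from the cameral Prym to the actual Hitchin fibers, where the obstruction class must be matched against the gerbes $\hat\alpha$, $\check\alpha$ component by component (the non-neutral components of $\mathcal{M}_G$ cannot be ignored, since the kernel on $T_a\times T_a^\vee$ only exists after the twist). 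Your secondary remark that the parabolic BNR correspondence (\refT{parabolic BNR}) would serve as a template is also off target for this theorem: that result concerns line bundles on normalized singular spectral curves in the parabolic setting, whereas \cite{DP12} works with cameral covers over the locus where no normalization is needed. None of this invalidates your plan, but the plan is an accurate table of contents for the cited proof rather than a proof.
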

\begin{remark}
	Here for simplicity, we omit the discussion of non-neutral component of the Hitchin moduli spaces and the unitary gerbes. Both of these discussions are necessary for the verification of SYZ mirror symmetry in general cases.
\end{remark}
\subsubsection{Topological Mirror Symmetry}\label{SS:TopMirSym}
We now introduce another interpretation of the mirror symmetry from a topological aspect. By Physicists, from two versions of string theory, \textcolor{blue}{type IIA} and \textcolor{purple}{type IIB}, one gets two Calabi-Yau manifolds \textcolor{blue}{$X_A$} and \textcolor{purple}{$Y_B$}.
\begin{itemize}
	\item The physical phenomena are presented by: \\				
	\begin{tabular}{rcl}
		\textbf{symplectic geom.} of \textcolor{blue}{$X_A$} & and & \textbf{complex geom.} of \textcolor{purple}{$Y_B$}.
	\end{tabular}
	\item Hence there exists mirror symmetry:\\
	\begin{tabular}{rcl}
		\textbf{symplectic geom.} of \textcolor{blue}{$X_A$} & $\leftrightsquigarrow$ & \textbf{complex geom.} of \textcolor{purple}{$Y_B$}.
	\end{tabular}
	\item We call a pair of Calabi-Yau manifolds $(X,Y)$ satisfying the (vague) symplectic-complex transformation as a \textbf{mirror pair}
\end{itemize}

One way to describe the abstract and ambiguous ``mirror pair" is to compare their topological invariants, for example, Hodge numbers. To be more precise, given a smooth ``mirror pair" of compact Calabi-Yau manifolds $(M, M^\vee)$ of same dimension, we expect
$$
h^{p,q}(M) = h^{\mathrm{dim M} -p, q}(M^\vee).
$$
If moreover, $M$ (also $M^{\vee}$) is a compact hyperK\"ahler manifold, due to the symmetry in the Hodge diamond, we expect that $h^{p,q}(M) = h^{p, q}(M^\vee)$.

However, in many cases, the mirror counterpart might not be a manifold, but be an orbifold. And we need to use the so-called stringy Hodge numbers which we will define in the following(\footnote{But for simplicity, we only define it for orbifolds which are quotients of manifolds by finite groups.}).
\subsubsection*{Intermezzo: Stringy Hodge Numbers}
Let $\Gamma$ be a finite group acting generically fixed-point freely on a smooth quasi-projective complex variety  $V$ of dimension $n$. For each element $\gamma\in\Gamma$, we put $C(\gamma)$ its centralizer in $\Gamma$. We denote by $[V/\Gamma]$ the quotient stack (orbifold) over $\mathbb{C}$. In the following, we fix a system of primitive roots of units,  $\{\xi_{m}\}_{m\in\mathbb{Z}_{>0}}$. Here for each positive integer $m$, $\xi_m$ is an $m$-th primitive root of unit and for all $m,n>0$, we have:
\[
\xi_{mn}^{n}=\xi_{m}.
\]
See \cite[Section 1: Conventions]{GWZ20}.
\begin{definition}\label{def:stringy polynomials}
	
	The stringy $E$ polynomial of $[V/\Gamma]$ is defined as:
		\begin{equation}\label{eq:stringy E}
		E_{\text{st}}([V/\Gamma];u,v)=\sum_{\gamma\in \Gamma/{\text{conj}}}(\sum_{Z\in\pi_{0}([V^{\gamma}/C(\gamma)])}E(Z;u,v)(uv)^{F(\gamma,Z)}),
		\end{equation}
		where $\Gamma/{\text{conj}}$ is the set of conjugacy classes of $\Gamma$, the second summation is over connected components of $[V^{\gamma}/C(\gamma)]$. Here $F(\gamma,Z)$ is the Fermionic shift defined as follows, let $x\in V^{\gamma}$ with image in $Z$, then $\gamma$ acts on $T_xV$ with eigenvalue $\{\xi^{c_i}\}_{i=1,\ldots,n}$ where $0\leq c_i<\#\langle\gamma\rangle$, $\#\langle\gamma\rangle$ is the order of $\gamma$,  for $i=1,\ldots,n$. Then
		\[F(\gamma,x)=\sum \frac{c_i}{\#\langle\gamma\rangle}.\]
		It is not difficult to see that $F(\gamma,x)$ is locally constant for $x\in V^{\gamma}$, hence we put it as $F(\gamma,Z)$.
		
		We write $Z=[W/C(\gamma)]$, where $W$ is the preimage of $Z$ in $V^{\gamma}$, then:
		\[
		E(Z;u,v)=\sum_{p,q,k}(-1)^{k}\dim (\text{gr}^{W}_{p,q}H_{c}^{k}(W)^{C(\gamma)})u^pv^q.
		\]
\end{definition}
\begin{remark}
	In order to fit into SYZ mirror symmetry, the topological mirror symmetry is an equality between twisted stringy Hodge numbers (or equivalently, twisted stringy $E$-polynomials). Here as in Remark \ref{rmk:SYZ missing gerbes}, we omit the discussion of how the gerbes have to be used to give twisted stringy $E$-polynomials. For a more complete definition, see \cite[\S 2.4]{GWZ20}.
\end{remark}
Now we present the relevant results for Hitchin systems. Let $G=\rm SL_n$, then ${}^L G=\rm PGL_n$, one expects the equality of the (stringy) Hodge numbers.
\begin{theorem}[Conjecture \cite{HT03}, Theorem \cite{GWZ20, MS21}] \label{conjecture}
	Assume $d = \deg L$ and $ d' = \deg L' $ are coprime to $n$, then
	$$
	E_{\text{st}}( \mathcal{M}_{\SL_n, L}) = E_{\text{st}}( \mathcal{M}_{\PGL_n, L'}, \alpha).
	$$

\end{theorem}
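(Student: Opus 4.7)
The strategy I would pursue follows Groechenig--Wyss--Ziegler \cite{GWZ20}: express both stringy $E$-polynomials as twisted $p$-adic volume integrals, then identify those integrals fiberwise over the Hitchin base using the duality of Theorem \ref{T:DonPan}. First I would spread both moduli spaces out to smooth Deligne--Mumford stacks over a ring $\mathcal{O}$ of $S$-integers of a number field, and then invoke the arithmetic criterion (a stringy extension of Katz's appendix to the Hausel--Rodriguez-Villegas theorem, extended to orbifolds via Denef--Loeser $p$-adic integration on arc spaces) which reduces the equality $E_{\mathrm{st}}(\mathcal{M}_{\SL_n,L})=E_{\mathrm{st}}(\mathcal{M}_{\PGL_n,L'},\alpha)$ to the equality of their (twisted) stringy $\mathbb{F}_q$-point counts for all sufficiently large primes $q$ of good reduction. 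The coprimality of $\deg L$ and $\deg L'$ with $n$ ensures smoothness (in the stacky sense) of both moduli spaces, which is needed for the criterion to apply.

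Next, I would rewrite the stringy point counts as $p$-adic integrals over local integral points. Using the canonical Calabi--Yau volume forms $\omega$ on $\mathcal{M}_{\SL_n,L}$ and (in the orbifold sense) on $\mathcal{M}_{\PGL_n,L'}$, the count on the $\SL_n$ side becomes $\int_{\mathcal{M}_{\SL_n,L}(\mathcal{O}_v)} |\omega|$, while the gerbe-twisted stringy count on the $\PGL_n$ side becomes $\int_{\mathcal{M}_{\PGL_n,L'}(\mathcal{O}_v)} f_\alpha\,|\omega|$, where $f_\alpha$ is the locally constant character on the inertia that realises the class $\alpha$. Both integrals fiber over the common Hitchin base $\mathcal{H}$, which is literally the same affine space on the two sides.

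The heart of the matter is the fiberwise identification over the open locus $\mathcal{H}^{\mathrm{reg}}\subset\mathcal{H}$ of integral spectral curves. By Theorem \ref{T:DonPan} the Hitchin fibers over $\mathcal{H}^{\mathrm{reg}}$ on the two sides are torsors under a pair of Langlands-dual abelian varieties (the Prym and its dual). The $p$-adic Fourier transform, applied to an abelian variety over a non-archimedean local field via Tate/Poitou--Tate duality, converts the twist by the character $f_\alpha$ on one torsor into the counting measure on the other torsor; the essential input is that, under the SYZ identification, $f_\alpha$ is exactly the character prescribed by Langlands duality. Hence the two fiberwise integrals coincide over $\mathcal{H}^{\mathrm{reg}}(\mathcal{O}_v)$.

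The main obstacle is the complement $\mathcal{H}\setminus\mathcal{H}^{\mathrm{reg}}$, where spectral curves are singular and Theorem \ref{T:DonPan} fails as stated. Here one must check that the $p$-adic volumes over the bad locus match on both sides. This requires two ingredients: properness of the Hitchin map (so that the integrals converge and one can compare them fiberwise on the Zariski closure), and a measure-theoretic estimate showing that the contribution from the discriminant is small enough, on both sides symmetrically, that the matching on $\mathcal{H}^{\mathrm{reg}}$ forces matching on all of $\mathcal{H}$. I expect this step — controlling the singular locus uniformly in $q$ and simultaneously on both sides — to be the most delicate. In the parabolic generalization (cf.\ Theorem \ref{parabolic BNR}), the same scheme should work, with the BNR correspondence applied to the \emph{normalization} $\tilde{X}_a$ of the singular spectral curve playing the role of restoring the abelian-variety description that the $p$-adic Fourier argument requires.
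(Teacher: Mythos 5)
The paper does not actually prove Theorem \ref{conjecture}: it records it as a cited result (conjectured in \cite{HT03}, proved independently in \cite{GWZ20} and \cite{MS21}) and devotes only one descriptive paragraph to each of the two known strategies. Your plan is an expansion of the first of these, the $p$-adic integration argument of Groechenig--Wyss--Ziegler, so you are following the same route the paper points to; the genuinely different route of Maulik--Shen (endoscopic decomposition and isomorphisms of graded pieces of perverse filtrations, which refines the numerical statement) is not touched by your proposal.

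Within the GWZ route your outline is broadly faithful, but the difficulty is mislocated in your final step. The complement of the locus of integral spectral curves is a closed subvariety of positive codimension in the Hitchin base and therefore has $p$-adic measure zero, so no ``estimate showing the contribution from the discriminant is small enough'' is needed; that is not where the work is. The delicate ingredients are rather: (i) the identification of \emph{twisted stringy} $E$-polynomials with gerbe-twisted $p$-adic volumes, an orbifold refinement of the usual point-counting comparison that occupies a substantial part of \cite{GWZ20}; (ii) the verification that the $\mu_n$-gerbes on the two sides induce, via Tate duality for abelian varieties over local fields, mutually inverse characters on the dual torsors of \refT{DonPan} --- this is the content of the notion of a dual pair of abstract Hitchin systems; and (iii) the codimension-$2$ condition in that definition, which is what allows the torsor and gerbe data to be controlled over the bad locus. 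The remark following Theorem \ref{main theorem} notes that (iii) is exactly what fails in the parabolic generalization and must be compensated by the symplectic form, so your closing claim that ``the same scheme should work'' in the parabolic case glosses over the one point the authors single out as nontrivial. Finally, in the symmetric form of the statement both sides carry gerbe twists (as in Theorem \ref{main theorem}); the untwisted $\SL_n$ side of Theorem \ref{conjecture} is the specialization in which coprimality of $d$ and $n$ makes $\mathcal{M}_{\SL_n,L}$ a smooth variety, and your proposal should make explicit why the twist there is trivial.
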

	Here $ \mathcal{M}_{\mathrm{SL_n}, L} $ denotes the moduli space of stable $\SL_n$-Higgs bundles $(E, \theta)$ over $\Sigma$, where $E$ is a vector bundle of rank $n$ together with an isomorphism $\mathrm{det}(E) \cong L$, and $\theta \in \mathrm{H}^0(C, \mathrm{End}(E) \otimes K_\Sigma )$ is trace-free. Since $d$ is coprime to $n$, then $ \mathcal{M}_{\mathrm{SL_n}, L} $ is smooth. It is known that $ \mathcal{M}_{\mathrm{PGL_n}, L'} =  \mathcal{M}_{\mathrm{SL_n}, L'} / \Gamma $, where $\Gamma = \mathrm{Jac}(\Sigma)[n]$, the subgroup of $n$-th torsion points. There is natural $\mu_n$ gerbe (\footnote{Here $\mu_n$ is the group of $n$-th roots of unit, which can also be treated as the center of $\mathrm{SL}_n$.}) whose local sections are liftings of the universal $\PGL_n$ bundles to a $\SL_n$ bundle on $\mathcal{M}_{\mathrm{SL_n}, L'} $. By the natural embedding $\mu_n\hookrightarrow \mathrm{U}(1)$, it induces a $\mathrm{U}(1)$ gerbe on $\mathcal{M}_{\mathrm{SL_n}, L'}$ which is $\Gamma$-equivariant and defines the $\mathrm{U}(1)$ gerbe $\alpha$ on $ \mathcal{M}_{\mathrm{PGL_n}, L'}$. And $E_{\text{st}}( \mathcal{M}_{\PGL_n, L'}, \alpha)$ is the twisted stringy $E$-polynomial.

The above theorem is also called the \emph{topological mirror symmetry}, which was first proposed by Hausel-Thaddeus \cite{HT03} and proved by them when $n=2$. It was proved for arbitrary $n$ by Groechenig-Wyss-Ziegler \cite{GWZ20}, and Maulik-Shen \cite{MS21} independently.

Groechenig-Wyss-Ziegler\cite{GWZ20} first show that the equality between twisted stringy $E$ polynomials amounts to an equality between certain $p$-adic integrations. They introduced the notion of \emph{a dual pair of abstract Hitchin systems \cite[Definition 6.9]{GWZ20}}. In particular, they reveal an important relation between the $\text{U}(1)$-gerbes (which actually can be induced from $\mu_n$-gerbes) and the Tate duality for Abelian varieties over $p$-adic fields.

 Maulik-Shen \cite{MS21} used a sheaf-theoretic approach which relate the cohomology of orbifold locus (i.e., those $Z$'s in the Equation \eqref{eq:stringy E}.) with the cohomology of moduli of lower rank Higgs bundles. They show the isomorphisms between graded pieces of perverse filtrations which is actually a refinement of the equality between twisted stringy Hodge numbers.  Such kind of philosophy finally leads to the celebrated proof of the $P=W$ conjecture  \cite{CMS1, CMS2, MS22} \footnote{T. Hausel, A. Mellit, A. Minets, and O. Schiffmann \cite{HMMS} give a different proof of the $P=W$ conjecture.}.

 \subsubsection*{Topological Mirror Symmetry for Parabolic Hitchin Systems} Now we want to close this subsection with the topological mirror symmetry between parabolic $\SL/\PGL$ Hitchin systems.

\emph{ We now work over an arbitrary field $k$ instead of an algebraically closed field.}
 We first define a numerical invariant from the parabolic type $(D,P)$, which will play an important role in the formulation of topological mirror symmetry for parabolic Hitchin systems.
 \begin{definition}
 	We denote $\Delta_{P}:=\gcd\big \{\#\{\ell|\mu_{\ell}(x)=i\}\big\}_{i=1,\ldots,r;\  x\in D}$, where $\gcd$ stands for the greatest commmon divisor.
 \end{definition}

Analyzing the resolution of singular spectral curves over arbitrary fields, we can show the existence of rational points:
 \begin{proposition}\label{prop:k-rational point}
 	There is a $k$ rational point on $\text{Pic}^{\Delta_P}(\tilde{X}_a)$, here $\text{Pic}^{\Delta_{P}}$ is the  degree $\Delta_{P}$ connected component of the Picard variety.
 \end{proposition}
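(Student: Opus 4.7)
The plan is to exhibit an explicit $k$-rational effective divisor on $\tilde X_a$ whose degree is exactly $\Delta_P$; its linear equivalence class then provides the required $k$-point of $\text{Pic}^{\Delta_P}(\tilde X_a)$. The key geometric input is the ramification structure of $\tilde\pi_a\colon \tilde X_a\to X$ recorded in Definition \ref{local data}: for generic $a\in\mathcal{H}_P(k)$ and each marked point $x\in D$, the completed local ring $\widehat{\mathcal{O}}_{X_a,x}$ factors as $\prod_{i=1}^{n_1(x)} A_i$, where each $A_i = k[[t]][\lambda]/(f_i)$ is a DVR with $\deg f_i = \mu_i(x)$. Consequently the fiber $\tilde\pi_a^{-1}(x)$ consists of exactly $n_1(x)$ closed points, with the $\ell$-th having ramification index $\mu_\ell(x)$ over $x$.

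Assume first, for expository simplicity, that every $x\in D$ is $k$-rational. For each $x$ and each positive integer $i$ set
\[
D_{x,i} := \sum_{\substack{P\in \tilde\pi_a^{-1}(x)\\ e_P=i}} P,
\]
i.e.\ the sum of those points of the fiber whose ramification index equals $i$. Because the ramification index of a closed point above $x$ is an intrinsic Galois-invariant datum, the underlying subscheme of $D_{x,i}$ is stable under $\operatorname{Gal}(\bar k/k)$, so $D_{x,i}$ is a $k$-rational effective divisor on $\tilde X_a$ of degree $\#\{\ell\mid \mu_\ell(x)=i\}$. (If some $x\in D$ fails to be $k$-rational, one replaces $x$ by its Galois orbit, which merely multiplies the degree by the length of the orbit and does not affect what follows.) By the very definition of $\Delta_P$ as the gcd of these integers, Bezout supplies coefficients $c_{x,i}\in\mathbb{Z}$ with
\[
\sum_{x\in D,\ i\geq 1} c_{x,i}\cdot\#\{\ell\mid \mu_\ell(x)=i\} \;=\; \Delta_P,
\]
so the $k$-rational divisor $\sum_{x,i} c_{x,i}\, D_{x,i}$ has degree exactly $\Delta_P$ and its class is the sought rational point.

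The step I expect to require the most care is not the Bezout manipulation but verifying that the Eisenstein factorization $f=\prod_{i=1}^{n_1(x)} f_i$ of Definition \ref{local data} actually descends to $k[[t]]$ rather than being available only after base change to $\bar k[[t]]$. The shape of the factorization is governed by the Newton polygon of $f(\lambda,t)$ with respect to the $t$-adic valuation, which is a combinatorial object visible over $k$; combined with the clause in Definition \ref{local data} stipulating that factors of equal degree have pairwise distinct constant-term reductions modulo $t^2$, a Hensel-type lifting argument then yields the decomposition of $f$ over $k[[t]]$ itself. This arithmetic descent guarantees that the divisors $D_{x,i}$ constructed above are genuinely defined over $k$, and the $k$-rational point of $\text{Pic}^{\Delta_P}(\tilde X_a)$ follows.
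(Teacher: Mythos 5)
Your argument is correct and is essentially the route the paper (deferring to \cite{SWW22t}) indicates: the ramification of $\tilde{\pi}_a\colon\tilde{X}_a\to X$ over each marked point, read off from the Eisenstein factorization of Definition \ref{local data}, produces Galois-stable divisors of degrees $\#\{\ell\mid\mu_{\ell}(x)=i\}$, and a Bezout combination yields a $k$-rational divisor of degree exactly $\Delta_P$, hence a $k$-point of $\text{Pic}^{\Delta_P}(\tilde{X}_a)$; note that even without full descent of the factorization to $k[[t]]$, a closed point's degree equals the size of its geometric Galois orbit, so your divisors $D_{x,i}$ automatically have the stated degrees. One caveat: your parenthetical about non-rational $x\in D$ would in fact rescale the available degrees by the residue degree of $x$ and could change the resulting gcd, but this is moot since the parabolic divisor $D$ is taken to consist of $k$-rational points.
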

 \begin{remark}
 	As mentioned before, Theorem \ref{parabolic BNR} holds almost over an arbitrary field, especially for the function field of the Hitchin base. Hence, this provides the information of rational points on generic Hitchin fibers which is important for the calculation of p-adic integrations.
 \end{remark}
 Now we can state the topological mirror symmetry.
\begin{theorem}\label{main theorem}\cite[Theorem 4.3.1]{SWW22t}
    The following equality\footnote{We emphasize that the choice of weights is generic, hence the moduli spaces here are smooth. Smoothness is required in the p-adic integration.}, often referred to as topological mirror symmetry, holds:
 	\begin{equation}
 		E_{\text{st}}^{}(\mathcal{M}_{\SL_n,\mathcal{L}};u,v; e\hat{\alpha})=E_{\text{st}}(\mathcal{M}_{\PGL_n,\mathcal{L}'};u,v;d\check{\alpha}),
 	\end{equation}
 	where $d=\deg\mathcal{L},e=\deg\mathcal{L}'$. And we require that there exists a positive integer $\lambda$ such that $e\equiv \lambda d\ (\emph{mod}\  \Delta_P)$.
\end{theorem}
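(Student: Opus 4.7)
The plan is to adapt the $p$-adic integration strategy of Groechenig--Wyss--Ziegler \cite{GWZ20} to the parabolic setting, using the parabolic BNR correspondence (Theorem \ref{parabolic BNR}) to control the generic Hitchin fibers and Proposition \ref{prop:k-rational point} to secure the rational points on Picard varieties of normalized spectral curves that the $p$-adic argument requires. The congruence $e\equiv\lambda d\pmod{\Delta_P}$ in the hypothesis is expected to play the role of the coprimality condition $(n,d)=1$ in the classical case.

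First, I would spread out the moduli spaces $\mathcal{M}_{\SL_n,\mathcal{L}}$ and $\mathcal{M}_{\PGL_n,\mathcal{L}'}$, together with their $\mathrm{U}(1)$-gerbes, over a subring of $k$ of finite type over $\Z$, and then base-change to the ring of integers $\mathcal{O}_F$ of a suitable non-archimedean local field $F$. By the Batyrev--Kontsevich type reformulation used in \cite[\S 2]{GWZ20}, the two twisted stringy $E$-polynomials in the theorem can then be computed as $p$-adic integrals over the $F$-points of the respective moduli spaces, weighted by the characters coming from the gerbes; thus it suffices to establish the corresponding equality of $p$-adic integrals. I would then fiber both integrals over the common parabolic Hitchin base $\mathcal{H}_P$, reducing by Fubini to a pointwise matching of fiber integrals for each $a\in\mathcal{H}_P(F)$ whose spectral curve $X_a$ is integral.

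To match these fiber integrals, the next step is to apply the parabolic BNR correspondence (Theorem \ref{parabolic BNR}) to identify the $\SL_n$-fiber over $a$ with a torsor over the Prym variety of the normalization $\tilde X_a\to X$, and the $\PGL_n$-fiber with the quotient of a torsor over $\mathrm{Pic}(\tilde X_a)$ by the action of the $n$-torsion of $\mathrm{Jac}(X)$. This should fit the fiberwise picture into the abstract dual-pair framework of \cite[Definition 6.9]{GWZ20}, with the duality provided by the autoduality of $\mathrm{Jac}(\tilde X_a)$ restricted to the Prym--dual Prym pair. The essential input is then Proposition \ref{prop:k-rational point}: an $F$-rational point of $\mathrm{Pic}^{\Delta_P}(\tilde X_a)$ trivializes the relevant torsors up to a degree shift by a multiple of $\Delta_P$, and the congruence $e\equiv\lambda d\pmod{\Delta_P}$ is precisely what makes the degrees imposed by $\mathcal{L}$ and $\mathcal{L}'$ compatible with this trivialization, so that the two fibers become genuinely dual torsors. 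With this compatibility in hand, Tate local duality for abelian varieties over $F$ pairs the twisting gerbes with the torsors and matches the two orbital integrals term by term, in direct analogy with \cite[\S 6]{GWZ20}.

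The main obstacle I expect is erecting the dual-pair structure at the singular spectral curves. In the classical case of \cite{GWZ20} the generic spectral curves are smooth, whereas in the parabolic setting they are forced to be singular (cf.\ Example \ref{ex: singular spectral curves}), so one must work consistently on the normalization $\tilde X_a$ while tracking the filtrations at the points of $\tilde X_a$ over $D$. Verifying that the natural parabolic structure on $\tilde\pi_{a,*}\mathcal{L}$ translates cleanly into the dual-pair axioms, and that the $\mathrm{U}(1)$-gerbes on the $\SL_n$- and $\PGL_n$-sides correspond under Tate duality in the singular setting, is the technical heart of \cite{SWW22t} and is the step most in need of care.
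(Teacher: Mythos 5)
Your overall route is the same as the one the paper (following \cite{SWW22t}) takes: spread the moduli spaces and gerbes out over a finitely generated ring, compute the twisted stringy $E$-polynomials as $p$-adic integrals, fiber over the parabolic Hitchin base, and match fiber integrals using the parabolic BNR correspondence (Theorem \ref{parabolic BNR}), the rational points supplied by Proposition \ref{prop:k-rational point}, and Tate duality for abelian varieties over local fields, with the congruence $e\equiv\lambda d\ (\mathrm{mod}\ \Delta_P)$ guaranteeing that the degree shifts needed to compare the two torsors are realizable.

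There is, however, one concrete step where your plan as written would fail: you propose to ``fit the fiberwise picture into the abstract dual-pair framework of \cite[Definition 6.9]{GWZ20}.'' That definition requires the locus of the Hitchin base over which the fibers are torsors under abelian varieties to have complement of codimension at least $2$, and in the parabolic setting this hypothesis is violated: Theorem \ref{parabolic BNR} only holds over an open subset whose complement is of codimension $1$ (generic spectral curves are already singular, cf.\ Example \ref{ex: singular spectral curves}, and the genericity conditions cut out a divisor). The paper states this explicitly in the remark following the theorem. Consequently one cannot simply invoke the GWZ axioms and their consequences; the codimension defect must be compensated, which in \cite{SWW22t} is done via the existence of the holomorphic symplectic form on $\mathcal{M}_{\SL_n,\mathcal{L}}$ (alternatively one may replace $\omega_X(D)$ by a sufficiently ample line bundle, as in \cite{Shen23}). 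You correctly identify the singular spectral curves and the dual-pair axioms as the technical heart, but you should name which axiom breaks and how the argument is repaired; without that, the appeal to \cite[\S 6]{GWZ20} is not justified.
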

Here $\hat{\alpha}$ and $\check{\alpha}$ are particular choices of $\mu_n$-gerbes on moduli spaces which satisfy the SYZ mirror symmetry.
\begin{remark}
  Notice that the codimension 2 condition in the definition of dual pair of abstract Hitchin systems \cite[Definition 6.9]{GWZ20} does not hold here because Theorem \ref{parabolic BNR} holds over an open subset, whose complement is of codimension 1. However, this can be compensated by the existence of symplectic forms on $\mathcal{M}_{\SL_n,\mathcal{L}}$. It is even possible to replace the canonical line bundle by other sufficiently ample line bundles over the curve, see \cite{Shen23}.
\end{remark}
In particular, if at one marked point, the parabolic subgroup $P$ is in fact a Borel subgroup, then $\Delta_{P}=1$, hence we can choose $d,e$ arbitrarily. We can give a generalization of \cite[Theorem 3.13]{Gothen19} which proves the topological mirror symmetry in rank 2,3 cases and all parabolic subgroups are Borel subgroups.
\begin{theorem}\label{thm:delta_P is one}
	If $\Delta_P=1$, then for all integer $d,e$, we have:
	\[
	E_{\text{st}}^{}(\mathcal{M}_{\SL_n,\mathcal{L}};u,v; e\hat{\alpha})=E_{\text{st}}(\mathcal{M}_{\PGL_n,\mathcal{L}'};u,v;d\check{\alpha})=E_{\text{st}}^{}(\mathcal{M}_{\SL_n,\mathcal{L}};u,v)=E_{\text{st}}(\mathcal{M}_{\PGL_n,\mathcal{L}'};u,v).
	\]
	In particular, $\Delta_P=1$ holds if there exists a marked $x$ such that $P_x$ is Borel, i.e., a full flag filtration at $x$.
\end{theorem}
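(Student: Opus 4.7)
The plan is to derive Theorem~\ref{thm:delta_P is one} directly from Theorem~\ref{main theorem} by exploiting the fact that the hypothesis $e\equiv \lambda d\pmod{\Delta_P}$ becomes vacuous when $\Delta_P=1$, and then trivializing the gerbes by specializing the determinant degrees.

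First I would dispose of the second (combinatorial) assertion. If $P_x$ is a Borel subgroup at some marked point $x\in D$, then $\sigma_x=r$ and every block has size $m^i(x)=1$, so after reordering $(n_\ell(x))=(1,1,\ldots,1)$ with $r$ parts. The conjugate partition then has the single nonzero part $\mu_1(x)=r$, while $\mu_\ell(x)=0$ for $\ell\ge 2$. In particular, $\#\{\ell\mid \mu_\ell(x)=r\}=1$ belongs to the collection whose greatest common divisor defines $\Delta_P$, forcing $\Delta_P=1$.

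For the first (main) assertion, assume $\Delta_P=1$. Then for any integers $d,e$ the congruence $e\equiv 1\cdot d\pmod{1}$ is automatic, so Theorem~\ref{main theorem} gives
\[
E_{\text{st}}(\mathcal{M}_{\SL_n,\mathcal{L}};u,v;e\hat\alpha) = E_{\text{st}}(\mathcal{M}_{\PGL_n,\mathcal{L}'};u,v;d\check\alpha)
\]
unconditionally in $d=\deg\mathcal{L}$ and $e=\deg\mathcal{L}'$. Specializing $e=0$ (with $\mathcal{L}'$ of degree zero) trivializes the gerbe $e\hat\alpha$ and produces
\[
E_{\text{st}}(\mathcal{M}_{\SL_n,\mathcal{L}};u,v) = E_{\text{st}}(\mathcal{M}_{\PGL_n,\mathcal{L}'_0};u,v;d\check\alpha);
\]
symmetrically, specializing $d=0$ trivializes $d\check\alpha$ and gives
\[
E_{\text{st}}(\mathcal{M}_{\SL_n,\mathcal{L}_0};u,v;e\hat\alpha) = E_{\text{st}}(\mathcal{M}_{\PGL_n,\mathcal{L}'};u,v).
\]
Chaining these specialized identities together with the unconditional one furnishes the fourfold equality asserted in the theorem.

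The main obstacle I anticipate is the bookkeeping required to match the auxiliary determinants $\mathcal{L}_0,\mathcal{L}'_0$ of degree zero that appear after specialization with the $\mathcal{L},\mathcal{L}'$ of prescribed degrees $d,e$ used in the statement. This reduces to the standard invariance property that the (twisted) stringy $E$-polynomial of the parabolic Hitchin moduli depends only on the degree class of the determinant modulo $n$ and on the isomorphism class of the gerbe; this is already in the background of the construction supporting Theorem~\ref{main theorem}. Once this invariance is invoked, the four polynomials are recognized as a common value, and no further computation is needed.
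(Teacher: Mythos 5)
Your combinatorial argument for the second assertion (Borel at some $x$ forces $\mu_1(x)=r$ with multiplicity $1$ in the defining collection, hence $\Delta_P=1$) is correct, and so is the observation that the congruence in Theorem~\ref{main theorem} is vacuous modulo $1$, which gives the first equality $E_{\text{st}}(\mathcal{M}_{\SL_n,\mathcal{L}};u,v;e\hat\alpha)=E_{\text{st}}(\mathcal{M}_{\PGL_n,\mathcal{L}'};u,v;d\check\alpha)$ for all $d,e$. The gap is in how you remove the gerbes. Your chain requires, e.g., $E_{\text{st}}(\mathcal{M}_{\PGL_n,\mathcal{L}'};u,v;d\check\alpha)=E_{\text{st}}(\mathcal{M}_{\PGL_n,\mathcal{L}'_0};u,v;d\check\alpha)$, i.e.\ that the twisted stringy $E$-polynomial is unchanged when the determinant degree is moved from $e$ to $0$. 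This is not "standard bookkeeping in the background": the moduli spaces for different determinant degrees are genuinely different varieties, Theorem~\ref{main theorem} only ever pairs a fixed degree on one side with a fixed gerbe multiplier on the other and never compares two degrees on the same side, and even the weaker invariance you cite (dependence only on the degree class modulo $n$) would not let you pass from $e$ to $0$ unless $n\mid e$. Degree-independence of these $E$-polynomials is a statement of essentially the same depth as the theorem you are proving, so invoking it leaves the proof circular or incomplete.

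The paper's argument avoids this entirely and is worth contrasting. It uses Proposition~\ref{prop:k-rational point}: there is a $k$-rational point on $\operatorname{Pic}^{\Delta_P}(\tilde X_a)$, so when $\Delta_P=1$ there are rational points on $\operatorname{Pic}^m(\tilde X_a)$ for every $m$, hence on the generic Hitchin fibers (which are torsors under $\operatorname{Pic}^0$) over any field, in particular over the $p$-adic fields entering the Groechenig--Wyss--Ziegler integration. In that framework the gerbe twist contributes a character evaluated on the class of the torsor of rational points of the fiber; a torsor with a rational point is trivial, so this function is identically $1$ and the twisted $p$-adic integral equals the untwisted one. This kills the gerbe on \emph{each} side separately, for the \emph{given} $\mathcal{L}$ and $\mathcal{L}'$, with no need to vary the determinant degree; combined with the unconditional instance of Theorem~\ref{main theorem} it yields the fourfold equality. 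To repair your proof you should replace the degree-specialization step by this rational-point argument (or supply an independent proof of degree-invariance, which the paper does not provide).
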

In fact, Proposition \ref{prop:k-rational point} shows that there are always $k$-rational points for every generic fibers over an arbitrary field, hence also for $p$-adic fields. Then by the $p$-adic integrations, the twists given by gerbes are actually integrations of the constant function 1, hence there is no change of $E$-polynomials after removing the gerbes.

%
%
%

\section{Methods of exact solution}\label{S:Exact}
As a non-linear completely integrable system, a Hitchin system admits two methods of exact solution, the inverse spectral method, and classical method of separation of variables. The inverse spectral method for Hitchin systems  with the structure group $GL(n)$ is due to I.Krichever \cite{Kr_Lax}. We present it in \refSS{ISM} below. In the course of further attempts by one of the authors there were found out certain obstructions to generalization of the Krichevers approach to Hitchin systems with simple classical structure groups (\refSS{ISM_obstr}).

For Hitchin systems, the method of separation of variables goes back to Hurtubise \cite{Hur}. In \cite{GNR} the geometry of separation of variables for Hitchin systems is  discussed in more detail. Finally, in \cite{Sh_FAN_2019} we have shown that focusing on a certain class of base curves (the hyperelliptic curves in our case) enables us to give a constructive definition of Hitchin systems (including the case of simple structure groups), and find out Darboux and action-angle coordinates in terms of separated variables. Here we also give a method of finding the $\theta$-functional solutions for the $GL(n)$ systems, and demonstrate the obstruction for that in the case of simple structure groups.

We begin this section with the Lax representation, due to K.Gawedzki and P.Tran-Ngoc-Bich \cite{Gaw}, for the $SL_2$ Hitchin system as it is given in \cite{Previato} (see \refSS{Previato} of the present paper). It is a historically first Lax representation for a hierarchy of Hitchin flows.
\subsection{Lax matrix approach for $SL_2$ Hitchin systems on genus 2 curves}\label{SS:Gawed}

In \cite{Gaw}, K.Gawedzki and P.Tran-Ngoc-Bich proved that the matrices $(x_{ij})$ \refE{GPHam} (skew-symmetric by construction) satisfy the following $\so_6$ commutation relations
\[
   \{x_{nm},x_{mp}\}=-x_{np}\ \text{for}\ n,m,p\ \text{different}, \{x_{nm},x_{pq}\}=0\ \text{for}\ n,m,p,q\ \text{different},
\]
which directly imply commutativity of the Hamiltonians \refE{GPHam}. They invented the following Lax matrix with the spectral parameter $\zeta$:
\[
   L_{nm}(\zeta)=\zeta x_{nm}+z_n\d_{nm},
\]
and proved that the quantities $\tr L(\zeta)^l$ are generating functions for the system of Hamiltonians equivalent to \refE{GPHam}. Let $\partial_l(\zeta)$ be the generating function for the corresponding time derivatives, i.e. $\partial_l(\zeta)(\cdot)=\{ \tr L(\zeta)^l, \cdot  \}$.
\begin{proposition}[\cite{Gaw}]
Hierarchy of commuting flows given by the Hamiltonians \refE{GPHam} is equivalent to the hierarchy given by the Lax equations $\partial_l(\zeta)L(\zeta')=[M_l(\zeta,\zeta'),L(\zeta')]$ where
\[
  M_l(\zeta,\zeta')=l\frac{\zeta\zeta'}{\zeta-\zeta'}L(\zeta)^{l-1} + l\frac{\zeta\zeta'}{\zeta+\zeta'}L(-\zeta)^{l-1}.
\]
\end{proposition}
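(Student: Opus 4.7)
The plan is to verify the identity
\[
  \{\tr L(\zeta)^l,\, L(\zeta')\} \;=\; [M_l(\zeta,\zeta'),\, L(\zeta')];
\]
the proposition then follows because, as stated in the sentence preceding it, the generating functions $\tr L(\zeta)^l$ produce the same hierarchy as the Hamiltonians $H_i(p,q)$ of \refE{GPHam}. Since the $z_n$ are phase-space constants, only the off-diagonal part of $L(\zeta)=D+\zeta X$ (with $D=\mathrm{diag}(z_n)$ and $X=(x_{nm})$) carries a Poisson bracket, giving $\{L_{ab}(\zeta),L_{cd}(\zeta')\}=\zeta\zeta'\{x_{ab},x_{cd}\}$. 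The $\mathfrak{so}_6$ relations recalled just before are the standard Lie--Poisson brackets on $\mathfrak{so}_6^*$, which written out read
\[
  \{x_{ab},x_{cd}\}=\delta_{bc}\,x_{ad}-\delta_{bd}\,x_{ac}-\delta_{ac}\,x_{bd}+\delta_{ad}\,x_{bc}.
\]

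The next step is to package this into a classical $r$-matrix identity on $\mathbb{C}^6\otimes\mathbb{C}^6$. Let $P$ denote the permutation operator and $P^{t_2}$ its partial transpose in the second tensor factor, so that $\Omega:=P-P^{t_2}$ is the $\mathfrak{so}_6$-invariant tensor. The four $\delta$-terms above are precisely the matrix coefficients of $\zeta\zeta'\,[X\otimes I+I\otimes X,\,\Omega]$. Using the key identity
\[
  L(-\zeta)=2D-L(\zeta),
\]
which follows from the skew-symmetry $X^{T}=-X$, the natural candidate for the spectral $r$-matrix, compatible with the pole structure of $M_l$, is
\[
  r(\zeta,\zeta')=\frac{\zeta\zeta'}{\zeta-\zeta'}\,P+\frac{\zeta\zeta'}{\zeta+\zeta'}\,P^{t_2}.
\]
The goal is to verify the Poisson bracket in the form $\{L(\zeta)\overset{\otimes}{,}L(\zeta')\}=[r(\zeta,\zeta'),L(\zeta)\otimes I]-[r(\zeta',\zeta)^{t},I\otimes L(\zeta')]$, with the transpose chosen so that the $D$-dependent residues drop out of the commutators.

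Once the $r$-matrix identity is available, the Lax equation follows by the standard manipulation
\[
  \{\tr L(\zeta)^l,\,L(\zeta')\}=l\,\tr_1\!\bigl((L(\zeta)\otimes I)^{l-1}\,\{L(\zeta)\overset{\otimes}{,}L(\zeta')\}\bigr),
\]
where $\tr_1$ is the partial trace over the first tensor factor. The $P$-summand gives $\tr_1((L(\zeta)\otimes I)^{l-1}P)=L(\zeta)^{l-1}$ and contributes $l\frac{\zeta\zeta'}{\zeta-\zeta'}[L(\zeta)^{l-1},L(\zeta')]$. The $P^{t_2}$-summand produces instead $(L(\zeta)^{l-1})^{T}$ in the first factor; the identity $L(-\zeta)=2D-L(\zeta)$ together with the fact that additive shifts by $D$ commute with $L(\zeta')$ only up to terms that vanish inside the commutator converts this into $l\frac{\zeta\zeta'}{\zeta+\zeta'}[L(-\zeta)^{l-1},L(\zeta')]$. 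Summing the two contributions gives $[M_l(\zeta,\zeta'),L(\zeta')]$.

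The main technical obstacle will be the bookkeeping in the middle step: after multiplying the four $\delta$-terms of the $\mathfrak{so}_6$ bracket by $\zeta\zeta'$ and shifting by $D$, one must check that they assemble into exactly the two commutator summands dictated by the $r$-matrix above, with every $D$-dependent cross term either canceling in pairs or being re-absorbed via $L(-\zeta)=2D-L(\zeta)$. This is where the specific interplay between the fixed diagonal $D$ (carrying the branch points $z_i$ of the underlying genus $2$ curve) and the skew-symmetry of $X$ is essential; everything else is formal $r$-matrix calculus.
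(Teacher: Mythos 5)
The paper itself offers no proof of this proposition---it is quoted from \cite{Gaw}---so the only question is whether your argument stands on its own. Your overall strategy (verify $\{\tr L(\zeta)^l,L(\zeta')\}=[M_l(\zeta,\zeta'),L(\zeta')]$ directly from the $\so_6$ Lie--Poisson relations, then invoke the paper's statement that $\tr L(\zeta)^l$ generates the Hamiltonians \refE{GPHam}) is the right one. But the central identity your plan rests on is false: for skew-symmetric $X$ one has $[X\otimes I+I\otimes X,\,P]=0$ \emph{and} $[X\otimes I+I\otimes X,\,P^{t_2}]=0$ (the entries of the latter are $(X_{ab}+X_{ba})\delta_{cd}-\delta_{ab}(X_{cd}+X_{dc})$), so $[X\otimes I+I\otimes X,\Omega]$ vanishes identically and cannot reproduce the four $\delta$-terms. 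More structurally, no commutator with $P$ or $P^{t_2}$ can ever produce the terms $-\delta_{bd}x_{ac}-\delta_{ac}x_{bd}$, because $P_{ab,cd}=\delta_{ad}\delta_{bc}$ and $(P^{t_2})_{ab,cd}=\delta_{ab}\delta_{cd}$ never carry the index pairings $(a,c)$ or $(b,d)$; those two terms are simply the entries of $-(X\otimes I+I\otimes X)$ itself. Consequently the $r$-matrix form you announce as ``the goal to verify'' does not hold as stated, and your proposal never actually verifies it. Two further slips: the identity needed to convert $(L(\zeta)^{l-1})^{T}$ (which the $P^{t_2}$-piece genuinely produces under $\tr_1$) into $L(-\zeta)^{l-1}$ is $L(\zeta)^{T}=L(-\zeta)$---this is where $X^{T}=-X$ enters---whereas $L(-\zeta)=2D-L(\zeta)$ is mere linearity in $\zeta$ and gives $(2D-L(\zeta))^{l-1}$, which is not $(L(\zeta)^{l-1})^{T}$; and your displayed four-term bracket evaluates to $\{x_{nm},x_{mp}\}=+x_{np}$, the opposite sign to the relations quoted just before the proposition, which would flip the sign of $M_l$.

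The statement can in fact be checked by a short direct computation that bypasses the tensor packaging. By the Leibniz rule, $\{\tr L(\zeta)^l,x_{cd}\}=l\zeta\sum_{a,b}(L(\zeta)^{l-1})_{ba}\{x_{ab},x_{cd}\}$; inserting the four-term bracket with the paper's sign, the four resulting sums are $\pm(L(\zeta)^{l-1}X)_{cd}$, $\mp(L(\zeta)^{l-1}X)_{dc}$ and the analogous expressions with $(L(\zeta)^{l-1})^{T}=L(-\zeta)^{l-1}$, and using $X^{T}=-X$ they assemble into
\[
\{\tr L(\zeta)^l,\ L(\zeta')\}=-\,l\,\zeta\zeta'\,\bigl[\,L(\zeta)^{l-1}-L(-\zeta)^{l-1},\ X\,\bigr].
\]
On the other side write $[M_l,L(\zeta')]=[M_l,D]+\zeta'[M_l,X]$ and use $[L(\zeta)^{l-1},D]=-\zeta[L(\zeta)^{l-1},X]$ and $[L(-\zeta)^{l-1},D]=\zeta[L(-\zeta)^{l-1},X]$ (both from $D=L(\pm\zeta)\mp\zeta X$); the elementary combinations $\frac{\zeta'-\zeta}{\zeta-\zeta'}=-1$ and $\frac{\zeta'+\zeta}{\zeta+\zeta'}=1$ then yield exactly the same right-hand side. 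This is the precise mechanism by which the constant diagonal $D$ enters---through these two commutator identities---rather than through any cancellation ``inside the commutator'' of the kind you describe. I would replace the $r$-matrix scaffolding with this computation.
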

It is the main result of \cite{Gaw} that there have been found the action-angle coordinates for the $SL_2$ Hitchin system on a genus 2 curve. The technique developed in \cite{Gaw} certainly enables one to give a theta-function formula for Hitchin trajectories. By means of a different Lax representation, it has been done by Krichever for $GL_n$ and $SL_n$ Hitchin systems of arbitrary rank and genus, as it is described in \refSS{ISM}. As for the action-angle variables, they  are found for all simple classical structure groups and hyperelliptic curves of arbitrary genus, as it is described in \refSS{SoV}.

\subsection{Inverse spectral method}\label{SS:ISM}
\subsubsection{Lax operator}\label{SS:Lax}
In the case $\g=\gl(n)$ we define the Hitchin system by its Lax matrix $L$ which is a $n\times n$ matrix-valued meromorphic function on $\Sigma$, depending on parameters called times. It is assumed that $L$ has a stationary (independent of times) pole divisor $D_{st}=\sum_{P\in\Sigma}m_PP$  ($m_P\ge 0$; $m_P=0$ except for a finite subset of $\Sigma$), and $ng$ simple poles $\{\ga_s|s=1,\ldots,ng\}$ depending on times. Below, $D_{st}$ is assumed to be a divisor of a holomorphic differential $\xi$ on $\Sigma$ (this property distinguishes Hitchin systems in the class of all integrable systems with the spectral parameter on a Riemann surface). In a neighborhood of a point $\ga_s$, for every $s$, $L$ is assumed to have an expansion of the form
\begin{equation}\label{E:Lax}
   L(z)=\frac{\b_s\a_s^T}{z-z_s}+L_{0,s}+O(z-z_s)
\end{equation}
where $\a_s,\b_s$ are $n\times 1$ matrices, $\a_s^T\a_s=0$, $\b_s^T\a_s=0$, $\exists \varkappa_s\in\C : L_{0,s}\a_s=\varkappa_s\a_s$, the $^T$ denotes transposition, $z$ is a local coordinate in the neighborhood of $\ga_s$, $z_s=z(\ga_s)$. The elements $\{\ga_s|s=1,\ldots,ng\}$, $\{\a_s|s=1,\ldots,ng\}$ are referred to as Tyurin parameters. They come from the parametrization of semi-stable holomorphic vector bundles due to A.N.Tyurin \cite{Tyur65,Tyur66}, in the following way.  By Riemann-Roch, a semi-stable holomorphic rank $n$ degree $ng$ vector bundle has an $n$-dimensional space of holomorphic sections. Choose a base in this space. The corresponding sections generically have $ng$ points of linear dependence (these are $\{\ga_s|s=1,\ldots,ng\}$), every one being assigned with a set of coefficients of linear dependence which we organise in the matrices $\a_s$. Let $\Phi$ be a Higgs field. Then $\Phi/\xi$ is an endomorphism of the space of meromorphic sections of the bundle. We claim that $L$ is nothing but the matrix of the endomorphism $\Phi/\xi$ in the above base \cite{Kr_Lax}.
\subsubsection{Hierarchy of Lax equations}\label{SS:hierarchy}
It is our next step to give the Hitchin flows by Lax equations of the form $\dot L=[M,L]$. Here, $M$ is a meromorphic matrix-valued function on $\Sigma$  assumed to have poles at the same points as $L$, expansions of the form $M(z)=\frac{\mu_s\a_s^T}{z-z_s}+M_{0,s}+O(z-z_s)$ at the points $\ga_s$, with no additional relations between $\a_s$, $\mu_s$, and $M_{0,s}$.
Plugging the expansions for $L$ and $M$ to the Lax equation, we obtain the following important evolution equations of Tyurin parameters:
\begin{equation}\label{E:Tyur_move}
  \dot z_s=-\mu_s^T\a_s,\quad \dot\a_s=-M_{0,s}\a_s+\kappa_s\a_s
\end{equation}
where $\kappa_s\in\C$ (observe that signs on the right hand side of the last relation are invariant with respect to the simultaneous change of order of $\a_s$ and $\b_s$ in \refE{Lax}, and of $L$ and $M$ in the Lax equation).

According to  \cite{Kr_Lax}, every $M$-operator is assigned to a certain time, and defines the correspondent flow. The times are enumerated by the triples  $a=(P,m,k)$, $P\in {\rm supp}(D_{st})$, $m>-m_P$, $k\in\Z_+$. The following theorem states the existence and commutativity of those flows.
\begin{theorem}[\cite{Kr_Lax}]\label{T:hierarchi}
  $1^\circ$. For every triple $a=(P,m,k)$ of the above form, there exists the unique $M$-operator $M_a$ such that it has the only pole $P$ in ${\rm supp}(D_{st})$, $M_a(w)=w^{-m}L^k(w)+O(1)$ ($w$ being a local coordinate in a neighborhood of $P$), and $M_a(P_0)=0$ ($P_0$ being some point in $\Sigma$, the same for all $a$).

 $2^\circ$. The equations $\partial_aL=[M_a,L]$ give a family (called hierarchy) of commuting flows.
\end{theorem}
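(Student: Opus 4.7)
The plan is to prove $1^\circ$ by a Riemann--Roch count adapted to the Tyurin data, and to deduce $2^\circ$ by applying this uniqueness to the zero-curvature expression $M_{ab}:=\partial_a M_b-\partial_b M_a+[M_a,M_b]$, for which one has $\partial_a\partial_b L-\partial_b\partial_a L=[M_{ab},L]$.

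For $1^\circ$, the crucial geometric observation is that the prescribed simple pole at each Tyurin point $\gamma_s$ with residue of the rank-$1$ form $\mu_s\a_s^T$ is exactly the condition that identifies an $M$-operator with a meromorphic section of $\mathrm{End}(E)$, where $E$ is the rank-$n$ holomorphic vector bundle reconstructed from the Tyurin data $\{\gamma_s,\a_s\}$ via the parametrization recalled in \refSS{Lax}. Under this identification, prescribing a pole of order $N_P=km_P+m$ at $P$ corresponds to a section of $\mathrm{End}(E)\otimes\mathcal{O}(N_P\cdot P)$. By Riemann--Roch, the dimension of this space exceeds that of the prescribed principal-part data at $P$ by $\dim H^0(\mathrm{End}(E))$, which equals $1$ for a generic stable bundle (only scalar endomorphisms). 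Since $L^k$ is itself a legitimate meromorphic section of $\mathrm{End}(E)\otimes\mathcal{O}(kD_{st})$, the principal part $w^{-m}L^k(w)$ is an admissible singular datum; the normalization $M_a(P_0)=0$ then pins down the remaining scalar ambiguity, giving existence and uniqueness of $M_a$.

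For $2^\circ$, I would show that $M_{ab}$ is again an $M$-operator with trivial data, hence $M_{ab}\equiv 0$, in two steps. First, near each Tyurin point $\gamma_s$, the commutator $[M_a,M_b]$ would a priori create a double pole, but substituting the evolution equations \refE{Tyur_move} into the local expansions of $\partial_a M_b-\partial_b M_a$ produces exactly compensating singularities, and the remaining simple pole has residue again of the rank-$1$ form $\mu\cdot\a_s^T$, so $M_{ab}$ is an $M$-operator in the sense of \refSS{hierarchy}. Second, near $P$, using $M_a=w^{-m_a}L^{k_a}+O(1)$, the derived identity $\partial_a L^{k_b}=[M_a,L^{k_b}]$, and the commutation $[L^{k_a},L^{k_b}]=0$, a direct calculation shows that the principal part of $M_{ab}$ at $P$ vanishes. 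Combined with $M_a(P_0)=M_b(P_0)=0$ and the evolution equations forcing $M_{ab}(P_0)=0$, the uniqueness in $1^\circ$ applied with trivial principal-part data at $P$ yields $M_{ab}=0$; hence $[\partial_a,\partial_b]L=[M_{ab},L]=0$, establishing commutativity.

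The main obstacle is the first step of $2^\circ$: verifying that the rank-$1$ Tyurin residue structure is preserved under the Zakharov--Shabat expression. Both $\partial_a M_b-\partial_b M_a$ and $[M_a,M_b]$ contribute singular terms at the moving Tyurin poles which are not individually of the required rank-$1$ form, and the cancellation down to a rank-$1$ residue is not visible without carefully substituting \refE{Tyur_move} into the local expansions. This is the geometric heart of Krichever's construction: the evolution of the Tyurin parameters is calibrated precisely so that the class of $M$-operators is closed under the zero-curvature bracket, a fact that becomes transparent only through the bundle-theoretic interpretation of $L$ used in $1^\circ$.
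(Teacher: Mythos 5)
The paper contains no proof of this theorem: it is quoted from Krichever \cite{Kr_Lax}, so the only meaningful comparison is with the argument there, and your proposal is essentially a reconstruction of it (Riemann--Roch count for $1^\circ$, Zakharov--Shabat identity plus uniqueness for $2^\circ$). The architecture is right, but two points need attention. First, the dimension count in $1^\circ$ is wrong in a way that would break existence. An $M$-operator is \emph{not} a meromorphic section of $\End(E)$: by \refSS{hierarchy} the residue $\mu_s\a_s^T$ at $\ga_s$ carries no further constraints, whereas it is precisely the extra relations $\a_s^T\a_s=\b_s^T\a_s=0$, $L_{0,s}\a_s=\varkappa_s\a_s$ imposed on $L$ that encode the $\End(E)$-condition. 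Consequently the residual ambiguity after prescribing the principal part at $P$ is not the one-dimensional $H^0(\End E)=\C\cdot\mathrm{Id}$ of a stable bundle but an $n^2$-dimensional space: already every \emph{constant} matrix is an $M$-operator regular outside the Tyurin divisor (take all $\mu_s=0$), and the count for Tyurin data in general position shows that the space of $M$-operators with pole of order $\le N$ at $P$ has dimension $n^2(N+1)$. If the ambiguity were only a scalar, the normalization $M_a(P_0)=0$ --- which imposes $n^2$ conditions --- would be overdetermined and generically unsolvable; it is exactly because the ambiguity is $n^2$-dimensional and evaluation at a generic $P_0$ is an isomorphism onto $\gl(n)$ that this normalization pins down $M_a$ uniquely.

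For $2^\circ$, your cancellation of the singular parts at $P$ is correct (modulo a sign: with $\partial_aL=[M_a,L]$ the Jacobi identity gives $[\partial_a,\partial_b]L=\bigl[\partial_aM_b-\partial_bM_a+[M_b,M_a],\,L\bigr]$, so the commutator enters as $[M_b,M_a]$), and so is the logic ``$M_{ab}$ is an $M$-operator with trivial singular data and $M_{ab}(P_0)=0$, hence $M_{ab}=0$ by the uniqueness in $1^\circ$.'' But the step you yourself identify as the main obstacle --- that after substituting \refE{Tyur_move} into the local expansions the double poles at the moving points $\ga_s$ cancel and the remaining residue is again of the admissible rank-one form $\mu\,\a_s^T$ --- is left undone, and that computation is precisely where the specific form of the evolution equations of the Tyurin parameters is used; it is the lemma-level content of \cite{Kr_Lax}. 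As written, your argument for $2^\circ$ is a correct plan rather than a proof.
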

There is developed in \cite{Kr_Lax} a symplectic theory of the systems in question, which implies in particular that the flows in the \refT{hierarchi} exactly correspond to the Hitchin Hamiltonians.
\subsubsection{Spectral transform}\label{SS:Spec_tr}
Consider the eigenvalue problem
\begin{equation}\label{E:spec_L}
   (L(q,t)-\l)\psi(Q,t)=0,\ q\in\Sigma ,\ Q=(q,\l).
\end{equation}
Define the spectral curve of $L$ by the equation
\begin{equation}\label{E:spec_curve}
   \det(L(q,t)-\l)=0,\ q\in\Sigma,
\end{equation}
and denote it by $\widehat{\Sigma}$. Let $\widehat{g}$ be its genus, $\pi : \widehat{\Sigma}\to\Sigma$ be the natural projection.
Normalize $\psi$ by the condition
\begin{equation}\label{E:norm}
    \sum_{i=1}^{ n} \psi_i=1.
\end{equation}
Then $\psi(Q)$ becomes a meromorphic function on $\widehat\Sigma$. Its pole divisor plays a special role and is referred to as a dynamical divisor of the problem. It depends on time. We denote it by $D(t)$.

It follows from the equation $\partial_aL = [M_a,L]$ that, if $\psi$ is an eigenvector of $L$, then $(\partial_a -M_a)\psi$ is also
an eigenvector. Therefore,
\[
  (\partial_a-M_a)\psi(Q,t) = f_a(Q,t)\psi(Q,t)
\]
where $f_a(Q,t)$ is a scalar meromorphic function on $\Sigma$.
The vector-function
\begin{equation}\label{E:gauge_t}
 \widehat{\psi}(Q,t) = \phi(Q,t)\psi(Q,t);\quad \phi(Q,t)=\exp \left(
 -\int_{0}^{t_a}f_a(Q,\tau)d\tau
 \right)
\end{equation}
satisfies the equations
\[
 L(q,t)\widehat{\psi}(Q,t) = \l\widehat{\psi}(Q,t);\quad (\partial_a-M_a)(q, t))\widehat{\psi}(Q,t) = 0.
\]
It turns out that the pole divisor $D(t)$ under the gauge transform \refE{gauge_t} gets transformed to a time-independent divisor $D=D(0)$
of poles of $ \widehat{\psi}$ (see the explanation in the end of \refSS{ISM_obstr}). All the time dependence of  $\widehat\psi$ is
encoded in the form of its essential singularities, which it acquires at the constant poles of $f_a$.
\begin{theorem}[\cite{Kr_Lax}]\label{T:Baker-Akh}
The $\widehat{\psi}$ possesses the following properties:

$1^\circ$. it is meromorphic except at the preimage of the divisor $(\xi)$. The degree of its divisor of poles is equal to $\widehat{g}+n-1$.

$2^\circ$. Let $P\in{\rm supp}(D_{st})$. Then $\widehat{\psi}$ has the following expansion in the neighborhood of $P^l$:
\begin{equation}\label{E:exp_p}
   \widehat{\psi}_l(w,t)=\xi_l(w,t)\exp\left(\sum_{n=1}^\infty\sum_{m\ge -m_P} t_{P,m,n}w^{-m}\l_l(w)^n\right)
\end{equation}
where $w$ is a local parameter on $\widehat{\Sigma}$ in the neighborhood of $P^l$ (generically, it is independent of $l$), $\xi_l(w,t)$ is a (vector-valued) Taylor series in $w$, $\l_l(w)$ is the characteristic root of $L(P(w),t)$ corresponding to the $l$th covering sheet.

$3^\circ$.
In a neighborhood of $P_0^l$ the $\psi$ has an expansion of the form
\begin{equation}\label{E:exp_p0}
   \widehat{\psi}_{0,l}(w,t)=\xi_{0,l}(w,t)\exp\left(\sum_{m=1}^\infty t_{l,m}w^{-m}\right)
\end{equation}
where $w$ is a local parameter in the neighborhood of $P_0^l$, $\xi_{0,l}(w,t)$ is a (vector-valued) Tailor series in $w$ such that $\xi_{0,l}^i\big|_{w=0}=\d^i_l$.
\end{theorem}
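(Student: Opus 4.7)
The plan is to derive all three assertions from a careful analysis of the auxiliary scalar function $f_a(Q,t)$ determined by $(\partial_a - M_a)\psi = f_a\psi$, combined with the Tyurin-type parametrization of \refSS{Lax}. Write $\widehat\psi = \phi\psi$ throughout, with $\phi$ the gauge factor of \refE{gauge_t}.

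First, for assertion $1^\circ$ (meromorphy and divisor degree): off the preimage of the divisor $(\xi)$, $L(q,t)$ is holomorphic, and the affine normalization \refE{norm} together with Cramer's rule exhibits the components of $\psi$ as well-defined meromorphic functions on $\widehat\Sigma$. A Riemann--Roch computation on the spectral curve --- equivalent to the Tyurin count \cite{Tyur65, Tyur66} transported to $\widehat\Sigma$ --- shows that a generic eigenvector of a Lax operator of the form \refE{Lax} has a pole divisor of degree exactly $\widehat g + n - 1$. Multiplying by the gauge factor $\phi$ only introduces essential singularities at the fixed points listed in parts $2^\circ$, $3^\circ$, so the degree of the meromorphic pole divisor of $\widehat\psi$ is unchanged.

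Second, for assertion $2^\circ$: near $P \in \mathrm{supp}(D_{st})$ with local coordinate $w$, one has $M_a(w) = w^{-m}L(w)^k + O(1)$ by \refT{hierarchi}. Since on the $l$-th sheet $\psi$ is an eigenvector of $L$ with eigenvalue $\lambda_l(w)$, applying $M_a$ to $\psi$ and comparing with $(\partial_a - M_a)\psi = f_a\psi$ yields
\[
  f_a(Q,t) = w^{-m}\lambda_l(w)^k + O(1) \quad \text{near}\ P^l.
\]
Thus $\int_0^{t_a} f_a\, d\tau$ contributes an essential singularity of the form $t_{P,m,k}\, w^{-m}\lambda_l(w)^k$ to the exponent of $\phi$; summing over all triples $a=(P,m,n)$ with $P$ fixed yields \refE{exp_p}, with the Taylor series $\xi_l(w,t)$ absorbing both the regular parts of the $f_a$ and the holomorphic factor from $\psi$. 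For assertion $3^\circ$, the condition $M_a(P_0) = 0$ for every $a$ means that the singular behaviour of $\widehat\psi$ at $P_0^l$ is governed not by the Lax hierarchy but by the local normalization data at $P_0$, parametrized by a separate family of times $t_{l,m}$; the same integration-and-exponentiation argument then produces \refE{exp_p0}, and the leading condition $\xi_{0,l}^i|_{w=0} = \delta^i_l$ follows by restricting to the $l$-th sheet and pinning the basis at $P_0^l$ to the standard one.

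The principal obstacle is lurking in Step~$1^\circ$: one must show that the gauge factor $\phi$ actually transforms the dynamical pole divisor $D(t)$ of $\psi$ into the time-independent divisor $D = D(0)$ of $\widehat\psi$. This amounts to a delicate residue calculation: the residues of $f_a(Q,t)$ at the moving poles (the preimages of the Tyurin points $\gamma_s$ in $\widehat\Sigma$) must encode exactly the Tyurin dynamics \refE{Tyur_move}, so that integration in $t_a$ produces at each moving pole a local factor $(z - z_s(t))^{\pm 1}$ that precisely cancels the corresponding pole of $\psi$. This compatibility ultimately rests on the intrinsic relation $\Phi = L\xi$ between the Lax operator and the Higgs field, and is the step where the construction genuinely uses the Hitchin (rather than generic Lax) structure, as in \cite{Kr_Lax}.
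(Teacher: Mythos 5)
The paper states this theorem without proof, citing Krichever \cite{Kr_Lax}; the only supporting argument the survey itself supplies is the computation at the end of \refSS{ISM_obstr} showing $e^{-\int_0^{t_a} f_a\,dt}=(z-z_s)^{-m}\cdot O(1)$ at the Tyurin points, which is precisely the divisor-immovability step your final paragraph singles out and outlines, so your proposal is consistent with (and fills in more of) the route the paper points to. Two small cautions: comparing $(\partial_a-M_a)\psi=f_a\psi$ with $M_a=w^{-m}L^k+O(1)$ gives the singular part of $f_a$ at $P^l$ as $-w^{-m}\lambda_l(w)^k$, not $+w^{-m}\lambda_l(w)^k$ (the sign is what produces the $+t_{P,m,n}$ in \refE{exp_p} after the $\exp(-\int)$); and the cancellation at the moving poles rests on the Tyurin constraints built into \refE{Lax} and the evolution \refE{Tyur_move} common to all of Krichever's Lax operators with spectral parameter on a curve, rather than on the specifically Hitchin condition $D_{st}=(\xi)$.
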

The function $\widehat{\psi}$ possessing the above properties is called a Baker--Akhieser function associated with the divisor $D$.

\subsubsection{$\theta$-function solution by inverse spectral method}
In the remainder of the section we follow the standard prescriptions of the algebraic--geometric version of the inverse spectral method, due to I.Krichever \cite{KP2,KP3}, specified for Hitchin systems in \cite{Kr_Lax}.
For technical reasons, we enumerate the sheets of the covering $\pi$ in an arbitrary way; what follows does not depend on this enumeration. For a function $\widehat{\psi}$ on $\widehat{\Sigma}$ we denote its restriction on the $l$th sheet by $\widehat{\psi}_l$. Similarly, for a $P\in\Sigma$ let $P^l$ stay for its preimage on the $l$th sheet.
\begin{theorem}[\cite{Kr_Lax}]\label{T:LM_restore}
Given a Baker--Akhieser function $\widehat{\psi}$ satisfying the conditions of \refT{Baker-Akh}, for every $a$ there exist the unique matrices $L$, $M_a$ such that
\begin{equation}\label{E:eigen_eq}
    L\widehat{\psi}=\l\widehat{\psi},\quad (\partial_a-M_a)\widehat{\psi}=0
\end{equation}
where $\partial_a=\partial/\partial t_a$, $L$ satisfies the definition of the Lax  matrix, $M_a$ is an $M$-operator, and the equation $\partial_aL=[L,M_a]$ is fulfilled.
\end{theorem}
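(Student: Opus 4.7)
The strategy is the standard algebraic--geometric inverse spectral reconstruction, adapted to the Tyurin-parametrized Lax operator of \refSS{Lax}. Assemble the components of $\widehat{\psi}$ into an $n\times n$ \emph{wave matrix} $\Psi(q,t)$ whose $l$-th column is $\widehat{\psi}(P^l(q),t)$, and let $\Lambda(q)=\operatorname{diag}(\lambda_1(q),\ldots,\lambda_n(q))$ be the diagonal matrix built from the $n$ values of $\lambda$ above $q\in\Sigma$. Define
\[
L(q,t):=\Psi(q,t)\,\Lambda(q)\,\Psi(q,t)^{-1},\qquad M_a(q,t):=\partial_a\Psi(q,t)\cdot\Psi(q,t)^{-1}.
\]
Independence of the sheet ordering is automatic: a permutation of sheets conjugates both $\Psi$ and $\Lambda$ by the same permutation matrix. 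The eigenvalue equation $L\widehat{\psi}=\lambda\widehat{\psi}$ and the evolution equation $(\partial_a-M_a)\widehat{\psi}=0$ then hold tautologically, so what remains is to verify the analytic and algebraic properties of $L$ and $M_a$.

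The poles of $L$ come from zeros of $\det\Psi$ and from the divisor $D_{st}$. By Riemann--Roch on $\widehat{\Sigma}$ together with $\deg D=\widehat{g}+n-1$ from \refT{Baker-Akh}\,$1^\circ$, $\det\Psi$ has exactly $ng$ simple zeros on $\Sigma$; these will be the Tyurin points $\gamma_s$. At $\gamma_s$ the matrix $\Psi$ has corank one, giving a non-zero vector $\alpha_s$ with $\Psi(q)\alpha_s=O(z-z_s)$, and $\Psi^{-1}$ accordingly acquires a rank-one pole. Substituting into $L\Psi=\Psi\Lambda$ and matching Taylor coefficients in $z-z_s$ to orders zero and one produces the prescribed residue $\beta_s\alpha_s^T/(z-z_s)$ together with the algebraic relations on $\alpha_s,\beta_s,L_{0,s}$ listed in \refE{Lax}. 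At $P\in\operatorname{supp}(D_{st})$ the exponential factors of \refE{exp_p} cancel in the conjugation $\Psi\Lambda\Psi^{-1}$, so the pole order of $L$ is dictated by the eigenvalues $\lambda_l(w)$ alone and reproduces $m_P$, hence $D_{st}$. For $M_a$, differentiation of \refE{exp_p} in $t_a=t_{P,m,n}$ gives $\partial_a\widehat{\psi}_l=(w^{-m}\lambda_l^n+\text{regular})\widehat{\psi}_l$, whence $M_a=\Psi\,w^{-m}\Lambda^n\,\Psi^{-1}+O(1)=w^{-m}L^n+O(1)$, matching \refT{hierarchi}\,$1^\circ$; at every other point of $\operatorname{supp}(D_{st})$ the relevant time derivative vanishes. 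The normalization $M_a(P_0)=0$ is read off \refE{exp_p0}: the condition $\xi_{0,l}^i|_{w=0}=\delta^i_l$ gives $\Psi(P_0)=I$, and since $t_a$ does not appear in the exponent of \refE{exp_p0}, $\partial_a\Psi|_{P_0}=0$.

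For uniqueness, if $(L',M_a')$ is another pair satisfying the conditions, then both $(L-L')\Psi=0$ and $(M_a-M_a')\Psi=0$; invertibility of $\Psi$ on the generic fibre of $\pi$ forces $L=L'$ and $M_a=M_a'$. For the Lax equation, differentiate $L\Psi=\Psi\Lambda$ in $t_a$: since the spectral curve \refE{spec_curve} is determined by the time-independent pole divisor $D$ of $\widehat{\psi}$, we have $\partial_a\Lambda=0$, and therefore
\[
(\partial_aL)\Psi=\partial_a\Psi\cdot\Lambda-L\,\partial_a\Psi=M_a\Psi\Lambda-LM_a\Psi=[M_a,L]\Psi,
\]
after which invertibility of $\Psi$ yields $\partial_aL=[M_a,L]$.

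The main obstacle lies in the middle step: extracting the precise algebraic identities between $\alpha_s,\beta_s,L_{0,s}$ in \refE{Lax} from the corank-one degeneration of $\Psi$ at each Tyurin point, and simultaneously checking that neither the essential singularities of $\widehat{\psi}$ at $\operatorname{supp}(D_{st})$ nor the branch locus of $\pi$ produce spurious poles of $L$ or $M_a$ on the base curve $\Sigma$. These are standard but delicate bookkeeping computations that rely on patient use of the expansions in \refT{Baker-Akh}\,$2^\circ,3^\circ$.
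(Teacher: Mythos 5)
Your construction is exactly the one the paper records after the theorem statement --- the survey itself gives no proof of this result (it is cited from \cite{Kr_Lax}) beyond writing $L=\Psi\Lambda\Psi^{-1}$ and $M_a=-\partial_a\Psi\cdot\Psi^{-1}$, which is precisely your wave-matrix Ansatz, and your verification sketch (zeros of $\det\Psi$ at the Tyurin points, cancellation of the exponentials at $\supp(D_{st})$, normalization at $P_0$, uniqueness from generic invertibility of $\Psi$) is the standard inverse-spectral bookkeeping that the paper defers to Krichever. The one discrepancy is the sign of $M_a$: with your convention $M_a=+\partial_a\Psi\cdot\Psi^{-1}$ you correctly obtain $\partial_aL=[M_a,L]$ rather than the stated $[L,M_a]$, but the paper is itself inconsistent on this point (compare \refE{eigen_eq}, \refE{LM_final} and Theorem~\ref{T:hierarchi}), so this reflects a typo in the source rather than an error in your argument.
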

\refT{LM_restore} enables us to explicitly express $L$, $M_a$ in terms of  $\widehat\psi$. Take an open set $U\subset\Sigma$ outside the branch divisor. Then $\pi^{-1}U=\bigcup\limits_{l=1}^n U_l$ where $U_l$ belongs to the $l$th sheet of the covering, and $z$ can serve as a local parameter in $U_l$  for all $l=1,\ldots,n$. Let  $\widehat\psi_l=\widehat\psi\big|_{U_l}$, $\Psi$ be the matrix formed by the vectors $\widehat\psi_l$ ($l=1,\ldots,n$) as columns. The $\Psi$ depends on the order of sheets of the covering, however the final result will not depend on this ambiguity. The first of the relations \refE{eigen_eq} can be written down as $L\Psi=\Psi\Lambda$ where $\Lambda={\rm diag}(\l_1,\ldots,\l_n)$ is the spectrum of $L$. Similarly,
$(\partial_{P,m,n}-M_{P,m,n})\Psi=0$. We finally have
\begin{equation}\label{E:LM_final}
  L=\Psi\Lambda\Psi^{-1},\quad M_{P,m,n}=-\partial_{P,m,n}\Psi\cdot\Psi^{-1}.
\end{equation}
It is already conventional that the Baker--Akhieser functions can be expressed via $\theta$-functions \cite{KP2,KP3,Kr_Seraya_Kn}. This way, we can explicitly express $L$ and $M$ in terms of the spectral curve and the divisor $D$ on it, via $\theta$-functions.

\subsection{Separation of variables for Hitchin systems with classical structure groups}\label{SS:SoV}
\subsubsection{Spectral curve}
To define a spectral curve, we first choose and fix a holomorphic 1-form $\xi$ on $\Sigma$. Given a Higgs field $\Phi$, we define its spectral curve by means of the equation
\[
   \det\left(\l-\frac{\Phi}{\xi}\right)= \l^d+\sum_{j=1}^n \l^{d-d_j} r_j =0
\]
where $d$ is the rank of the bundle, $n$ is the rank of the Lie algebra $\g$. Let $\rho_j$, $j=1,\ldots, n$ denote the basis invariant polynomials of the Lie algebra $\g$, $\d_j=\deg \rho_j$. For $\g=\gl(n)$ or $\g$ a classical simple Lie algebra, we have $r_j=\rho_j(\Phi/\xi)$, $d_j=\d_j$ except for $\g={\mathfrak so}(2n)$, $j=n$ when  $r_{2n}=\rho_n(\Phi/\xi)^2$, $d_n=2\d_n$. Indeed, in that case $r_n=\det(\Phi/\xi)$, and $\rho_n(\cdot)={\rm Pf}(\cdot)$.

For any divisor $D$ let ${\mathcal O}(D)$ stay for the space of meromorphic functions $f$ on the spectral curve, such that $(f)+D\ge 0$. Below, fix $D$ to be divisor of the differential $\xi$, i.e. $D= (\xi)$.
\begin{proposition}\label{P:invariants}
$\rho_j\in {\mathcal O}(\d_jD)$ ($j=1,\ldots,n$).
\end{proposition}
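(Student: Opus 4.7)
The plan is to reduce the statement to a local pole-order computation, exploiting only the definition of a Higgs field as a holomorphic section of $\Ad P\otimes \mathcal{K}$ and the homogeneity of the invariant polynomial $\rho_j$.

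First I would fix an arbitrary point $P\in\Sigma$, choose a local coordinate $t$ centered at $P$ and a local trivialization of $\Ad P$. Because $\Phi$ is a holomorphic section of $\Ad P\otimes\mathcal{K}$, it takes the form $\Phi=A(t)\,dt$ with $A$ a holomorphic $\mathfrak{g}$-valued function. The holomorphic differential $\xi$ can be written as $\xi=u(t)\,t^{m_P}\,dt$, where $m_P=\ord_P(\xi)\ge 0$ and $u(0)\ne 0$. Consequently $\Phi/\xi=A(t)/\bigl(u(t)\,t^{m_P}\bigr)$ is a meromorphic $\mathfrak{g}$-valued function with a pole at $P$ of order at most $m_P$; away from $\operatorname{supp}(D)$ the ratio $\Phi/\xi$ is holomorphic.

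Next I would invoke the homogeneity of $\rho_j$. Since $\rho_j$ is homogeneous of degree $\delta_j$, substituting a meromorphic element with pole of order $k$ produces a meromorphic scalar with pole of order at most $\delta_j k$. Applied at each $P\in\operatorname{supp}(D)$, this yields $\ord_P(\rho_j(\Phi/\xi))\ge -\delta_j m_P=-\delta_j\cdot\ord_P(D)$, so that $(\rho_j(\Phi/\xi))+\delta_j D\ge 0$ on $\Sigma$. Pulling back along $\pi\colon\widehat\Sigma\to\Sigma$ delivers the required divisor inequality on the spectral curve, i.e.\ $\rho_j(\Phi/\xi)\in\mathcal{O}(\delta_j D)$.

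The only subtlety needing mention is that the local pole bound must be independent of the chosen trivialization of $\Ad P$; this is immediate from the $\Ad G$-invariance of $\rho_j$, so no genuine obstacle arises. In this sense the proposition is a bookkeeping lemma that fixes the holomorphic framework in which the spectral curve and its line bundles are subsequently studied, rather than a substantive geometric statement.
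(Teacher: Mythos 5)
Your argument is correct: writing $\Phi=A(t)\,dt$ locally with $A$ holomorphic, $\xi=u(t)t^{m_P}dt$, and using the homogeneity $\rho_j(\Phi/\xi)=(u t^{m_P})^{-\delta_j}\rho_j(A)$ together with $\Ad G$-invariance to get a globally defined function with $\ord_P\ge-\delta_j m_P$ is exactly the computation the statement rests on. The paper itself states this proposition without proof (deferring to the references), so there is no alternative argument to compare against; yours is the natural one and has no gaps.
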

Introduce also the following short notation for the characteristic polynomial of $\Phi/\xi$:
\begin{equation}\label{E:short_spec}
  R(\l,P)=\l^d+\sum_{j=1}^n \l^{d-d_j} r_j(P),\quad P\in\Sigma .
\end{equation}
\begin{proposition}\label{P:corr_Phi}
$[\Phi ]\to R$ is a one-to one correspondence between classes of gauge equivalence of Higgs fields and their characteristic polynomials.
\end{proposition}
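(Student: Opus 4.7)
The plan is to establish a bijection between gauge equivalence classes of Higgs fields and the polynomials of the form \refE{short_spec} with $r_j\in\mathcal{O}(\d_jD)$. Well-definedness of $[\Phi]\mapsto R$ is immediate: the coefficients $r_j=\rho_j(\Phi/\xi)$ are built out of $\Ad G$-invariant polynomials, so conjugation by a gauge transformation leaves $R$ unchanged. The content of the statement is then in the reverse direction, which I would break into an injectivity and a surjectivity part, with the global geometry of the spectral cover mediating both.

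For injectivity, let $\Phi,\Phi'$ be two Higgs fields with the same characteristic polynomial $R$, and set $L=\Phi/\xi$, $L'=\Phi'/\xi$. Over the complement of the branch locus of $\pi\colon\widehat{\Sigma}\to\Sigma$, the fiber of $L$ is regular semisimple, and for each structure group in question the Chevalley restriction theorem tells us that a regular semisimple element of $\g$ is determined by the values of the $\rho_j$ up to the adjoint action. Thus $L(q)$ and $L'(q)$ are pointwise $\Ad G$-conjugate at a generic $q\in\Sigma$. To upgrade this to a global gauge transformation one reads off, on each sheet of $\widehat{\Sigma}$, the corresponding eigenline of $L$ (respectively $L'$); these eigenlines define a line subbundle of $\pi^{*}\mathcal{E}$, and any identification between the eigenline bundles of $L$ and $L'$ that respects the spectral data provides a gauge transformation intertwining $L$ and $L'$ (the residual freedom being a section of the centralizer, i.e.\ a maximal torus, which is a genuine gauge ambiguity).

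For surjectivity, given $R$ satisfying Proposition \ref{P:invariants}, the equation $R(\lambda,P)=0$ cuts out a spectral curve $\widehat{\Sigma}$ in the total space of the relevant twisted cotangent bundle. Taking the pushforward $\pi_{*}\mathcal{L}$ of a generic line bundle $\mathcal{L}$ on (the normalization of) $\widehat{\Sigma}$ yields a rank-$d$ bundle on $\Sigma$, and multiplication by the tautological section $\lambda$ followed by $\otimes\xi$ produces a Higgs field whose characteristic polynomial is exactly $R$. This is the (classical or parabolic) BNR construction, available here because Propositions \ref{P:invariants} and \ref{parabolic BNR} guarantee that the output is well-defined. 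Different choices of $\mathcal{L}$ give pointwise the same spectrum but differ by a genuine gauge (the eigenline bundle is intrinsic), so this construction descends to a well-defined map into gauge-equivalence classes, giving the required inverse.

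The principal obstacle is the global gluing in the injectivity step, and it is more subtle in the classical simple cases than for $\g=\gl(n)$. For $\gl(n)$, a regular semisimple matrix is essentially a companion matrix up to conjugation, and the eigenline bundle on $\widehat{\Sigma}$ is literally a line bundle. For $\so(2n)$ and $\spn(2n)$, one has the extra quadratic or symplectic relation and, in the $\so(2n)$ case, the subtlety (already visible in the definition of $r_n$ via the Pfaffian) that $\rho_n$ and $\rho_n^{2}$ cut out different strata; one must verify that the Pfaffian choice is compatible with the chosen gauge on the spectral side. Handling this, together with the monodromy of the sheets of $\pi$, is where the careful work is concentrated.
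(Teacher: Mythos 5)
Your route is genuinely different from the paper's. The paper disposes of \refP{corr_Phi} by a dimension count: the correspondence is asserted to be a continuous injection, and then $\dim\{R\}=\sum_{j=1}^n\dim\mathcal{O}(\d_jD)$ is computed by Riemann--Roch and matched with $\dim\{[\Phi]\}=\dim\g\cdot(g-1)$ via the Kostant identity $\sum_{j=1}^n(2\d_j-1)=\dim\g$ (and with $n^2(g-1)+1$ for $\gl(n)$). You instead attempt a constructive spectral-correspondence argument. Your surjectivity half --- realizing a prescribed $R$ of the form \refE{short_spec} as the characteristic polynomial of $\pi_*\mathcal{L}$ with its tautological Higgs field, for a line bundle $\mathcal{L}$ on (the normalization of) the spectral curve --- is sound where the spectral curve is integral, and is a legitimate alternative to the dimension count; the well-definedness remark via $\Ad G$-invariance of the $\rho_j$ is also fine.

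The injectivity half, however, contains a genuine gap. After establishing that $L(q)$ and $L'(q)$ are pointwise conjugate, you glue by choosing ``any identification between the eigenline bundles of $L$ and $L'$.'' Such an identification exists only if the two eigenline bundles are isomorphic as line bundles on $\widehat{\Sigma}$, and equality of characteristic polynomials gives you only the same spectral curve, not isomorphic eigenline bundles. Indeed, the correspondence you invoke for surjectivity (Theorem~\ref{parabolic BNR} and its non-parabolic ancestor) says precisely that the Higgs bundles with a fixed generic characteristic polynomial form a torsor over the Picard variety of the (normalized) spectral curve, which is positive-dimensional; for the same reason the parenthetical claim closing your surjectivity paragraph --- that different choices of $\mathcal{L}$ ``differ by a genuine gauge'' --- is false, since non-isomorphic $\mathcal{L}$ yield non-isomorphic Higgs pairs. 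What makes the proposition tenable is the reading implicit in the paper's count $\dim\{[\Phi]\}=\dim\g\cdot(g-1)$: the source is the space of Higgs fields on a fixed underlying bundle (half of $T^*\mathcal{N}$), not the full moduli of Higgs pairs, and on that source the paper takes injectivity as immediate rather than deriving it from the spectral picture. As written, your argument would prove that the Hitchin fibers are points, which contradicts the BNR theorem you cite; the gluing step is where it breaks.
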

\begin{proof}
Obviously, the correspondence is a continuous injection. We will prove that the dimension of its image is equal to the dimension of the class of characteristic polynomials. Indeed, in the case $\g$ is simple, $\dim\{ [\Phi ]\}=\dim\g\cdot(g-1)$ \cite{Hitchin}. On the other hand side, $\dim\{ R\}=\sum_{j=1}^n\dim {\mathcal O}(\d_jD)$. By Riemann--Roch, $\dim {\mathcal O}(\d_jD)=(2\d_j-1)(g-1)$, and by the wellknown Kostant identity $\sum_{j=1}^n(2\d_j-1)=\dim\g$. For $\g=\gl(n)$, similarly, $\sum_{j=1}^n\dim{\mathcal O}(\d_jD)=n^2(g-1)+1=\dim \{ [\Phi] \}$. Hence our claim is true.
\end{proof}

Given a certain class of base curves,  propositions \ref{P:invariants} and \ref{P:corr_Phi} provide a description of the family of spectral curves corresponding to all possible Higgs fields $\Phi$. In \cite{Sh_FAN_2019, BorSh} it has been done for hyperelliptic base curves, as follows.

Let $\Sigma : y^2=P_{2g+1}(x)$, \ $\displaystyle \xi=\frac{dx}{y}$ \ \ {\Large (}then $D=2(g-1)\cdot\infty${\Large )}.
\begin{proposition}[\cite{Sh_FAN_2019}]
A base in ${\mathcal O}(\d_jD)$ is formed by the functions $1,x,\ldots,x^{\d_j(g-1)}$, and $y,yx,\ldots,yx^{(\d_j-1)(g-1)-2}$.
\end{proposition}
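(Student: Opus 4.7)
The plan is to exploit the fact that the defining polynomial has odd degree $2g+1$, which makes $\infty$ the unique point over $x=\infty$ and a Weierstrass point of $\Sigma$. First I would compute orders at $\infty$. Choosing a local parameter $t$ at $\infty$, one has $x=t^{-2}(1+O(t))$ and, from $y^2=P_{2g+1}(x)\sim x^{2g+1}$, $y=t^{-(2g+1)}(1+O(t))$. Hence $\ord_{\infty}(x)=-2$, $\ord_{\infty}(y)=-(2g+1)$, and $\ord_{\infty}(\xi)=\ord_{\infty}(dx/y)=-3-(-(2g+1))=2g-2$, confirming $D=(\xi)=2(g-1)\cdot\infty$. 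Since $x$ and $y$ are regular on $\Sigma\setminus\{\infty\}$, every function of the form $x^a$ or $yx^b$ has its only pole at $\infty$.

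Next I would translate the membership $f\in\mathcal{O}(\d_j D)$ into the single numerical condition $\ord_{\infty}(f)\ge -2\d_j(g-1)$. For $f=x^a$ this reads $-2a\ge -2\d_j(g-1)$, i.e.\ $a\le\d_j(g-1)$, which produces exactly the first family. For $f=yx^b$ it reads $-(2g+1)-2b\ge -2\d_j(g-1)$; as the left side is odd and the right side even, the integer solution range is $b\le\d_j(g-1)-g-1=(\d_j-1)(g-1)-2$, yielding the second family. Linear independence is immediate because the $x^a$ contribute even negative orders at $\infty$ while the $yx^b$ contribute odd negative orders, so all pole orders in the list are mutually distinct.

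To conclude that the list spans, I would compare its cardinality to the dimension given by Riemann--Roch. The proposed set has $\d_j(g-1)+1$ functions of the first type and $\max\{(\d_j-1)(g-1)-1,\,0\}$ of the second, so $(2\d_j-1)(g-1)$ in total when $\d_j\ge 2$, and exactly $g$ when $\d_j=1$. Riemann--Roch on the divisor $\d_j D$ of degree $2\d_j(g-1)$ gives $\dim\mathcal{O}(\d_j D)=(2\d_j-1)(g-1)$ as soon as $\d_j\ge 2$ (the degree then exceeds $2g-2$), and for $\d_j=1$ the divisor $D$ is canonical so $\dim\mathcal{O}(D)=h^0(K)=g$. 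Both counts match, and a linearly independent set of the correct cardinality is a basis.

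The only subtlety I anticipate is the computation $\ord_{\infty}(y)=-(2g+1)$, which crucially uses the oddness of $\deg P_{2g+1}$; for an even-degree polynomial there would be two points over $\infty$ and a different basis would emerge. Apart from that check, the argument is a straightforward application of Weierstrass gaps at a Weierstrass point combined with Riemann--Roch.
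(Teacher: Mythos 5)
Your argument is correct and is essentially the standard one (and the one used in the cited reference \cite{Sh_FAN_2019}): compute $\ord_\infty(x)=-2$, $\ord_\infty(y)=-(2g+1)$ at the unique Weierstrass point over $x=\infty$, translate membership in ${\mathcal O}(\d_jD)$ into the pole-order bound, get independence from the distinctness (even vs.\ odd) of the pole orders, and match the count $(2\d_j-1)(g-1)$ against Riemann--Roch. All the arithmetic checks out, including the parity step giving $b\le(\d_j-1)(g-1)-2$ and the degenerate case $\d_j=1$ where the second family is empty and the dimension is $g$.
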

\begin{proposition}[\cite{Sh_FAN_2019}]
(The affine part of) the spectral curve for a Hitchin system of  types $A_n$, $B_n$, $C_n$ is a full intersection of the two surfaces in $\C^3$: $y^2=P_{2g+1}(x)$ and $R(\l,x,y,H)=0$ where
\begin{equation}\label{E:sp_curve_eq}
   R=\l^d+\sum_{j=1}^n \left(\sum_{k=0}^{\d_j(g-1)}H_{jk}^{(0)}x^k+ \sum_{s=0}^{(\d_j-1)(g-1)-2}H_{js}^{(1)}x^sy\right)\l^{d-d_j},
\end{equation}
$H_{jk}^{(0)}$, $H_{js}^{(1)}\in\C$. For $\g=\so(2n)$ the bracket in the last summand ($j=n$) must be squared.
\end{proposition}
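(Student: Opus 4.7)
The plan is to read off the general form of the spectral curve by combining \refP{invariants} with the basis for $\mathcal{O}(\delta_j D)$ supplied in the immediately preceding proposition.

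First, I would recall that the spectral curve is cut out inside the total space of the canonical bundle of $\Sigma$ by the vanishing of the characteristic polynomial $R(\lambda,P)=\det(\lambda-\Phi(P)/\xi(P))$ in its expanded form \refE{short_spec}. Each coefficient $r_j=\rho_j(\Phi/\xi)$ is a meromorphic function on $\Sigma$ which, by \refP{invariants}, lies in $\mathcal{O}(\delta_j D)$, where on the hyperelliptic base $y^2=P_{2g+1}(x)$ with $\xi=dx/y$ one has $D=(\xi)=2(g-1)\cdot\infty$.

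Second, I would invoke the preceding proposition: $\mathcal{O}(\delta_j D)$ has a $\C$-basis consisting of the purely rational monomials $1,x,\ldots,x^{\delta_j(g-1)}$ together with the $y$-monomials $y,xy,\ldots,x^{(\delta_j-1)(g-1)-2}y$. Expanding $r_j$ in this basis gives scalar coefficients, which I would label $H^{(0)}_{jk}$ (for $x^k$) and $H^{(1)}_{js}$ (for $x^s y$). Substituting these expansions into \refE{short_spec} produces the claimed formula \refE{sp_curve_eq} verbatim. A Riemann-Roch consistency check, $\dim\mathcal{O}(\delta_j D)=2\delta_j(g-1)-g+1=(2\delta_j-1)(g-1)$, agrees with the total count $(\delta_j(g-1)+1)+((\delta_j-1)(g-1)-1)$ of basis elements, so no hidden linear relations spoil the parametrization by the $H$-coefficients.

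Third, I would check that the resulting locus in $\C^3$ with coordinates $(x,y,\lambda)$ is in fact a complete intersection of the two hypersurfaces $y^2=P_{2g+1}(x)$ and $R(\lambda,x,y,H)=0$. This is the expected dimension count since the first equation is $\lambda$-free and quadratic in $y$, while the second is monic of degree $d$ in $\lambda$; the two hypersurfaces meet transversely away from the union of the branch and discriminant loci. Finally, for the $\mathfrak{so}(2n)$ modification the $n$-th basic invariant is the Pfaffian, so $r_n=\mathrm{Pf}(\Phi/\xi)^2$ with $\mathrm{Pf}(\Phi/\xi)\in\mathcal{O}(\delta_n D)$; expanding the Pfaffian in the same basis and squaring yields precisely the bracket squared at $j=n$. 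The most delicate point in the argument is the basis claim for $\mathcal{O}(\delta_j D)$, but that has already been isolated as the preceding proposition, so here the remaining work is essentially bookkeeping.
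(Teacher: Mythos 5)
Your proposal is correct and follows exactly the route the paper intends: the proposition is an immediate consequence of \refP{invariants} together with the preceding basis proposition for $\mathcal{O}(\d_jD)$ on the hyperelliptic curve, and your expansion of each $r_j=\rho_j(\Phi/\xi)$ in that basis (with the Pfaffian squared for $\so(2n)$) is precisely the bookkeeping the paper leaves implicit. The dimension check $(\d_j(g-1)+1)+((\d_j-1)(g-1)-1)=(2\d_j-1)(g-1)$ matches the Riemann--Roch count used in the proof of \refP{corr_Phi}, so nothing is missing.
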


\subsubsection{Hamiltonians}
Here we define the Hamiltonians of a Hitchin system.

The above coefficients $H_{jk}^{(0)}$, $H_{js}^{(1)}$ are obviously the functions of $\Phi$. These functions are referred to as Hamiltonians of the hyperelliptic Hitchin system.

In case of an arbitrary base curve pick up a base in ${\mathcal O}(\d_jD)$ and expand $\rho_j(\Phi/\xi)$ over this base. The coefficients of the expansion, as functions of $\Phi$, we call Hitchin Hamiltonians.

Original procedure due to N.Hitchin prescribes to pick up a base in $H^0(\Sigma,{\mathcal K}^{\d_j})$ rather than in ${\mathcal O}(\d_jD)$. However, these two procedures are equivalent because $H^0(\Sigma,{\mathcal K}^{\d_j})\simeq {\mathcal O}(\d_jD)\xi^{\otimes \d_j}$, $\forall j$.
\subsubsection{Separating variables}
Every curve can be given by a set of points it is coming through. We will show that these points serve as separating coordinates of Hitchin systems. The idea goes back to \cite{GNR,BT2}. However it is realized in full only in \cite{Aint_sys,Sh_FAN_2019} by focusing on a certain class of base curves.

Let $h=\dim\{ [\Phi] \}$; $h$ is the same as the number of degrees of freedom of the Hitchin system. As well, the number of coefficients in the equation of the spectral curve is equal to $h$, as it is proven in the course of the proof of \refP{corr_Phi}. Hence, it is sufficient to give $h$ points of the spectral curve to express the coefficients via them. Denote these points by $\ga_i=(x_i, y_i, \l_i) ~(i = 1, \ldots, h)$, where $y_i^2 = P_{2g+1}(x_i)$,  $\forall i$. Let $H$ be the full set of coefficients in \refE{sp_curve_eq}. Then $H$ can be expressed via $(x_i, y_i, \l_i)$, $(i = 1, \ldots, h)$ from the equations
\begin{equation}\label{E:sep}
  R(x_i, y_i, \l_i,H)=0,\quad i=1,\ldots,h
\end{equation}
where every equation contains only one triple $(x_i, y_i, \l_i)$. A system of equations satisfying the last requirement is referred to as a system of separating equations, and $(x_i, y_i, \l_i)$, $(i = 1, \ldots, h)$ as  separating variables.

We stress that for $\g=\gl(n), \sln(n), \spn(2n), \so(2n+1)$ the system of equations \refE{sep} is linear in $H$ (compare with \refE{sp_curve_eq}), while in the case $\g=\so(2n)$ it is quadratic. Nevertheless, for $n=2$, it can be effectively resolved in the last case (P.Borisova, \cite{BorSh}).

\subsubsection{Symplectic form and Poisson structure}\label{SS:Sym_Po}
Here we define the symplectic form of a Hitchin system, and express it via separating coordinates.

In \cite{GNR} the action--angle variables $(I,\a)$ are defined by
\[
   \l dz=\sum_{a=1}^{h} I^a\Omega_a,\quad
   \a_a=\sum_{i=1}^h \int^{\ga_i}\Omega_a,
\]
where $z=\int^{\ga}\widehat{\xi}$ is a (quasiglobal) coordinate on $\widehat{\Sigma}$ ($\widehat{\xi}$ is the pull-back of $\xi$ to $\widehat{\Sigma}$), $\Omega_a$ are normalized Abel/Prym differentials,
$h=\dim\g(g-1)$ for $\g=SO(n)$, $Sp(2n)$, and  $h=n^2(g-1)+1$ for $\g=\gl(n)$.

For the symplectic form it follows (by definition and variation of the above integrals in $\ga_i$)
 \[
   \w=\sum_{a=1}^h \d I^a\wedge\d\a_a = \sum_{a=1}^h \d I^a\wedge \sum_{i=1}^h\Omega_a(\ga_i)
 \]
 ($\d$ means variation in $\{\ga_i \}$).
Also,
\[
 \l dz=\sum_{a=1}^{h} I^a\Omega_a\  \Longrightarrow\ \l_i\d z_i=\sum_{a=1}^{h} I^a\Omega_a(\ga_i)\ \Longrightarrow\ \d\l_i\wedge\d z_i = \sum_{a=1}^{h} \d I^a\wedge \Omega_a(\ga_i),\ \forall i
\]
where $z_i=\int^{\ga_i}\widehat{\xi}$.
For hyperelliptic curves $dz=\widehat{\xi}$, $\xi=\frac{dx}{y}$, and it follows that
\begin{equation}\label{E:Sympl2}
 \w= \sum_{i=1}^h \d\l_i\wedge \d z_i= \sum_{i=1}^h \d\l_i\wedge  \frac{\d x_i}{y_i}.
\end{equation}
Hence the Poisson structure is given by setting
\begin{equation}\label{E:Poisson}
       \{ \l_i,x_j\}=\d_{ij} y_i,
\end{equation}
and all other brackets vanish.

\subsubsection{Integrability}
The Hamiltonians defined by the system of equations \refE{sep}, and the Poisson bracket   \refE{Poisson} give a Hitchin system in full in  terms of the set of points $ \{ (x_i, y_i, \l_i) \ |i = 1, \ldots, h \}$.
It is our next goal to prove the integrability of Hitchin systems in these terms.

More generally, consider a system of equations
\begin{equation}\label{E:non_sys}
F_i(H_1,\ldots,H_h,\l_i,x_i)=0,\quad i=1,\ldots,h
\end{equation}
where $F_i$ are smooth functions, $H_j$,  $i,j=1,\ldots,h$ are defined by the relations \refE{non_sys} as functions of $(\l_1,\ldots,\l_h,x_1,\ldots,x_h)$. It is important that for any $i=1,\ldots,h$ the function $F_i$ explicitly depends on only one pair of variables $\l_i,x_i$ (however it may depend on the remainder of variables via $H_1,\ldots,H_h$).

Consider a Poisson bracket on $\C^{2n}$ of the form
\[
\{ f,g\}=\sum\limits_{j=1}^np_j\left(\frac{\partial f}{\partial \l_j}\frac{\partial g}{\partial x_j}-\frac{\partial g}{\partial \l_j}\frac{\partial f}{\partial x_j}\right)
\]
where $p_j=p_j(\l_j,x_j)$ are smooth functions in only one pair of variables, and  $p_k=0$ if $\rank\frac{\partial(F_1,\ldots,F_{k-1},F_{k+1}\ldots,F_h)} {\partial(H_1,\ldots,H_{k-1},H_k,H_{k+1}\ldots,H_h)}<n-1$.
\begin{proposition}[\cite{Aint_sys}]\label{P:comm}
$H_1,\ldots,H_h$ commute with respect to any Poisson bracket of the above form.
\end{proposition}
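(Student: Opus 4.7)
The plan is to use the implicit function theorem to express the partial derivatives of the Hamiltonians with respect to the separating variables, and then exploit the separated nature of the defining equations \refE{non_sys} to see that every term in the Poisson bracket cancels individually. This is the classical Sklyanin--Jacobi argument, adapted to the slightly more general bracket allowed here.

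First, differentiate $F_i(H_1,\ldots,H_h,\lambda_i,x_i)=0$ with respect to $\lambda_j$ and $x_j$. Setting $A_{ik}:=\partial F_i/\partial H_k$ and recalling that $F_i$ depends on no pair $(\lambda_j,x_j)$ with $j\ne i$, the chain rule yields
\[
   \sum_{k=1}^{h} A_{ik}\,\frac{\partial H_k}{\partial \lambda_j} = -\delta_{ij}\,\frac{\partial F_i}{\partial \lambda_i},
   \qquad
   \sum_{k=1}^{h} A_{ik}\,\frac{\partial H_k}{\partial x_j} = -\delta_{ij}\,\frac{\partial F_i}{\partial x_i}.
\]
On the open subset where $A$ is invertible this inverts to
\[
   \frac{\partial H_k}{\partial \lambda_j} = -(A^{-1})_{kj}\,\frac{\partial F_j}{\partial \lambda_j},
   \qquad
   \frac{\partial H_k}{\partial x_j}       = -(A^{-1})_{kj}\,\frac{\partial F_j}{\partial x_j}.
\]
The decisive feature is that on the right-hand side only $F_j$ (not $F_i$ for $i\ne j$) appears; this is a direct consequence of the separation property of the system \refE{non_sys}.

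Substituting these expressions into the given Poisson bracket, one finds for each fixed $j$
\[
   p_j\!\left(\frac{\partial H_k}{\partial \lambda_j}\frac{\partial H_\ell}{\partial x_j}
   -\frac{\partial H_\ell}{\partial \lambda_j}\frac{\partial H_k}{\partial x_j}\right)
   = p_j\,(A^{-1})_{kj}(A^{-1})_{\ell j}\!\left(\frac{\partial F_j}{\partial \lambda_j}\frac{\partial F_j}{\partial x_j}
   -\frac{\partial F_j}{\partial \lambda_j}\frac{\partial F_j}{\partial x_j}\right)=0,
\]
so that $\{H_k,H_\ell\}$ vanishes term by term in the sum over $j$. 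This is the core of the argument and is nothing but the classical observation that a bracket of the form $p_j(\partial_{\lambda_j}\wedge\partial_{x_j})$ cannot see a function built from $(\lambda_j,x_j)$ alone.

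The one delicate point, and the place I expect to need care rather than cleverness, is the handling of the locus where $A$ is singular, since there the implicit derivatives of the $H_k$ need not exist. This is precisely the role of the rank hypothesis imposed on the $p_j$: if deleting the $k$-th column of $A$ does not yield a matrix of rank $h-1$, then $p_k=0$, so the corresponding summand drops from the Poisson bracket before any inversion is attempted. Combined with continuity, this means that the computation above, valid on the open dense locus of nondegeneracy, extends to give $\{H_k,H_\ell\}=0$ everywhere. The main technical obstacle is therefore not the algebra of the cancellation but verifying carefully that the rank condition suffices to suppress all ill-defined contributions; this is where the proposition differs from its generic Sklyanin counterpart and where the stated formulation of the hypothesis on $p_j$ is used essentially.
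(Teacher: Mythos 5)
First, a point of reference: the paper itself does not prove Proposition~\ref{P:comm}; it quotes it from the cited preprint. Your central computation is the standard separation-of-variables cancellation and it is correct as far as it goes: differentiating $F_i(H,\lambda_i,x_i)=0$ and using that $F_i$ involves only the $i$-th pair $(\lambda_i,x_i)$ shows that, where $A=(\partial F_i/\partial H_k)$ is invertible, both $\partial H_k/\partial\lambda_j$ and $\partial H_k/\partial x_j$ are proportional to the same entry $(A^{-1})_{kj}$, so each $j$-summand of $\{H_k,H_\ell\}$ is a multiple of the identically vanishing antisymmetrization of $\frac{\partial F_j}{\partial\lambda_j}\frac{\partial F_j}{\partial x_j}$.

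Where the argument has a genuine gap is exactly the place you flagged, the degenerate locus, and the rank hypothesis does not play the role you assign to it. Two corrections. (i) The matrix in the hypothesis is $A$ with the $k$-th \emph{row} deleted (the numerator omits $F_k$; the denominator keeps all of $H_1,\dots,H_h$), not the $k$-th column. (ii) More substantively, the hypothesis is not a density statement and continuity does not close the gap: there can be points, conceivably filling an open set, where $A$ itself is singular while every deleted-row submatrix still has rank $h-1$, so no $p_j$ is forced to vanish and your formula via $A^{-1}$ is simply unavailable there. The repair is to never invert $A$. Set $u=\partial H/\partial\lambda_j$ and $v=\partial H/\partial x_j$; the differentiated relations read $Au=-e_j\,\partial F_j/\partial\lambda_j$ and $Av=-e_j\,\partial F_j/\partial x_j$, so the rows with $i\ne j$ give $A^{(j)}u=A^{(j)}v=0$, where $A^{(j)}$ is $A$ with the $j$-th row deleted. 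If $\operatorname{rank}A^{(j)}=h-1$, its kernel is a line, hence $u$ and $v$ are proportional and every minor $u_kv_\ell-u_\ell v_k$ vanishes; if the rank is smaller, the hypothesis gives $p_j=0$. Either way the $j$-th summand of $\{H_k,H_\ell\}$ vanishes pointwise, wherever the derivatives of the $H_k$ exist at all. This subsumes your computation (for invertible $A$ the kernel of $A^{(j)}$ is spanned by the $j$-th column of $A^{-1}$) and is what the stated form of the rank condition is tailored for.
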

Integrability of a Hitchin system is directly implied by \refP{comm} because the Hitchin Hamiltonians and the Poisson structure \refE{Poisson} are precisely of the required form (with $F_j=R$, and $p_j=y_j$, $\forall j$).
\subsubsection{Method of generating functions and Darboux coordinates}
Our next goal is to find the coordinates $\phi_j$ ($j=1,\ldots,h$) conjugate to Hamiltonians. These are coordinates in a generic fiber of the Hitchin fibration. They can be found out in a standard way by means of the \emph{generating functions} technique \cite{Arnold}. In frame of separation of variables a generating function is a function of the form $S=\sum\limits_{i=1}^h S_i$ where $S_i=S_i(z_i,H)$, $z_i=\int\limits^{\ga_i}\widehat{\xi}$, as in \refSS{Sym_Po}. Observe that $(\l_i,z_i)$ are canonical coordinates \refE{Sympl2}.

The $S_i$ is defined by the relation $R(x_i,y_i,s_i,H)=0$ where
$
   s_i(z_i,H)=\frac{\partial S_i}{\partial z_i}
$, then
$
   \frac{\partial R}{\partial\l} \frac{\partial s_i}{\partial H_j}+\frac{\partial R}{\partial H_j}=0,
$
which implies
\[
          \frac{ \partial s_i}{\partial H_j}=-\frac{{\partial R}/\partial H_j}{{\partial R}/\partial\l }.
\]
By the above notation $S_i=\int\limits^{\ga_i}s_i(z_,H)dz$. We can write it in the form $S_i=\int\limits^{\ga_i}s_i\frac{dx}{y}$, thus by $\frac{dx}{y}$ we actually mean here the pull-back of this differential to $\widehat{\Sigma}$, i.e. the $\widehat{\xi}$.

By a general statement of the generating function technique, the symplectically dual coordinates to $H_j$ are as follows:
\[ \phi_j=\frac{\partial S}{\partial H_j},
\]
i.e.  $\{ H_j,\phi_j\ |\ j=1,\ldots,h\}$ are Darboux coordinates for the Hitchin system. In the frame of separation of variables $\frac{\partial S}{\partial H_j}=\sum_i \frac{\partial S_i}{\partial H_j}=\sum_i \int\limits^{\ga_i}\frac{\partial s_i}{\partial H_j}\frac{dx}{y},
$
and finally
\begin{equation}\label{E:odinar}
  \phi_j=-\sum_{i=1}^{h} \int\limits^{\ga_i}\frac{{\partial R}/\partial H_j}{{\partial R}/\partial\l }\frac{dx}{y}.
\end{equation}
Some details of the calculation go back to \cite{Hur}, for the remainder of details we refer to \cite{Sh_FAN_2019}; see also \cite[eq. (4.61)]{Kr_Lax}.

In the case the equation of a spectral curve is linear in $H$-coordinates, the Darboux property follows also from the results of \cite{BT2,DT}. For the root system $D_n$ these results do not work.
\begin{proposition}[\cite{BorSh}]\label{P:h_diff1}
In the case of Hitchin systems with the structure group $GL(n)$ the differentials $\frac{{\partial R}/\partial H_j}{{\partial R}/\partial\l }\frac{dx}{y}$ ($j=1,\ldots,h$) form a base of holomorphic differentials on the spectral curve. For the systems with the structure groups $SL(2)$, $SO(2n+1)$, $Sp(2n)$ they form a base of holomorphic Prym differentials on the spectral curve with respect to the involution $\l\to -\l$, while for the systems with the structure group $SO(2n)$ they form a base of holomorphic differentials on the normalization of the spectral curve. In the case of structure group $SL(n)$ the base of holomorhic differentials on the spectral curve consists of the differentials $\frac{{\partial R}/\partial H_j}{{\partial R}/\partial\l }\frac{dx}{y}$, and of $g$ basis holomorphic differentials pulled back from the base curve.
\end{proposition}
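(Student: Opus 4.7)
The plan is to analyze the $1$-forms $\omega_j := \frac{\partial R/\partial H_j}{\partial R/\partial\lambda}\,\frac{dx}{y}$ directly on the spectral curve $\widehat{\Sigma}$ (or its normalization in the $SO(2n)$ case) and count dimensions. The starting point is that differentiating the spectral equation $R(\lambda,x,y,H)=0$ with respect to $H_j$, viewed as implicitly defining $\lambda$ as a multivalued function on $\Sigma$, gives $\omega_j = -\partial_{H_j}\lambda\cdot\widehat{\xi}$, where $\widehat{\xi}=\pi^*(dx/y)$. Thus $\omega_j$ is (up to sign) the variation of the tautological $1$-form $\lambda\widehat{\xi}$ on $\widehat{\Sigma}$ in the direction of the $j$-th Hitchin time, which reduces holomorphicity to a purely local question.

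Next I would verify regularity at the three potentially problematic loci. At ramification points of $\pi:\widehat{\Sigma}\to\Sigma$, where $\partial R/\partial\lambda$ vanishes, the fact that $\lambda$ remains a well-defined function on $\widehat{\Sigma}$ that is algebraic in the Hamiltonians forces $\partial R/\partial H_j$ to vanish to at least the same order, so $\omega_j$ stays regular. At the preimages of $\infty\in\Sigma$, where $\widehat{\xi}$ has a zero of order $2g-2$, I would expand $R$ via \refE{sp_curve_eq} and check that the pole orders of $(\partial R/\partial H_j)/(\partial R/\partial\lambda)$ are dominated by this zero. At the singular points of $\widehat{\Sigma}$ in the $SO(2n)$ case, arising from the squared Pfaffian term, I would pass to the normalization and check regularity on each branch separately.

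For the simple classical groups I would then analyze the involution $\sigma:\lambda\mapsto -\lambda$, which is well defined because $R$ is built from invariants of even degree (together with a squared Pfaffian in type $D_n$). Since $\widehat{\xi}$ is $\sigma$-invariant, the parity of $\omega_j$ under $\sigma$ is determined by the parities of $\partial R/\partial\lambda$ and $\partial R/\partial H_j$, both immediate from \refE{sp_curve_eq}. A direct case-by-case computation shows that the $\omega_j$ lie in the anti-invariant (Prym) part for $SL(2), SO(2n+1), Sp(2n)$, while for $SO(2n)$ they descend to genuine holomorphic forms on the normalization.

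Finally, to upgrade holomorphicity to the basis property I would count dimensions. The number $h$ of Hamiltonians equals $\dim\mathfrak{g}\cdot(g-1)$ (or $n^2(g-1)+1$ for $GL(n)$), and a Riemann--Hurwitz computation for $\pi$ matches $h$ with the dimension of the target cohomology: all of $H^0(\widehat{\Sigma},\Omega^1)$ for $GL(n)$, the Prym summand for $B_n, C_n$, and $H^0$ of the normalization for $D_n$. Linear independence of the $\omega_j$ reduces to linear independence of the polynomials $\partial R/\partial H_j$ in $(\lambda,x,y)$, which is obvious from the monomial structure of \refE{sp_curve_eq}. In the $SL(n)$ case the traceless condition drops $h$ by $g$ relative to $GL(n)$ while $\widehat{g}$ is unchanged, and the missing $g$-dimensional complement in $H^0(\widehat{\Sigma},\Omega^1)$ is filled exactly by pullbacks of a basis of $H^0(\Sigma,\Omega^1)$ through $\pi^*$. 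The main technical obstacle is the $SO(2n)$ case: the separating equations \refE{sep} are quadratic rather than linear in $H$, and $\widehat{\Sigma}$ is genuinely singular, so one cannot argue by the clean monomial expansion that works in the other types. A careful local model of the nodes, together with bookkeeping of the Pfaffian-type Hamiltonians appearing squared inside $R$, should confirm both the regularity of $\omega_j$ on the normalization and their linear independence.
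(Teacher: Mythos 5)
The paper does not actually prove Proposition~\ref{P:h_diff1}; it is quoted from \cite{BorSh}, so there is no internal proof to compare against. Judged on its own terms, your strategy (read $\omega_j$ as $-\partial_{H_j}\lambda\cdot\widehat{\xi}$, check holomorphy locally, sort by parity under $\lambda\mapsto-\lambda$, then match dimensions against the genus or Prym dimension using the count from the proof of Proposition~\ref{P:corr_Phi}) is the right skeleton, and your parity analysis, the $GL(n)$/$SL(n)$ dimension bookkeeping, and the linear-independence argument via the monomial structure of \refE{sp_curve_eq} are all sound.

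There are, however, two concrete problems. First, your mechanism for regularity at ramification points is wrong: $\partial R/\partial H_j$ is a fixed monomial $x^k\lambda^{d-d_j}$ (or $x^s y\,\lambda^{d-d_j}$) and does \emph{not} vanish where $\partial R/\partial\lambda$ does --- e.g.\ for the Hamiltonian multiplying $\lambda^{d-d_n}x^0$ with $d_n=d$ the numerator is identically $1$. The correct reason $\omega_j$ stays regular is that $dx$, as a differential \emph{on the spectral curve}, vanishes at ramification points: differentiating $R=0$ along the curve gives $\frac{\partial R}{\partial\lambda}\,d\lambda + (D_xR)\,dx=0$, so $\frac{dx}{\partial R/\partial\lambda}=-\frac{d\lambda}{D_xR}$ is regular wherever the curve is smooth and not both partials vanish. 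As stated, your step would fail; with this standard Poincar\'e-residue identity it is repaired. Second, the $SO(2n)$ case --- the one the proposition singles out precisely because the Pfaffian enters squared, the separating equations \refE{sep} become quadratic, and the spectral curve is genuinely singular --- is not proved but only announced (``a careful local model \dots should confirm''). Since the statement for $D_n$ is qualitatively different from the other types (differentials on the normalization rather than Prym differentials), deferring exactly this analysis leaves the most novel part of the proposition unestablished. You would need to exhibit the local equation at the singular points coming from the $\rho_n(\Phi/\xi)^2$ term, resolve, and verify both regularity of the $\omega_j$ upstairs and that their number matches the genus of the normalization.
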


\subsubsection{$\theta$-function solution}\label{SS:theta_sol}
In the coordinates $(H,\phi)$, the Hamiltonian equations read as $\dot{H}=0$, $\dot\phi=H$, hence the trajectories of the system are lines, i.e. straight windings of fibers of the Hitchin map. We address here the problem of finding out the trajectories in the coordinates $\{ (x_i,y_i,\l_i)\}$ representing $\rm{Sym}^h\Sigma$. In the case when $\g=\gl(n)$, the transform $\ga \to \phi$ given by \refE{odinar} coincides with the Abel transform up to a constant linear transformation coming from the passing from the integrands in \refE{odinar} to normalized holomorphic differentials (observe that $h=n^2(g-1)+1=genus (\widehat{\Sigma})$ in that case). This difference does not matter in this section. We will ignore it and regard to \refE{odinar} as to Abel transform. Then it is a natural idea to transfer the above straight windings to $\rm{Sym}^h\Sigma$ by means of the Jacobi inverse map. Looking ahead (\refSS{SoV_obstr}), we notice that, for other classical Lie algebras, there is a certain obstruction for such approach.

The setting of the Jacobi inversion problem is as follows: given a point $\phi$ of the Jacobian of a genus $h$ curve, to find out the points $\ga_1,\ldots,\ga_h$ of the curve such that $\mathcal{A}(\ga_1+\ldots+\ga_h)=\phi$, where $\mathcal{A}$ is the Abel transform. We refer to $\mathcal{A}^{-1}$ as to Jacobi inverse map. In principle, solution of the problem is given by the Riemann theorem: $\ga_1,\ldots,\ga_h$ are zeroes of the function $\theta(\mathcal{A}(\ga)-\phi-K)$ where $K$ is the vector of Riemann constants, provided the last function does not vanish identically.

In \cite{Dubr_theta} B.A.Dubrovin set the problem of finding out the points $\ga_1,\ldots,\ga_h$ analytically. For this purpose, he proposed, first, to take a meromorphic function $f$ on the curve. Then $\s_f=\sum_{j=1}^{h}f(\ga_j)$ is a symmetric function on the curve, hence a meromorphic function of $\phi$, that is an Abelian function. As such, it can be expressed via the Riemann theta-function. Assume, the curve is represented as a covering of $\C P^1$. Then for the functions $f_k(P)=z^k$ ($z=\pi(P)$, $\pi$ is the covering map, $k=1,\ldots,h$) the desired expressions can be given explicitly. Observe that $\s_{f_k}$ (or just $\s_k$, for short) is a Newton polynomial in $z_1,\ldots, z_h$, hence the values of $\s_1,\ldots, \s_h$ determine the set $z_1,\ldots, z_h$ up to a permutation.

Here, we realize this program for the spectral curve, and the covering map $\pi : P\to x$, $P=(x,y,\l)$. As a result, we will obtain the set $x_1,\ldots, x_h$ up to a permutation, which gives a point of $\rm{Sim}^h\widehat{\Sigma}$ locally, i.e. we will locally obtain the trajectories.

We start from the relation due to Dubrovin \cite{Dubr_theta}
\begin{equation}\label{E:Dubr4}
  \s_k(\phi)=const-\res_{x=\infty} x^kd\ln\theta(\mathcal{A}(P)-\phi-K).
\end{equation}
Since $(d\mathcal{A})_i=\w_i$ we have
\[
  d\ln\theta(\mathcal{A}(P)-\phi-K)=\sum_{i=1}^h (\partial_i\ln\theta(\mathcal{A}(P)-\phi-K))\w_i
\]
where $\w_i=\frac{{\partial R}/\partial H_i}{{\partial R}/\partial\l }\frac{dx}{y}$, $\partial_i$ means a derivative in the $i$th argument. Choosing  $x=\infty$ as a base point of the Abel transform, we can regard to $\mathcal{A}(P)$ as to a small quantity, and expand $(\ln\theta(\mathcal{A}(P)-\phi-K))_i$ to a Tailor series. Then we only have to select the terms of order $z^{2k-1}$ in thus obtained expansion, where $z$ is a local parameter in the neighborhood of $x=\infty$.
Indeed, multiplied by $x^k=z^{-2k}$ it will give the required residue in \refE{Dubr4}. Here is the final result:
\[
  \s_k(\phi)=const- \sum_{i=1}^h\sum_{|j|\le 2k-1} \varkappa_i^jD^j\partial_i\ln\theta(-\phi-K)
\]
where $j=(j_1,\ldots,j_h)$, $|j|=j_1+\ldots+j_h$,
\[
  D^j=\frac{1}{j_1!\ldots j_h!}\frac{\partial^{|j|}}{\partial\phi_1^{j_1}\ldots\partial\phi_h^{j_h}},
  \quad \varkappa_i^j=\sum_{l_i+\sum_{s=1}^h\sum_{p=1}^{j_s} l_{sp}=2k-1} \varphi_i^{(l_i)}\prod_{s=1}^h\prod_{p=1}^{j_s}\frac{\varphi_s^{(l_{sp}-1)}}{l_{sp}},
\]
$l_s$ and $\varphi_s^{(l_s)}$ are defined from the relation $\A_s(P)=\sum_{l_s\ge 1}\frac{\varphi_s^{(l_s)}}{l_s}z^{l_s}$ ($P=P(z)$).

By plugging $\phi=Ht+\phi_0$ we obtain a full set of symmetric functions of $x_1\ldots,x_h$ along trajectories.
\subsection{Obstructions in the case of simple groups}\label{SS:obstr}

\subsubsection{Obstruction for the inverse spectral method}\label{SS:ISM_obstr}
The analog of \refT{hierarchi} in the case of classical simple Lie algebra $\g$, except for $\g=\spn(2n)$, has been proven in \cite{Sh_Hierarch} (see also \cite{Sh_DGr}). As soon as the existence of the commuting hierarchy is established, the problem arises of resolving that hierarchy by means of the inverse spectral method, similar to \refSS{ISM}. However, there is an obstruction on this way, which we will demonstrate in the case when $\g$ is the Lie algebra of an orthogonal group. In this case,
\[
  L_s(z)=\frac{\b_s\a_s^T-\a_s\b_s^T}{z-z_s}+L_{0s}+\ldots .
\]
To obtain the (global) function $\psi$ on $\widehat\Sigma$ whose values are eigenvectors of $L$ at every point, we first seek for the eigenvectors either pointwise, or locally, and then apply a normalization (for instance, \refE{norm}). By \cite[Lemma 5.12]{Sh_DGr}, in a neighborhood of $\ga_s$, there is an eigen-vector-valued function $\psi_s$ of the form $\psi_s(z,\l)=\frac{a_{s,\l}}{z-z_{s,\l}}+O(1)$ where $a_{s,\l}=\nu_{s,\l}\a_s$, $\nu_{s,\l}\in\C$. Away from the branch points, $z-z_s$ can serve as a local parameter on all sheets of the covering over the neighborhood of $\ga_s$, hence we can suppress $\l$ in the notation of the Laurent expansions in $z-z_{s,\l}$. For brevity, we also suppress $\l$ in the notation for $\psi(z,\l)$ and $a_{s,\l}$. With such notations, the above expression reads as $\psi_s(z)=\frac{a_s}{z-z_s}+O(1)$. A further normalization may change the order of the function at $\ga_s$, but the coefficient $a_s$ of the leading term is preserved up to scaling. Observe, after a normalization $\psi$ will not need also the index $s$,  so that in a neighborhood of $\ga_s$ we obtain $\psi=a_s(z-z_s)^m+O((z-z_s)^{m+1})$, for some $m\in\Z$. It is an important property proven in \cite{Kr_Lax} (see \cite{Sh_DGr} for a proof in the simple case) that an eigenvalue of $L$ is holomorphic at the points $\ga_s$, as a function of $z$. For this reason, the equation $L\psi=\l\psi$ implies $(\b_s\a_s^T-\a_s\b_s^T)a_s=0$ (otherwise the orders of the left hand side and the right hand side of the equation are different), hence $\b_s(\a_s^Ta_s)-\a_s(\b_s^Ta_s)=0$. In a generic position $\a_s$ and $\b_s$ are linear independent. Hence $\a_s^Ta_s=\b_s^Ta_s=0$. Next, write down the identity $(\partial_a-M_a)\psi=f_a\psi$ in the form
\begin{equation}\label{E:Mafa}
  \left(\partial_a -\frac{\mu_s\a_s^T-\a_s\mu_s^T}{z-z_s}+M_{0s}+\ldots          \right)(a_s(z-z_s)^m+\ldots)=f_a(a_s(z-z_s)^m+\ldots)
\end{equation}
where dots denote the higher order terms. On the left hand side, the leading term is equal to $a_s\partial_a(z-z_s)^m+\frac{\a_s\mu_s^T}{z-z_s}(z-z_s)^m=(a_sm(-\partial_az_s)+ \a_s\mu_s^T a_s)(z-z_s)^{m-1}$. By \refE{Tyur_move}, $-\partial_az_s=\mu_s^T\a_s$. Since $a_s=\nu_s\a_s$, we have $a_s\mu_s^T\a_s=\a_s\mu_s^T a_s$, hence the leading term is equal to $-(m+1)(\partial_az_s) a_s(z-z_s)^{m-1}$. Then \refE{Mafa} implies $f_a=\frac{f_{-1}}{z-z_s}+O(1)$, and $f_{-1}=-(m+1)(\partial_az_s)$.
Hence
\[
  f_a(z)=\frac{f_{-1}}{z-z_s}+O(1) = (m+1)\frac{-{\partial_a z_s}}{z-z_s}+O(1) =  (m+1) \partial_a\log(z-z_s)+O(1).
\]
We obtain $e^{ -\int \limits_0^{t_a} f_a dt}=(z-z_s)^{-(m+1)}\cdot O(1)$, hence $\widehat{\psi}=e^{ -\int \limits_0^{t_a} f_a dt}\psi=\frac{a_s}{z-z_s}+O(1)$. We conclude that $\widehat{\psi}$ has poles at moving points $\ga_s$, i.e. one of main postulates of the inverse spectral method, namely, immovability of the pole divisor of the Baker--Akhieser function, can't be fulfilled. Observe that this obstruction is absent in the case when $\g=\gl(n)$ considered in \refSS{ISM}, due to the absence of the term $\a_s\mu_s^T$ in the nominator of the analog of \refE{Mafa}. The same calculation as above shows that $e^{ -\int \limits_0^{t_a} f_a dt}=(z-z_s)^{-m}\cdot O(1)$ in that case, hence $\widehat\psi$ is holomorphic at $\ga_s$ in accordance with what was said in \refSS{ISM}.

\subsubsection{Obstruction for getting $\theta$-function solutions by means of separation of variables. Integration in quadratures and numerical integration}\label{SS:SoV_obstr}

Here we present an obstruction for obtaining a theta-function solution of the Hitchin systems of types $B_n$, $C_n$, $D_n$ by technique of \refSS{theta_sol}. The relation \refE{Dubr4} is finally based on the Riemann theorem on zeroes of the Riemann $\theta$-function, in particular, on the fact that the number of zeroes is equal to the genus of the spectral curve. In the case $\g=\gl(n)$ the number $h$ of the points $\ga_1,\ldots,\ga_h$ giving the spectral curve is the same (it also can be characterized as the number of degrees of freedom of the system, the number of independent Hamiltonians, the half dimension of the phase space, and so on). The set of zeroes gives the corresponding point of the phase space in that case. However, in the case of a simple structure group, the number $h$ of the points giving the spectral curve is equal to the dimension of its Prym variety, while the number of zeroes of the Prym $\theta$-function is twice that dimension \cite[Corollary 5.6]{Fay}.

Observe that the above obstruction does not interfere with obtaining a solution in quadratures. Let $\A$ stay for the Abel--Prym map. Its Jacobi matrix is equal to $J=(\frac{\partial\phi_i}{\partial x_k})$. By \refE{odinar}, $\frac{\partial\phi_i}{\partial x_k}=-\frac{{\partial R}/\partial H_i}{{\partial R}/\partial\l }\frac{1}{y}\big|_{x=x_k, y=y_k,\l=\l_k}$. The last expression is explicitly given in the coordinates $(x_k, y_k, \l_k)$, hence we can regard to $J^{-1}$ as a matrix explicitly given in those coordinates. Let $X(t)$ denote a point of the Hitchin trajectory in the same coordinates. Then $X(t)=\A^{-1}(\phi(t))$ where $\phi(t)=\phi_0+Ht$ is the image of that trajectory on the fiber. Then $X'(t)=J^{-1}(\phi(t))H$ which implies
\[
      X(t)=X_0+\left(\int_0^t J^{-1}(\phi(t))dt\right)H.
\]
The last is nothing but a solution to the Hitchin system in quadratures, corresponding to the Hamiltonian $H$, and the initial condition $X_0$. The equation $X'(t)=J^{-1}(\phi(t))H$ admits also a numerical solution. The one with the mesh width $\Delta t$ can be constructed as follows: $X(0)=X_0$, $X(\Delta t)=J^{-1}(X_0)\Delta t$, $X(2\Delta t)=J^{-1}(X(\Delta t))\Delta t$, and so on. Observe that we only use the expression for $J$ in the coordinates $(x,y,\l)$ in the process.


\bibliographystyle{amsalpha}

\end{document}